\def\prob#1#2#3{\goodbreak\begin{list}{}{\labelwidth\z@ \itemindent-\leftmargin
                        \itemsep\z@  \topsep6\p@\@plus6\p@
                        \let\makelabel\descriptionlabel}\cG
                \item[\it Name]#1
               \item[\it Instance]                #2
                \item[\it Output]#3
                \end{list}}
\newtheorem{theorem}{Theorem}
\newtheorem{lemma}[theorem]{Lemma}
\newtheorem{corollary}[theorem]{Corollary}
\theoremstyle{definition}
\newtheorem{definition}[theorem]{Definition}
\newtheorem{observation}[theorem]{Observation}
\newtheorem{example}{Example}
\newcommand{\Tmix}{T_{\mathrm{mix}}}
\newcommand{\eps}{\epsilon}
\def\E{\mathbb{E}}
\def\Z{\mathbb{Z}}
\def\eps{\varepsilon}
\def\gam{\gamma}
\def\cC{\mathcal{C}}
\def\cP{\mathcal{P}}
\def\cA{\mathcal{A}}
\def\cG{\mathcal {G}}
\def\1{\mathbf{1}}
\def\lam {\lambda}
\def\PD{\mathcal{M}'}
\def\rGD{\mathcal{M}}
\title {Fast algorithms at low temperatures via Markov chains\thanks{These results were announced in preliminary form (without proofs) as a brief abstract in the proceedings of APPROX/RANDOM 2019}}
\author{
Zongchen Chen\thanks{School of Computer Science, Georgia Institute of Technology.  
Research supported in part by NSF grants CCF-1617306 and CCF-1563838.}
 \and 
Andreas Galanis\thanks{The research leading to these results has received funding from the European Research Council under
  the European Union's Seventh Framework Programme (FP7/2007-2013) ERC grant agreement no.\ 334828. The paper
  reflects only the authors' views and not the views of the ERC or the European Commission.
  The European Union is not liable for any use that may be made of the information contained therein.
  Authors' address: Department of Computer Science, University of Oxford, Wolfson Building, Parks Road, Oxford, OX1~3QD, UK.}
  \and
  Leslie Ann Goldberg$^\dag$
  \and Will Perkins\thanks{Department of Mathematics, Statistics, and Computer Science, University of Illinois at Chicago.  Supported in part by NSF grants DMS-1847451 and CCF-1934915. Part of this work was done while WP was visiting the Simons Institute for the Theory of Computing.}
  \and James Stewart$^\dag$ \and 
  Eric Vigoda$^*$}
\date{April 13, 2021}
\def\calI{\mathcal{I}}
\newcommand{\Vpart}[1]{V^{#1}}
 \newcommand{\ZhcG}{Z_{G, \lambda}}
 \newcommand{\muhc}{\mu_\lambda}
\newcommand{\muhcG}{\mu_{G, \lambda}}
 \newcommand{\Cpart}[1]{\cC^{#1}}
 \newcommand{\wpart}[1]{w^{#1}}
 \newcommand{\opposite}{1-}
 \newcommand{\ceil}[1]{\left\lceil #1 \right\rceil}
\let \epsilon=\varepsilon
\def\calM{\mathcal{M}}
\def\PSC{\tau}
 \def\E{\mathbb{E}}
 \def\Z{\mathbb{Z}}
 \def\eps{\varepsilon}
 \def\gam{\gamma}
 \def\cC{\mathcal {C}}
 \def\cG{\mathcal {G}}
\newcommand{\cCtrunc}[1]{\mathcal{C}_{#1}}
\def\e{e}
\def\cSexact{\mathcal{S}_{\mathrm{exact}}}
 \def\e{e}
 \def\N{\mathbb{N}}
 \def\lam {\lambda}
\begin{document}

\maketitle

\begin{abstract} 
Efficient algorithms for approximate counting and sampling in spin systems typically apply in the so-called high-temperature regime, where the interaction between neighboring spins is ``weak''.   Instead, recent work of Jenssen, Keevash and Perkins yields polynomial-time algorithms in the low-temperature regime on bounded-degree (bipartite) expander graphs using polymer models and the cluster expansion.

In order to speed up these algorithms (so the exponent in the run time does not depend on the degree bound) we present a Markov chain for polymer models and show that it is rapidly mixing under exponential decay of polymer weights. This yields, for example, an $O(n \log n)$-time sampling algorithm for the low-temperature ferromagnetic Potts model on bounded-degree expander graphs. Combining our results for the hard-core and Potts models with Markov chain comparison tools, we obtain polynomial mixing time for Glauber dynamics restricted to appropriate portions of the state space.
\end{abstract}

\section{Introduction}
\label{secIntro}

The hard-core model from statistical physics is defined on the set of independent sets of a graph $G$, where the
independent sets are weighted by a fugacity $\lambda>0$.  The associated Gibbs distribution $\mu_{G,\lambda}$ is 
defined as follows, for an independent set $I$:
\begin{align}
\muhcG(I) &= \frac{\lam^{|I|}}{ \ZhcG} 
\end{align}
where 
$\ZhcG = \sum_{I\in \calI(G)} \lam ^{|I|}$ is the hard-core partition function (also called the independence polynomial), $\calI(G)$ is the set of independent sets of~$G$,   and $\lam>0$ is the \textit{fugacity}.

In applications, there are two important computational tasks associated to a spin model such as the hard-core model.  Given an error parameter 
$\eps\in (0,1)$, 
an $\eps$-approximate counting algorithm  outputs a number $\hat Z$ so that $ e^{-\eps} \ZhcG  \le \hat Z \le e^{\eps} \ZhcG$, and an $\eps$-approximate sampling algorithm outputs a random sample $I$ with distribution $\hat \mu$ so that the total variation distance
satisfies $\| \muhc - \hat \mu \|_{TV} <\eps$.

While classical statistical physics is most interested in studying the hard-core model on the integer lattice  $\Z^d$, the perspective of computer science is to consider wider families of graphs, such as the set of all graphs, all graphs of maximum degree $\Delta$, or all bipartite graphs of maximum degree $\Delta$.

Almost all proven efficient algorithms for approximate counting and sampling from the hard-core model work for low fugacities (the weak interaction regime, akin to the low temperature regime of the Potts model).  In the high temperature regime there are at least three distinct algorithmic approaches to approximate counting and sampling:  Markov chains, correlation decay, and polynomial interpolation.  One striking advantage of the Markov chain approach is that the algorithms are much faster and simpler than the algorithms from the other approaches.  In particular, it is common for a Markov chain sampling algorithm to run in time $O(n \log n)$, e.g., see~\cite{DGIndSet,EHSVY}, while typical running times for algorithms based on correlation decay~\cite{Weitz,LiuLu} and polynomial interpolation~\cite{Barvinokbook} are $n^{O(\log \Delta)}$ where $\Delta$ is the maximum degree of the graph.  

In general there are no known efficient algorithms at low temperatures (high fugacities), but recently efficient algorithms have been developed for some special classes of graphs including subsets of $\Z^d$~\cite{HPR}, random regular bipartite graphs, and bipartite expander graphs in general~\cite{JKP,liao2019counting}.  What these bipartite graphs have in common is that for large enough~$\lam$, typical independent sets drawn from $\muhcG$ align closely with one side or the other of the bipartition (the two ground states).  This phenomenon is related to the phase transition phenomenon in infinite graphs, and implies the exponentially slow mixing time of local Markov chains~\cite{borgs1999torpid,GT,mossel2009hardness}.     The algorithms introduced in~\cite{HPR} exploit this phenomenon  by expressing the partition function $\ZhcG$ in terms of deviations from the two ground states, and then using a truncation of a convergent series expansion (the Taylor series or the cluster expansion) to approximate the log partition function.  In statistical physics this is called a \textit{perturbative} approach, and while in general it does not work in the largest possible range of parameter space, when it does work it gives a very detailed probabilistic understanding of the model~\cite{PS,borgs1989unified,dobrushin1996estimates}.

To apply the so-called perturbative approach at low temperatures, one rewrites the original spin model as a new model in which single spin interactions are replaced by the interaction of connected components representing deviations from a chosen ground state.  Such models are known in general as \textit{abstract polymer models} \cite{KP}, see Section~\ref{sec:absolmod} for the polymer models we consider here, and have long been used in statistical physics to understand phase transitions.  In this paper, we  show that  once a low-temperature spin model has been transformed into a polymer model, Markov chains once again become an effective algorithmic tool.  Using this approach we obtain nearly linear and quadratic time sampling algorithms for low temperature models on expander graphs in cases where  only $n^{O(\log \Delta)}$-time algorithms were previously known.

\subsection{Subset polymer models}\label{sec:absolmod}

Abstract polymer models, as defined by Koteck\'{y} and Preiss \cite{KP},  are an important tool in studying the equilibrium phases of statistical physics models on lattices, see, e.g.,~\cite{laanait1991interfaces,borgs1989unified} among many others.\footnote{See also the relevant notion of `animal models' by Dobrushin~\cite{dobrushin1996estimates}.} The paper~\cite{borgs2006absence} has a more detailed history of their use in statistical physics and combinatorics. Recently, polymer models have been used to develop efficient deterministic algorithms for sampling and approximating the partition functions of statistical physics models on lattices~\cite{HPR} and expander graphs~\cite{JKP,liao2019counting} at low temperatures, the regime in which Markov chains like the Glauber dynamics are known to mix slowly.

We will study the following class of abstract polymer models, known as subset polymer models (defined by Gruber and Kunz~\cite{gruber1971general}). We begin by describing the relevant polymers: start with a finite host graph $G$ and a set $[q] = \{0,\ldots,q-1\}$ of spins. For each vertex~$v$, 
there is a ground-state spin $g_v$.  A polymer $\gamma$ consists of a 
connected set 
of vertices together with an assignment $\sigma_{\gamma}$ 
of  spins from $\{0,\ldots,q-1\}\setminus g_v$ to each vertex $v \in \gamma$ (we abuse notation and use $\gamma$ to denote both the polymer and the associated set of vertices).  The size of a polymer, $|\gamma|$, is the number of vertices in $ \gamma$. 
The set of all polymers is $\cP(G)$.

A polymer model on $G$ consists of a set $\cC(G)\subseteq \cP(G)$ of `allowed' polymers, and a non-negative weight $w_\gamma$ for each polymer $\gamma \in \cC(G)$. We denote this model by $(\cC(G), w)$. Two polymers $\gamma$ and $\gamma'$ are called `compatible' (written $\gamma \sim \gamma'$) 
if their distance in the host graph is at least~$2$; otherwise they are `incompatible' (written $\gamma \nsim \gamma'$).  The state space of allowable configurations is
$\Omega = \{ \Gamma \subseteq \cC(G) \mid \forall \gamma ,\gamma' \in \Gamma, \gamma \sim \gamma'\}$.

The partition function of the polymer model is 
\begin{align*}
Z(G) &= \sum_{\Gamma \in \Omega} \prod_{\gamma \in \Gamma} w_{\gamma},
\end{align*}
where the empty set of polymers contributes $1$ to the partition function.  The Gibbs measure $\mu_G$ is the probability distribution on $\Omega$ given by
\begin{align*}
\mu_G(\Gamma) &= \frac{\prod_{\gamma \in \Gamma} w_{\gamma}}{Z(G)} \,.
\end{align*}

Note that the polymer model is in fact a hard-core model on the `incompatibility graph' of $\cC(G)$, where two polymers are joined by an edge if they are  incompatible, with non-uniform fugacities given by the weights $w_\gamma$.  The geometry inherited from the host graph $G$ and the sizes of the polymers adds additional structure to the model. 

\begin{example}
\label{egHClow}
One instance of a polymer model is the hard-core model itself: polymers are single vertices of the graph $G$, labeled with `1' (for occupied) against a ground state `0' (for unoccupied).  Each polymer (vertex) $v$ comes with the weight function $w_v = \lam$.  Then the set of allowable polymer configurations is exactly the set of independent sets of $G$, and so the polymer model partition function is exactly the partition function of the hard-core model on~$G$. 
\end{example}

\begin{example}
\label{egPotts}
A second instance of a polymer model is related to the ferromagnetic $q$-color Potts model on a graph $G$ (see Definition~\ref{defPotts} below).  Fix a color $g \in [q]$ to be the ground state color, and define polymers to be connected subgraphs of $G$ of size at most $M$, with vertices labeled by the remaining colors $[q] \setminus \{ g\}$.  A polymer $\gamma$ has weight function $w_{\gamma} = e^{-\beta B(\gamma) }$ where $B(\gamma)$ is the number of bichromatic edges in $ \gamma$ plus the size of the edge boundary of $ \gamma$ in $G$.  A configuration of compatible polymers maps to a unique Potts configuration $\sigma$ in which all connected components of non-$g$-colored vertices have size at most $M$, and the weight of $\sigma$ in the Potts model is exactly the product of the weight functions of the polymers.  The polymer model partition function $Z(G)$, with an appropriate choice of $M$,  represents the contribution to the Potts model partition function of colorings where color $g$ `dominates', see also Section~\ref{sec:firstcomparison} for more details.   
\end{example}

As with the hard-core model, there are two main computational problems associated to a polymer model: approximate sampling from $\mu_G$ and approximate counting of $Z(G)$. We will approach them both via Markov chain algorithms.  In general we will be interested in families of polymer models defined on classes of graphs.  We denote such a family $(\cC(\cdot), w,  \cG)$, where for each graph $G \in  \cG$, $(\cC(G), w)$ is a polymer model.  We will always use $n$ to denote the number of vertices of a graph $G$.  

 We consider two conditions on the weight functions $w_{\gamma}$ and give their algorithmic consequences.

\begin{definition}
A polymer model $(\cC(\cdot), w, \cG)$ satisfies the \textit{polymer mixing condition} if there exists $\theta \in (0,1)$ such that
\begin{align}
\label{eqMixCondition}
\sum_{\gamma' \nsim \gamma} |\gamma'| w_{\gamma'} \le \theta |\gamma| 
\end{align}
 for all $G \in \cG$ and all $\gamma \in \cC(G)$.
\end{definition}

We postpone the formal definition of mixing time to Section~\ref{secpolyMCsec} and state our first main result here. 

\begin{restatable}{theorem}{thmPolyMix}
\label{thmPolyMix}
Suppose that a polymer model $(\cC(\cdot), w, \cG)$ satisfies the polymer mixing condition~\eqref{eqMixCondition}. Then for each $G \in \cG$ there is a Markov chain making single polymer updates with stationary distribution $\mu_G$ and mixing time 
 $\Tmix(\eps)= O(n \log (n/\epsilon))$.  
\end{restatable}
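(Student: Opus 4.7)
The plan is to introduce a Glauber-type, single-polymer-update chain on the space $\Omega$ of compatible polymer configurations and analyze it via the path coupling method of Bubley and Dyer. At each step the chain picks a vertex $v\in V(G)$ uniformly at random and performs a heat-bath-style update of the polymer covering $v$: it first removes the polymer (if any) of the current configuration $\Gamma$ containing $v$, and then samples a new polymer $\gamma\ni v$ (or the empty choice) from the set of polymers in $\cC(G)$ that contain $v$ and are compatible with the resulting configuration, with probability proportional to $w_\gamma/|\gamma|$. The factor $1/|\gamma|$ compensates for the fact that a polymer of size $|\gamma|$ can be reached from any of its $|\gamma|$ vertices, and summing the resulting rates over these vertices yields detailed balance with respect to $\mu_G$.

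For the mixing time I would use path coupling with the size-weighted metric $d(\Gamma,\Gamma')=\sum_{\gamma\in\Gamma\triangle\Gamma'}|\gamma|$, so that adjacent configurations in the metric graph are pairs $(\Gamma,\Gamma\cup\{\gamma\})$ at distance $|\gamma|$, and $d\le n$ on $\Omega$. I would couple two copies of the chain using the same choice of $v$ together with a maximal coupling of the subsequent heat-bath. Three cases arise: (i) if $v\in\gamma$, both chains see the same conditional environment $\Gamma$ (any polymer of $\Gamma$ containing $v$ would be incompatible with $\gamma$, contradicting $\Gamma\cup\{\gamma\}\in\Omega$), so they couple exactly and the distance drops by $|\gamma|$; (ii) if the supports of the two heat-baths at $v$ coincide, the updates couple exactly; (iii) otherwise, a new disagreement can be introduced only when the heat-bath in one copy inserts a polymer $\gamma'\nsim\gamma$ at $v$.

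For each polymer $\gamma'\nsim\gamma$, the contribution of the disagreement event to the expected change in $d$, summed over $v\in\gamma'$, is bounded by $w_{\gamma'}|\gamma'|/n$, using the $1/|\gamma'|$ weighting of the heat-bath proposal and the fact that the normalizing constants are at least~$1$. Summing over $\gamma'\nsim\gamma$ and combining with the decrease from case (i) gives
\[
\mathbb{E}\bigl[d(\Gamma_{t+1},\Gamma'_{t+1})-d(\Gamma_t,\Gamma'_t)\bigr]\;\le\;-\frac{|\gamma|^2}{n}+\frac{1}{n}\sum_{\gamma'\nsim\gamma}|\gamma'|\,w_{\gamma'}\;\le\;-\frac{(1-\theta)\,|\gamma|}{n},
\]
by the polymer mixing condition~\eqref{eqMixCondition} together with the bound $|\gamma|\ge 1$. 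This is a per-step contraction of $1-(1-\theta)/n$ in $d$, and the standard path coupling theorem with $d\le n$ then yields $\Tmix(\varepsilon)=O\!\left(\tfrac{n}{1-\theta}\log\tfrac{n}{\varepsilon}\right)$, which is $O(n\log(n/\varepsilon))$ for any fixed $\theta<1$.

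The main obstacle is the precise construction of the coupling in case (iii): because the two chains at $v$ see slightly different "forbidden" polymer sets, their heat-bath normalizations differ, so one must arrange the maximal coupling carefully so that the probability of introducing a disagreement polymer $\gamma'\nsim\gamma$ is bounded by $w_{\gamma'}/(|\gamma'|\,Z_v)$ with $Z_v\ge 1$, which is exactly the form needed for the polymer mixing condition to plug into the path coupling computation and yield the claimed $(1-\theta)/n$ contraction.
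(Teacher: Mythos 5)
Your overall strategy (a single-vertex polymer update chain analyzed by path coupling) is the same as the paper's, but there are two genuine problems with the execution. First, the chain you define does not have stationary distribution $\mu_G$. The heat-bath at $v$ that reweights each candidate polymer by $w_\gamma/|\gamma|$ (with weight $1$ for the empty choice) is reversible with respect to the measure proportional to $\prod_{\gamma\in\Gamma} w_\gamma/|\gamma|$, not $\mu_G$: the transition $\Gamma\to\Gamma\cup\{\gamma\}$ has probability $\sum_{v\in\gamma}\tfrac{1}{n}\cdot\tfrac{w_\gamma/|\gamma|}{Z_v}$ while the reverse move $\Gamma\cup\{\gamma\}\to\Gamma$ (pick some $v\in\gamma$, resample the empty outcome) has probability $\sum_{v\in\gamma}\tfrac{1}{n}\cdot\tfrac{1}{Z_v}$ with the \emph{same} normalizer $Z_v=Z_v(\Gamma)$, so the ratio is $w_\gamma/|\gamma|$ rather than $w_\gamma$. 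Your justification for the $1/|\gamma|$ factor overlooks that the removal is also reachable from all $|\gamma|$ vertices of $\gamma$, so the multiplicities cancel and no compensation is needed: the plain heat-bath with weights $w_\gamma$ is the one satisfying detailed balance with respect to $\mu_G$. Since the theorem explicitly asks for stationary distribution $\mu_G$, this must be fixed, and fixing it disturbs your numerics: without the $1/|\gamma'|$ factor, the bad-event contribution in your size-weighted metric becomes $\tfrac{1}{n}\sum_{\gamma'\nsim\gamma}|\gamma'|^2 w_{\gamma'}$, which condition~\eqref{eqMixCondition} does not control.

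Second, the step you yourself flag as the main obstacle is a real gap for a heat-bath chain coupled via a maximal coupling of the two conditional distributions. When the two copies disagree at the update, the copy with the smaller compatible set does not simply ``do nothing'': conditioned on disagreement it draws from its excess mass, which is spread over polymers compatible with both copies, so the two chains may insert \emph{different} polymers $\gamma'$ and $\gamma''$ simultaneously and the distance can increase by $|\gamma'|+|\gamma''|$ (two polymers in the unweighted metric), not by $|\gamma'|$ alone. Charging only $|\gamma'|$ is therefore not justified; with a maximal coupling the naive accounting would require $\theta<1/2$, or an extra argument to control the second inserted polymer. The paper circumvents both difficulties at once by a different chain design: pick $v$, and with probability $1/2$ delete the polymer at $v$, with probability $1/2$ propose $\boldsymbol{\gamma}$ from the \emph{configuration-independent} distribution $\nu_v$ (with $\nu_v(\gamma)=w_\gamma$) and add it only if the result lies in $\Omega$. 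Detailed balance with respect to $\mu_G$ is then immediate, the coupled chains always propose the same polymer, the only bad event is ``one copy accepts, the other rejects'' (probability $\tfrac{|\gamma'|w_{\gamma'}}{2n}$, distance increase $1$ in the metric $|\Gamma\,\triangle\,\Gamma'|$), and \eqref{eqMixCondition} gives exactly the contraction $1-\tfrac{1-\theta}{2n}$ needed for the $O(n\log(n/\eps))$ bound. If you replace your heat-bath with such a propose-then-accept step (or, equivalently, couple proposals rather than outcomes), your argument goes through essentially as you sketched it.
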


Theorem~\ref{thmPolyMix}  on its own does not guarantee an efficient 
algorithm for sampling from $\mu_G$ because 
the Markov chain only yields an efficient sampling algorithm
if we can implement each step efficiently.  We will show that under a stronger condition we can do this. 

\begin{definition}
A polymer model $(\cC(\cdot), w, \cG)$  is said to be \emph{computationally feasible} if, for each $G \in \cG$ and each $\gamma \in \cP(G)$, we can determine, in time polynomial in $|\gamma|$, whether $\gamma \in \cC(G)$, and compute $w_\gamma$ if it is.
\end{definition}

\begin{definition}
A polymer model $(\cC(\cdot), w, \cG)$ 
with $q$ spins on a class $\cG$ of graphs of maximum degree $\Delta$ satisfies the \textit{polymer sampling condition} with constant  $\PSC \ge 5 + 3 \log ((q-1)\Delta)$ if 
\begin{align}
\label{eqSampleCondition}
w_{\gamma} &\le e^{- \PSC | \gamma |}.
\end{align}
for all $G \in \cG$ and all $\gamma \in \cC(G)$.
\end{definition}

We have the following theorem.
 
\begin{restatable}{theorem}{thmPolySample}
	\label{thmPolySample}
	 
If a computationally feasible polymer model $(\cC(\cdot), w, \cG)$ satisfies the polymer sampling condition~\eqref{eqSampleCondition} then for all $G \in \cG$ there is an $\eps$-approximate sampling algorithm for $\mu_G$ with running time $O(n \log (n/\eps) \log(1/\epsilon))$.  	
\end{restatable}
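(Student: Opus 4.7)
The plan is to reduce Theorem~\ref{thmPolySample} to Theorem~\ref{thmPolyMix} in two stages: first verify that the sampling condition~\eqref{eqSampleCondition} implies the mixing condition~\eqref{eqMixCondition}, so that the chain of Theorem~\ref{thmPolyMix} exists and mixes in $O(n\log(n/\eps))$ steps; then exhibit an efficient per-step implementation using the computational feasibility assumption together with a truncation argument that controls both the cost and the total approximation error.

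For the first stage, fix $\gamma \in \cC(G)$ and note that every $\gamma' \nsim \gamma$ meets the closed neighborhood $N[\gamma]$, whose cardinality is at most $(\Delta+1)|\gamma|$. By a standard counting bound for connected subgraphs of a graph of maximum degree $\Delta$ (with an extra $(q-1)^k$ factor for the spin assignments), the number of polymers of size $k$ containing a specified vertex is at most $\bigl((q-1)e\Delta\bigr)^k$. Combining this with $w_{\gamma'} \le e^{-\PSC|\gamma'|}$ gives
\[
\sum_{\gamma'\nsim \gamma} |\gamma'|\, w_{\gamma'}
\;\le\; (\Delta+1)\,|\gamma|\sum_{k\ge 1} k\,\bigl((q-1)e\Delta\cdot e^{-\PSC}\bigr)^{k}.
\]
The hypothesis $\PSC \ge 5 + 3\log((q-1)\Delta)$ yields $(q-1)e\Delta\cdot e^{-\PSC}\le e^{-4}((q-1)\Delta)^{-2}$, which is small enough that the series converges and the entire right-hand side is bounded by $\theta|\gamma|$ for some $\theta\in(0,1)$. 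This verifies \eqref{eqMixCondition}, and Theorem~\ref{thmPolyMix} then yields a single-polymer-update chain on $\Omega$ with stationary distribution $\mu_G$ and mixing time $O(n\log(n/\eps))$. Starting from the empty configuration (always valid), running the chain for this many steps produces a sample within total variation distance $\eps/2$ of $\mu_G$.

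The main obstacle is implementing each chain step within a time budget consistent with the advertised running time. The strategy is to truncate the model by setting $w_\gamma := 0$ for polymers of size exceeding $L = \Theta(\log(n/\eps))$. Using the counting and weight bounds of the first stage, the total $\mu_G$-mass of configurations containing at least one polymer of size $>L$ is at most $n$ times a geometric tail of ratio $\le e^{-4}((q-1)\Delta)^{-2}$, which is $\le \eps/2$ for an appropriate constant in $L$. Thus the truncated Gibbs measure is within total variation distance $\eps/2$ of $\mu_G$; the truncated model clearly still satisfies \eqref{eqMixCondition}; and it suffices to simulate the chain on the truncated model. Each step proposes a polymer from an easily-sampled distribution (for instance, a uniformly chosen vertex together with a polymer of size $\le L$ in its $L$-neighborhood) and accepts or rejects based on compatibility with the current configuration and on $w_\gamma$. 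By the computational feasibility hypothesis, membership in $\cC(G)$ and the value of $w_\gamma$ can be computed in time polynomial in $|\gamma|\le L$, and the candidate polymers containing a given vertex live in a ball of radius $L$ that can be explored in time depending only on $L$, $q$, and $\Delta$. Combining the $O(n\log(n/\eps))$ mixing time with this per-step cost, together with a triangle inequality bringing together the truncation error and the mixing error, yields the stated $\eps$-approximate sampler.
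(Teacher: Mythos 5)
Your first stage is fine: the direct calculation that the polymer sampling condition implies the polymer mixing condition is essentially the paper's argument (the paper routes it through the Koteck\'y--Preiss condition, but the counting via Lemma~\ref{lem:BCKL} and the geometric series are the same), and invoking Theorem~\ref{thmPolyMix} from the empty configuration is exactly what the paper does.

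The gap is in the second stage: your per-step implementation does not deliver the claimed $O(n\log(n/\eps))$ total running time. Truncating at $L=\Theta(\log(n/\eps))$ and, at every step, exploring the radius-$L$ ball around $v$ or enumerating/sampling among the polymers of size at most $L$ containing $v$ costs on the order of $((q-1)e\Delta)^{L}=(n/\eps)^{\Theta(\log((q-1)\Delta))}$ per step --- ``time depending only on $L$, $q$, and $\Delta$'' is not constant, since $L$ grows with $n$ and the cost is exponential in $L$. Multiplying by the $\Theta(n\log(n/\eps))$ steps gives a bound of the very $n^{O(\log\Delta)}$ type the theorem is designed to beat. (A secondary issue: an unspecified accept/reject on top of a proposal distribution changes the transition matrix, so you would either have to show the one-step update still realizes exactly $\nu_v$ or redo the path-coupling analysis for the modified chain.) The missing idea is the paper's \textbf{single polymer sampler}: at each step draw a random cutoff $\mathbf{k}$ with $\Pr[\mathbf{k}\ge k]=e^{-rk}$ for $r=\PSC-2-\log((q-1)\Delta)$, enumerate only the polymers in $\cA_{\mathbf{k}}(v)$ (using the Patel--Regts enumeration, Lemma~\ref{lem:numsubgraphs}), and output $\gamma$ with probability $w_\gamma e^{r|\gamma|}$, else $\emptyset$. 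Because $\Pr[\mathbf{k}\ge|\gamma|]\cdot w_\gamma e^{r|\gamma|}=w_\gamma$, the output is \emph{exactly} $\nu_v$ (so no truncation error in the stationary distribution and Theorem~\ref{thmPolyMix} applies verbatim), and because the enumeration cost $\approx(e\Delta)^{2k}$ is paid only with probability $e^{-rk}$, the expected per-step cost is $O(1)$. The worst-case guarantee then comes not from truncating the model but from truncating the \emph{algorithm}: abort after $Cn\log(n/\eps)$ total work and absorb the (Chernoff-small) abort probability into the $\eps/2$ TV budget. Without some device of this kind, your scheme either runs too slowly or samples from the wrong one-step distribution.
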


Note that the polymer sampling condition required by Theorem~\ref{thmPolySample} is in general more demanding than the zero-freeness required by cluster-expansion algorithms, but, as Theorem~\ref{thmPolySample} demonstrates, it leads to faster algorithms. Finally, we can use the sampling algorithm and simulated annealing to give a fully polynomial time randomized approximation scheme (FPRAS) for computing the partition function of polymer models.

\begin{restatable}{theorem}{thmPolyCount}
\label{thmPolyCount}
If a computationally feasible polymer model $(\cC(\cdot), w, \cG)$ satisfies the polymer sampling condition~\eqref{eqSampleCondition} then for all $G \in \cG$ there is a randomized $\eps$-approximate counting algorithm for $Z(G)$ with running time $O((n/\eps)^2 \log^3 (n/\eps))$ and success probability at least $3/4$.
\end{restatable}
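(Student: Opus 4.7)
My plan is to reduce the counting problem to repeated sampling from $\mu_G$ via simulated annealing along a family of rescaled polymer partition functions. Introduce scaled weights $w_\gamma(\beta) := \beta w_\gamma$ for $\beta \in (0,1]$, with partition function $Z(\beta) = \sum_{\Gamma \in \Omega} \prod_{\gamma \in \Gamma} \beta w_\gamma$, so that $Z(1) = Z(G)$. Because $\beta w_\gamma \le w_\gamma \le e^{-\tau |\gamma|}$, the scaled weights inherit the polymer sampling condition~\eqref{eqSampleCondition}, and Theorem~\ref{thmPolySample} therefore yields an approximate sampler from $\mu_\beta$ in time $O(n\log(n/\eps))$ for every $\beta \in (0,1]$.

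Fix a cooling schedule $\beta_0 < \beta_1 < \dots < \beta_\ell = 1$ and use the telescoping identity
\[
\log Z(G) = \log Z(\beta_0) + \sum_{i=1}^{\ell} \log \frac{Z(\beta_i)}{Z(\beta_{i-1})},
\qquad
\frac{Z(\beta_i)}{Z(\beta_{i-1})} = \E_{\Gamma \sim \mu_{\beta_{i-1}}}\!\bigl[(\beta_i/\beta_{i-1})^{|\Gamma|}\bigr],
\]
where $|\Gamma|$ denotes the number of polymers in the configuration (at most $n$). Taking $\beta_0 = \Theta(\eps/n)$ and combining~\eqref{eqSampleCondition} with the standard bound $((q-1)e\Delta)^k$ on the number of polymers of size $k$ rooted at a fixed vertex, one obtains $\sum_\gamma w_\gamma = O(n)$ and hence $|\log Z(\beta_0)| = O(\eps)$; this initial term is absorbed into the additive error budget. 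Each ratio $R_i$ is then estimated by the empirical mean $\hat R_i$ of $N$ independent draws from $\mu_{\beta_{i-1}}$ obtained via Theorem~\ref{thmPolySample} at accuracy $\eps/\mathrm{poly}(n)$ (which only adds logarithmic factors to the per-sample cost), and $\sum_i \log \hat R_i$ serves as the estimator for $\log Z(G)$.

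Choosing the schedule so that consecutive ratios satisfy $(\beta_i/\beta_{i-1})^n = O(1)$ (for example $\beta_i/\beta_{i-1} = 1 + c/n$) places each $(\beta_i/\beta_{i-1})^{|\Gamma|}$ in a constant-sized interval bounded below by $1$, so each $\log \hat R_i$ has variance $O(1/N)$; by independence the total variance is $O(\ell/N)$, and Chebyshev's inequality supplies the $3/4$ confidence. Balancing $\ell$, $N$, and the per-sample cost in the style of the Štefankovič--Vempala--Vigoda annealing framework~\cite{SVV2009annealing} gives the advertised running time $O((n/\eps)^2 \log^2(n/\eps))$. The main obstacle is precisely this balancing: a coarse schedule makes the ratio estimator's variance grow exponentially in $n$, while a very fine schedule compounds Chebyshev error across many steps. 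The strong exponential decay of polymer weights enforced by~\eqref{eqSampleCondition} is what ultimately lets one keep the variance of each log-ratio under control throughout the sweep and thereby close the gap between the two regimes.
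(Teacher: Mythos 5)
Your overall strategy (anneal along a one-parameter family of polymer models that still satisfy the sampling condition, telescope the partition function, and estimate ratios by sampling via Theorem~\ref{thmPolySample}) is the same as the paper's, and your choice of family $\beta w_\gamma$ versus the paper's $w_\gamma e^{-\rho|\gamma|}$ is an inessential difference. The genuine gap is in the variance/sample-count accounting at the end, and it is exactly the step you wave at with ``balancing $\ell$, $N$, and the per-sample cost.'' With your scheme each of the $\ell$ ratios is estimated by its own empirical mean over $N$ fresh samples, and your own Chebyshev analysis gives total variance $O(\ell/N)$ for $\sum_i \log \hat R_i$; to make this $O(\eps^2)$ you need $N=\Omega(\ell/\eps^2)$, hence $\ell N = \Omega(\ell^2/\eps^2)$ samples in total. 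Since your nonadaptive schedule $\beta_i/\beta_{i-1}=1+c/n$ running from $\beta_0=\Theta(\eps/n)$ to $1$ forces $\ell=\Theta(n\log(n/\eps))$, and each sample costs $\Theta(n\log(n/\eps))$, the total time is $\tilde\Theta(n^3/\eps^2)$ --- a factor of order $n$ worse than the claimed $O((n/\eps)^2\log^2(n/\eps))$. The exponential decay of polymer weights does not rescue this: the bottleneck is not the per-level variance (which is indeed $O(1)$ per draw) but the multiplicity $\ell\cdot N$ of draws forced by estimating each level independently. Invoking the adaptive SVV schedule would change the argument substantially and is not what you wrote down.

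The missing idea, which is how the paper gets the stated bound, is to estimate the \emph{whole telescoping product} with a single sample per temperature: set $W^{(j)}=\prod_{i} W_i^{(j)}$ where each $W_i^{(j)}$ uses one draw from the $i$-th measure, and average only $m=O(\eps^{-2})$ independent copies of $W^{(j)}$. The point is that the relative second moment of the product telescopes: in the paper's parametrization $\E[W_i^2]=Z(\rho_{i+2})/Z(\rho_i)$, so $\E[W^2]/(\E[W])^2=\frac{Z(0)\,Z(\rho_{\ell+1})}{Z(\rho_1)\,Z(\rho_\ell)}\le e$ (Lemmas~\ref{lem:W_i-moments} and~\ref{lem:Z1/Z0}); the same telescoping works for your geometric schedule since $\beta_{i+1}^2/\beta_i=\beta_{i+2}$. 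Then Chebyshev on $\widehat W$ with $m=O(\eps^{-2})$ repetitions gives the $3/4$ success probability using only $\ell m=O(n\log(n/\eps)\,\eps^{-2})$ samples, which yields $O((n/\eps)^2\log^2(n/\eps))$ total time. You would also need the endpoint estimate $1\le Z(\beta_0)\le e^{O(\eps)}$ (your bound via $\sum_\gamma w_\gamma=O(n)$ is fine, analogous to Lemma~\ref{lem:Z(rho_ell)}) and the standard coupling argument replacing exact samplers by $1/(8\ell m)$-approximate ones; those parts of your sketch are sound.
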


Fern{\'a}ndez, Ferrari, and Garcia~\cite{fernandez2001loss} introduced a condition very similar to the polymer mixing condition in the setting of polymer models on $\Z^d$.  Their objective was to derive probabilistic properties of polymer models directly, without going through the combinatorics and complex analysis inherent in the cluster expansion for the log partition function.  They introduced a continuous time stochastic process whose stationary distribution was the infinite volume Gibbs measure of their polymer model and their version of condition~\eqref{eqMixCondition} implied an exponentially fast rate of convergence of this process.  They remarked that such an approach had the potential to be an efficient computational tool.

Here we take an algorithmic point of view, and use the polymer mixing and sampling conditions to show that a simple discrete time Markov chain mixes rapidly and can be used to design efficient sampling and approximation algorithms.  Our approach differs from that of~\cite{fernandez2001loss} in that while they are interested primarily in the probabilistic properties of spin models on $\Z^d$, we are interested in algorithmic problems involving spin models on general families of graphs.  Our setting of discrete time processes on finite graphs is also more suitable to studying algorithmic questions.  Our work confirms the central point of~\cite{fernandez2001loss}: that complex analysis and absolute convergence of the cluster expansion is not necessary to derive many important properties of a polymer model.

\subsection{Applications}
\label{subsec:applications}

We apply our results for subset polymer models to two specific examples: the ferromagnetic Potts model and 
the hard-core model on expander graphs.  To state these results we need some definitions.

\begin{definition}
Let $\alpha>0$. A graph $G$ is an $\alpha$-expander graph if for all $S \subset V(G)$ with $|S| \le |V(G)|/2$, we have $e(S, S^c) \ge 
\alpha |S|$, where $S^c = V(G)\setminus S$ and
$e(S,S^c)$ is the number of edges exiting the set $S$. 
\end{definition}

\begin{definition}
\label{defPotts}
The $q$-color ferromagnetic Potts model with parameter $\beta>0$ is a random assignment of $q$ colors to the vertices of a graph defined by
\begin{align*}
\mu_{G, \beta}(\sigma) &= \frac{e^{-\beta m(G,\sigma) }}{ Z_{G, \beta}  }
\end{align*}
where $m(G,\sigma)$ is the number of bichromatic edges of $G$ under the coloring $\sigma$ and $Z_{G, \beta} = \sum_{\sigma} e^{-\beta m(G,\sigma)}$ is the Potts model partition function. The parameter $\beta$ is known as the inverse temperature.
\end{definition}

Jenssen, Keevash, and Perkins~\cite{JKP} gave an FPTAS and  polynomial-time sampling algorithm for the Potts model on expander graphs, with an algorithm based on the cluster expansion and Barvinok's method of polynomial interpolation.  Under essentially the same conditions on the parameters we give a Markov chain based sampling algorithm with near linear running time.

\begin{restatable}{theorem}{thmPotts}	
	\label{thmPotts}
Suppose $q\geq 2$, $\Delta \geq 3$ are integers and $\alpha>0$ is a real. Then for $\beta \ge  \frac{5+3\log((q-1) \Delta)}{\alpha}$ and any $qe^{-n} \leq \eps < 1$,
	there is an $\eps$-approximate sampling algorithm for the $q$-state ferromagnetic Potts model with parameter $\beta$ on all
	$n$-vertex  $\alpha$-expander graphs of maximum degree $\Delta$ with running time $O(n \log(n/\eps) \log (1/\epsilon))$.  There is also an $\eps$-approximate counting algorithm with running time $O( (n/\epsilon)^2 \log^3 (n/\eps))$ and success probability at least 3/4. 
\end{restatable}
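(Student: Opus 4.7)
The plan is to reduce Theorem \ref{thmPotts} to the general polymer results (Theorems \ref{thmPolySample} and \ref{thmPolyCount}) by realizing the ferromagnetic Potts model as a sum of $q$ symmetric polymer models, one per ground-state color. Following Example \ref{egPotts}, for each $g \in [q]$ I would define $\mathcal{C}^g(G)$ to consist of connected vertex subsets $\gamma \subseteq V(G)$ of size at most $M$, each decorated with an assignment of colors from $[q] \setminus \{g\}$, with weight $w_\gamma = e^{-\beta B(\gamma)}$. Here $B(\gamma)$ counts bichromatic edges inside $\gamma$ plus the edge boundary of $\gamma$ in $G$, and two polymers are incompatible when their graph distance is at most one. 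The cutoff $M$ is chosen as a small linear function of $n$, and in any case $M \leq n/2$, so that $g$ serves as the dominant color for configurations realized by compatible polymer families.

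The main conditions to verify are the polymer sampling condition and computational feasibility. Since $G$ is an $\alpha$-expander and $|\gamma| \leq M \leq n/2$, the expansion property gives $e(\gamma, \gamma^c) \geq \alpha|\gamma|$, and $B(\gamma) \geq e(\gamma, \gamma^c)$, so
\begin{equation*}
w_\gamma \leq e^{-\beta\alpha|\gamma|} \leq e^{-(5+3\log((q-1)\Delta))|\gamma|}
\end{equation*}
using the hypothesis $\beta \geq (5+3\log((q-1)\Delta))/\alpha$. Computational feasibility is clear from standard connectivity and edge-counting subroutines. Hence Theorems \ref{thmPolySample} and \ref{thmPolyCount} deliver, for each $g$, an $\eps$-approximate sampler for the polymer Gibbs measure in time $O(n\log(n/\eps))$ and an $\eps$-approximate estimator of the polymer partition function $Z^g(G)$ in time $O((n/\eps)^2\log^2(n/\eps))$.

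To lift back to the Potts model, the bijection of Example \ref{egPotts} identifies a compatible polymer configuration for ground state $g$ with a Potts coloring in which every connected monochromatic non-$g$ component has size at most $M$, and converts products of polymer weights into $e^{-\beta m(G,\sigma)}$. With $M$ tuned so that at most one $g$ can witness $\sigma$ in this way (an easy tie-break handles boundary cases), one obtains
\begin{equation*}
Z_{G,\beta} = \sum_{g\in[q]} Z^g(G) + E,
\end{equation*}
where $E$ sums over Potts configurations that are not $g$-dominated for any $g$. Such configurations must contain $\Omega(n)$ bichromatic edges by expansion applied to the largest color class, so $E \leq q^{n+1}e^{-\beta\Omega(n)}$; the hypothesis on $\beta$ together with $\eps \geq qe^{-n}$ drives this below the allowed approximation error.

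Assembling the algorithms is then routine, and here the symmetry of the model helps. All $Z^g(G)$ are equal by permutation symmetry in the colors, so for counting it suffices to run Theorem \ref{thmPolyCount} once on $(\mathcal{C}^1(G), w)$ to obtain an $(\eps/2)$-approximation $\widehat{Z}$ and output $q\widehat{Z}$, preserving the running time $O((n/\eps)^2\log^2(n/\eps))$. For sampling, draw $g$ uniformly from $[q]$, invoke Theorem \ref{thmPolySample} once to draw a polymer configuration for ground state $g$, and report the corresponding Potts coloring, in time $O(n\log(n/\eps))$. The main subtlety, and the principal obstacle I foresee, is pinning down $M$: it must be simultaneously small enough to make the decomposition of $Z_{G,\beta}$ disjoint and large enough to keep $E$ exponentially smaller than the main term. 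Once such an $M$ is chosen, the stated running times and approximation guarantees follow directly from the polymer theorems.
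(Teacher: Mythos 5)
Your overall strategy coincides with the paper's: use the polymer model of Example~\ref{egPotts}, verify the polymer sampling condition via expansion exactly as you do (this is Lemma~\ref{SampleConditionPotts}, with $\tau=\alpha\beta$), note computational feasibility, and then assemble the sampler (random ground-state colour plus Theorem~\ref{thmPolySample}) and the counter (one run of Theorem~\ref{thmPolyCount}, multiplied by $q$ using colour symmetry). Where you diverge is the bridge between the polymer partition function and $Z_{G,\beta}$: the paper takes $M=n/2$ and simply invokes \cite[Lemma 12]{JKP2} (Lemma~\ref{lemPottsApprox}), which says $qZ^g(G)$ is an $e^{-n}$-approximation of $Z_{G,\beta}$, whereas you attempt to prove a decomposition $Z_{G,\beta}=\sum_g Z^g(G)+E$ from scratch.

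That step contains a genuine gap. Your claim that $M$ can be ``tuned so that at most one $g$ can witness $\sigma$,'' with a tie-break for boundary cases, is false for every $M\geq 1$: for instance, a proper $2$-colouring of a connected bipartite expander is witnessed by both of its colours, since for either choice of $g$ the non-$g$ vertices form an independent set and hence singleton components. More fundamentally, $Z^g(G)$ is a fixed sum over polymer configurations, so no tie-breaking rule can be inserted into it; $\sum_g Z^g(G)$ genuinely double-counts the overlap of the sets $\Omega^g_M(G)$, and the correct statement is that these sets form an \emph{almost partition} whose overlap and uncovered part both carry exponentially small relative weight (this is precisely what the cited lemma of \cite{JKP2} establishes, and why the paper can take $M=n/2$ without any disjointness). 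In addition, your bound on $E$ rests on the unproved assertion that every coloring dominated by no colour has $\Omega(n)$ bichromatic edges, and you yourself leave the choice of $M$ unresolved; together these are exactly the content of the missing lemma, so as written the reduction to the Potts partition function is not established. Everything else (the sampling-condition verification, the error-budget bookkeeping up to constants of order $q$, and the running times) matches the paper and is fine.
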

Note that,  if the desired error satisfies $\eps < qe^{-n}$, then we can simply compute the partition function by brute force in poly$(n/ \epsilon)$ time. This observation combined with the above result gives an FPRAS, but we can no longer guarantee a running time of $O( (n/\epsilon)^2 \log^3 (n/\eps))$ for exponentially small values of $\eps$. A similar point also applies to the algorithm that we give for the hard-core model.

\begin{definition}
\label{defbipartiteexpander}
Let $\alpha\in (0,1)$. A bipartite graph $G=(V,E)$ with bipartition $V=\Vpart{0}\cup\Vpart{1}$
 is a bipartite $\alpha$-expander  
if, for $i\in \{0,1\}$ and all $S\subseteq \Vpart{i}$ where 
$|S| \leq |\Vpart{i}|/2$,
 we have $N_G(S) \ge (1+\alpha) |S|$ where $N_G(S)$ denotes the set of vertices that are adjacent to some vertex in $S$.
\end{definition}

Again we give a fast Markov chain based algorithm for sampling from the hard-core model for essentially the same range of parameters for which an FPTAS is given in~\cite{JKP}. 

\begin{restatable}{theorem}{thmHC}
\label{thmHC}
 Suppose $\Delta\geq 3$ is an integer and $\alpha\in(0,1)$ is a real. Then for any $\lambda \ge (3\Delta)^{6/\alpha}$ and $4e^{-n} \le \eps < 1$, there is an $\eps$-approximate sampling algorithm for the hard-core model with parameter $\lambda$ on all $n$-vertex bipartite $\alpha$-expander graphs of maximum degree $\Delta$. There is also an $\eps$-approximate counting algorithm for the hard-core model  with success probability at least $1-\epsilon$. Both algorithms run in time $O((n/\eps)^2 \log^3 (n/\eps) \log(1/\epsilon))$.
\end{restatable}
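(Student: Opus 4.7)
The plan is to cast the problem into the polymer model framework of Section \ref{sec:absolmod} via two polymer models (one per side of the bipartition) and apply Theorems \ref{thmPolySample} and \ref{thmPolyCount}. Let $G=(\Vpart{0}\cup \Vpart{1}, E)$ be a bipartite $\alpha$-expander of maximum degree $\Delta$. For $i\in\{0,1\}$, the ground-state configuration for side $i$ is the independent set $\Vpart{i}$; take two spins ($0$ for unoccupied, $1$ for occupied) with $g_v=1$ on $\Vpart{i}$ and $g_v=0$ on $\Vpart{1-i}$. A polymer $\gamma\in\Cpart{i}(G)$ is then a vertex set, connected in the square graph $G^2$ and of size at most $n/2$, carrying the (unique) non-ground-state spin assignment, subject to the requirement that every vertex of $\gamma\cap \Vpart{i}$ has a neighbor in $\gamma\cap \Vpart{1-i}$ (ensuring the induced configuration is an independent set). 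The weight is $\wpart{i}_\gamma = \lambda^{|\gamma\cap \Vpart{1-i}|-|\gamma\cap \Vpart{i}|}$, so that $\lambda^{|\Vpart{i}|}\Zpart{i}(G)$ equals the hard-core weight of all independent sets whose defect components from ground state $i$ each have size at most $n/2$. This is essentially the polymer model already used in \cite{JKP,liao2019counting}.

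The first key step is to verify the polymer sampling condition (\ref{eqSampleCondition}) with $\tau = 5 + 3\log\Delta$ (using $q=2$). Each occupied vertex of $\gamma\cap \Vpart{1-i}$ forces its $\Vpart{i}$-neighbors to be unoccupied, and since $|\gamma\cap \Vpart{1-i}|\leq n/2 \leq |\Vpart{1-i}|/2$, the bipartite $\alpha$-expansion of $G$ yields $|N_G(\gamma\cap \Vpart{1-i})|\geq (1+\alpha)|\gamma\cap \Vpart{1-i}|$; by the polymer's adjacency requirement, all of $N_G(\gamma\cap \Vpart{1-i})$ lies in $\gamma\cap \Vpart{i}$, so $|\gamma\cap \Vpart{i}|\geq (1+\alpha)|\gamma\cap \Vpart{1-i}|$ and hence $|\gamma\cap \Vpart{1-i}|\geq |\gamma|/(2+\alpha)$. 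This gives $\wpart{i}_\gamma\leq \lambda^{-\alpha|\gamma|/(2+\alpha)}$, and the hypothesis $\lambda\geq (3\Delta)^{6/\alpha}$ is chosen precisely so that this bound is at most $e^{-(5+3\log\Delta)|\gamma|}$.

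Computational feasibility is routine: given $\gamma$, connectivity in $G^2$, the size bound, and the adjacency condition can all be checked in time polynomial in $|\gamma|$, and $\wpart{i}_\gamma$ is computed by reading off the two counts. Theorem \ref{thmPolySample} then produces an $\eps$-approximate sampler for the polymer Gibbs measure associated with $\Cpart{i}$ in time $O(n\log(n/\eps))$, and Theorem \ref{thmPolyCount} produces a randomized approximation $\widehat{Z}^i$ to $\Zpart{i}(G)$. To assemble these into an algorithm for the full hard-core model, write $\ZhcG = \lambda^{|\Vpart{0}|}\Zpart{0}(G)+\lambda^{|\Vpart{1}|}\Zpart{1}(G)-R$, where $R$ accounts for independent sets captured by both polymer models; a counting tail bound using expansion together with the lower bound on $\lambda$ shows that $R\leq e^{-\Omega(n)}\max_i \lambda^{|\Vpart{i}|}\Zpart{i}(G)$, so $R$ contributes only an exponentially small multiplicative error and can be ignored once we restrict attention to $\eps \geq 4e^{-n}$.

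The final sampling algorithm estimates $\widehat{p}_i\propto \lambda^{|\Vpart{i}|}\widehat{Z}^i$, selects a side $i$ with probability $\widehat{p}_i$, runs the polymer Markov chain to obtain a polymer configuration $\Gamma$, and outputs the corresponding independent set; counting follows immediately from $\widehat{Z}^0,\widehat{Z}^1$ via the decomposition above, with the success probability boosted from $3/4$ to $1-\eps$ by the standard median trick at the cost of an extra $O(\log(1/\eps))$ factor in the running time, matching the claimed $O((n/\eps)^2\log^3(n/\eps))$. The main obstacle is the quantitative step in the second paragraph: one must track the polymer definition carefully enough that bipartite expansion really does translate into the linear size ratio $|\gamma\cap \Vpart{i}|\geq (1+\alpha)|\gamma\cap \Vpart{1-i}|$, and then propagate this through the arithmetic to reach $\tau=5+3\log\Delta$ under the specified lower bound on $\lambda$; the exponent $6/\alpha$ in the hypothesis is calibrated to leave just enough slack to absorb the constants.
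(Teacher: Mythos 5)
There is a genuine quantitative gap at the step you yourself flagged as the main obstacle. Your polymer model is the two-sided ``defect'' model (ground state $\Vpart{i}$ fully occupied, weight $\wpart{i}_\gamma=\lambda^{|\gamma\cap\Vpart{1-i}|-|\gamma\cap\Vpart{i}|}$). With the expansion bound $|\gamma\cap\Vpart{i}|\ge(1+\alpha)|\gamma\cap\Vpart{1-i}|$ the best decay you can extract is $\wpart{i}_\gamma\le\lambda^{-\frac{\alpha}{2+\alpha}|\gamma|}$, i.e.\ $\tau=\frac{\alpha}{2+\alpha}\log\lambda$. Under the hypothesis $\lambda\ge(3\Delta)^{6/\alpha}$ this gives only $\tau\ge\frac{6}{2+\alpha}\log(3\Delta)<3\log(3\Delta)<5+3\log\Delta$ for every $\Delta\ge 3$ and $\alpha\in(0,1)$ (and the threshold is even higher, $5+3\log(\Delta^2)$, if you insist on $G^2$ as host graph), so the polymer sampling condition~\eqref{eqSampleCondition} is \emph{not} satisfied and Theorems~\ref{thmPolySample} and~\ref{thmPolyCount} cannot be invoked. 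This is not a matter of sharpening constants: the paper uses exactly this two-sided model only in Section~\ref{secCompare} (Lemma~\ref{lem:4t4g4}), and there it needs the substantially stronger hypothesis $\lambda\ge(6\Delta)^{3+6/\alpha}$ to get $\tau\ge 5+3\log\Delta$. The paper's proof of Theorem~\ref{thmHC} instead uses the one-sided JKP2 model: polymers are small subsets $\gamma\subseteq\Vpart{i}$, connected in $G^2$, with the other side integrated out via $\wpart{i}_\gamma=\lambda^{|\gamma|}/(1+\lambda)^{|N_G(\gamma)|}\le\lambda^{-\alpha|\gamma|}$; then $\tau=\alpha\log\lambda\ge 6\log(3\Delta)\ge 5+3\log(\Delta^2)$, which is exactly what $(3\Delta)^{6/\alpha}$ is calibrated for. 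The price is that a polymer configuration no longer determines the independent set, so the sampler must add each unblocked vertex of $\Vpart{1-i}$ independently with probability $\lambda/(1+\lambda)$ — a step absent from your outline.

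Two further (more repairable) problems in your setup: the adjacency condition is stated backwards — independence requires that \emph{every neighbour of every vertex of $\gamma\cap\Vpart{1-i}$ lies in $\gamma\cap\Vpart{i}$}, not that every vertex of $\gamma\cap\Vpart{i}$ has a neighbour in $\gamma\cap\Vpart{1-i}$ (your expansion step silently uses the correct condition, and without it the claimed identity between $\lambda^{|\Vpart{i}|}Z^i(G)$ and the weight of independent sets with small defects fails); and the inequality chain ``$|\gamma\cap\Vpart{1-i}|\le n/2\le|\Vpart{1-i}|/2$'' is false, so the bipartite expansion hypothesis of Definition~\ref{defbipartiteexpander} does not apply as written — one needs a size cap relative to $|\Vpart{i}|$ or $|\Vpart{1-i}|$ (as in the paper's $|\gamma|\le|\Vpart{i}|/2$, or $|\gamma\cap\Vpart{i}|\le|\Vpart{i}|/4$ in Section~\ref{secCompare}), not relative to $n$. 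The remaining structure of your argument (double counting of sparse independent sets, median trick to boost the counting success probability to $1-\eps$, choosing the side proportionally to $\lambda^{|\Vpart{1-i}|}\hat{Z}^i$) matches the paper, but it all rests on the sampling condition that your model does not satisfy at the stated value of $\lambda$.
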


The extra factor in the running time of the sampling algorithm for the hard-core model as compared to the Potts model is due to the fact that the hard-core model on a bipartite graph does not in general exhibit exact symmetry between the ground states, and so we must approximate the partition functions of the even and odd dominant independent sets to sample.

We can extend these algorithms to obtain fast sampling algorithms in most situations in which a counting problem can be put in the framework of subset polymer models.  For instance, we can use Theorems~\ref{thmPolySample} and~\ref{thmPolyCount} to improve the running times of 
the algorithms given by~\cite{JKP2,liao2019counting} for sampling and counting proper $q$-colorings in $\Delta$-regular bipartite graphs (for large~$\Delta$).
The two papers give slightly different polymer models for proper $q$-colorings on $\Delta$-regular bipartite graphs ---
see \cite[Section 5]{JKP2} and \cite[Section 5.2]{liao2019counting}.
Section~5.2 of~\cite{liao2019counting} shows that their polymer model is computationally feasible.
Section~5.1 of~\cite{JKP2} shows that their polymer model 
 satisfies the Koteck\'y-Preiss condition --- in fact, their proof establishes the polymer sampling condition~\eqref{eqSampleCondition}.
It is easy to see (by comparing the polymer weights) that the polymer model of~\cite{liao2019counting} therefore also satisfies the
polymer sampling condition.
Thus, we get the following corollary of Theorem~\ref{thmPolySample} and~\ref{thmPolyCount}.

 \begin{corollary} 
There is an absolute constant $C>0$ so that for all even $q \ge 3$, all $\Delta \ge C q^2 \log^2 q$ and all $\eps > e^{-n/(8q)}$, there is an $\eps$-approximate sampling algorithm to sample a uniformly random proper $q$-coloring from a random $\Delta$-regular bipartite graph running in time $O(n \log (n/\eps) \log(1/\epsilon))$. 
Furthermore, there is a randomized $\eps$-approximation algorithm for the number of proper $q$-colorings  with running time $O((n/\eps)^2 \log^3 (n/\eps))$ and success probability at least $3/4$.  For odd $q$, there are $\eps$-approximate counting and sampling algorithms that both run in time $O((n/\eps)^2 \log^3 (n/\eps) \log (1/\epsilon))$.
 \end{corollary}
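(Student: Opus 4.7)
The plan is to assemble the corollary directly from Theorems~\ref{thmPolySample} and~\ref{thmPolyCount} by plugging in the polymer model of~\cite{liao2019counting} (or equivalently~\cite{JKP2}) for proper $q$-colorings on a random $\Delta$-regular bipartite graph $G=(V_1\cup V_2, E)$. Recall that in this polymer model one selects a ground-state pair of colors $(r_1,r_2)\in[q]^2$ with $r_1\ne r_2$ and views the random coloring as a collection of ``defect clusters,'' i.e.\ connected subgraphs whose vertices are colored by something other than the ground-state color on their own side of the bipartition, with polymer weights that are products of local color factors times a correction coming from the neighborhood. Two polymers are incompatible iff they lie within distance~$\le 1$ in the host graph $G$, so we are in the subset polymer framework of Section~\ref{sec:absolmod}.

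I would then verify the two hypotheses needed by Theorems~\ref{thmPolySample} and~\ref{thmPolyCount}. Computational feasibility is established in~\cite[Section~5.2]{liao2019counting}: membership in $\cC(G)$ and the polymer weight are both computable in time polynomial in $|\gamma|$. For the polymer sampling condition~\eqref{eqSampleCondition}, \cite[Section~5.1]{JKP2} shows that with high probability over the choice of $G$ the weights in their model satisfy $w_\gamma\le e^{-\tau|\gamma|}$ for $\tau\ge 5+3\log((q-1)\Delta)$ provided $\Delta\ge C q^2\log^2 q$ for an absolute constant $C$; a term-by-term comparison shows the weights in the~\cite{liao2019counting} model are no larger, so the same bound holds there. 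Plugging this into Theorem~\ref{thmPolySample} yields sampling from each individual polymer model in time $O(n\log(n/\eps))$, and Theorem~\ref{thmPolyCount} yields randomized $\eps$-approximate counting of each polymer partition function $Z^{(r_1,r_2)}$ in time $O((n/\eps)^2\log^2(n/\eps))$ with success probability $3/4$.

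For even $q$, the bipartite symmetry of the random $\Delta$-regular graph makes all $q(q-1)$ ground-state pairs contribute equally to the partition function (up to an exponentially small total-variation error that can be absorbed into $\eps$ because the contribution of colorings that are not dominant on either side is exponentially small; this is the content of the analysis of the ``dominant phases'' in~\cite{JKP2,liao2019counting}). Hence one can pick a single ground-state pair uniformly at random, sample from the corresponding polymer model, and read off a coloring, giving the $O(n\log(n/\eps))$ sampling time. For counting it suffices to multiply a single $Z^{(r_1,r_2)}$ by $q(q-1)$, yielding the $O((n/\eps)^2\log^2(n/\eps))$ time and $3/4$ success probability. For odd $q$ the contributions of the different ground-state pairs are not exactly equal, so, as in the hard-core case (Theorem~\ref{thmHC}), we must first approximate each $Z^{(r_1,r_2)}$ separately (with error parameter $\eps/\poly(n)$ and failure probability $\eps/\poly(n)$), then sample the ground-state pair from the induced distribution and sample within the chosen polymer model; standard union bounds introduce only an extra logarithmic factor, and the additional approximations cost an extra factor of $n$, giving the $O((n/\eps)^2\log^3(n/\eps))$ running time claimed.

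Finally, the restriction $\eps>e^{-n/(8q)}$ matches the regime in which the non-dominant colorings contribute a total weight at most $\eps\cdot Z$; for smaller $\eps$ one computes the partition function and samples by brute force in time $\poly(n/\eps)$, giving an FPRAS overall. The only real obstacle is the weight comparison establishing the polymer sampling condition for the~\cite{liao2019counting} model from the estimate in~\cite{JKP2}, but this is routine once the two polymer definitions are aligned; everything else is bookkeeping on top of Theorems~\ref{thmPolySample}--\ref{thmPolyCount}.
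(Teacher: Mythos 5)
Your high-level route is the same as the paper's: this corollary is assembled by citation --- computational feasibility of the coloring polymer model from \cite[Section~5.2]{liao2019counting}, the polymer sampling condition~\eqref{eqSampleCondition} from the Koteck\'y--Preiss verification in \cite[Section~5.1]{JKP2} (plus a weight comparison to transfer it to the other model), and then Theorems~\ref{thmPolySample} and~\ref{thmPolyCount}, with the even/odd-$q$ distinction handled exactly as the Potts versus hard-core cases. However, there is a genuine gap in how you set up the model: the ground states for proper $q$-colorings of a random $\Delta$-regular bipartite graph are \emph{not} ordered pairs of colors $(r_1,r_2)$. As the paper notes immediately after the corollary, a ground state is a balanced partition of the palette --- one side of the bipartition is assigned a set of $\lceil q/2\rceil$ colors and the other side the complementary $\lfloor q/2\rfloor$ colors --- and polymers record deviations from such a color-set assignment. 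Consequently the number of dominant phases for even $q$ is $\binom{q}{q/2}$, not $q(q-1)$, and your assembly (``pick one of the $q(q-1)$ pairs uniformly, sample from that polymer model, read off a coloring'' and ``multiply a single $Z^{(r_1,r_2)}$ by $q(q-1)$'') does not approximate the uniform distribution on colorings or their number: with ground states taken to be single color pairs, each phase only covers colorings in which each side is essentially monochromatic, which carries an exponentially small fraction of the mass for $q\geq 4$ (a typical coloring uses about $q/2$ colors on each side). The conditions you import from \cite{JKP2,liao2019counting} are proved for their balanced-partition polymer models, so they do not justify your decomposition.

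A second, smaller issue is the odd-$q$ bookkeeping: if you run the counting subroutine of Theorem~\ref{thmPolyCount} with error parameter $\eps/\poly(n)$, its running time becomes $O\bigl((n\,\poly(n)/\eps)^2\log^2(\cdot)\bigr)$, which exceeds the claimed $O((n/\eps)^2\log^3(n/\eps))$. The correct pattern is the one in the proof of Theorem~\ref{thmHC}: approximate each phase partition function to relative error $\Theta(\eps)$, boost the success probability to $1-\Theta(\eps)$ by taking the median of $O(\log(1/\eps))$ independent runs (this is where the extra $\log$ factor comes from), then choose the phase with probability proportional to these estimates and sample within it via Theorem~\ref{thmPolySample}. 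With the ground states corrected to balanced color-set partitions and this error/probability bookkeeping, your argument aligns with the paper's.
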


As with independent sets, the extra factor in the running time for odd $q$ comes from the fact that the ground states (colorings in which one side of the bipartition is assigned $\lceil q/2 \rceil$ colors and the other side $\lfloor q/2 \rfloor$ colors) are exactly symmetric only if $q$ is even.

Finally, we remark that the approximate counting algorithms for these applications based on truncating the cluster expansion can run faster than $n^{O(\log \Delta)}$ if the parameters (expansion, fugacity, inverse temperature) are high enough (see \cite[Theorem 8]{JKP2}), but the sampling algorithms derived from this approach will not match the $\tilde O(n)$ or $\tilde O(n^2)$ sampling algorithms we obtain here.

\subsection{Comparison to spin Glauber dynamics}
\label{sec:firstcomparison}

A very natural idea to sample at low temperatures (large $\beta$ for the Potts model, large $\lam$ for the hard-core model) is to use a single-spin update Markov chain like the Glauber dynamics, but to start in one of the ground states of the model chosen at random.  For example, pick one of the $q$-colors with equal probability then start the Potts model Glauber dynamics in the monochromatic configuration with that color.  The intuition is that the Glauber dynamics will mix well within the portion of the state space close to the chosen ground state, and the randomness in the choice of ground state will ensure that an accurate sample from the full measure is obtained.  Analyzing this algorithm was suggested in~\cite{HPR} and~\cite{JKP}. 

While we are not yet able to show that this algorithm succeeds, we make partial progress.  We show that Glauber dynamics, restricted to remain in a portion of the state space, mixes rapidly (in polynomial time).  It is easiest to state our result for the ferromagnetic Potts model.  

For a ground state color $g \in [q]$ and an integer $M$, 
let $\Omega^{g}_{M}(G)$ be the set of $q$-colorings of the vertices of $G$ so that every connected component of $G$ colored with the palette of colors $[q] \setminus g$ is of size at most $M$.  The set  $\Omega^{g}_{M}(G)$ consists of colorings that come from the valid polymer configurations from Example~\ref{egPotts} above.   In~\cite{JKP} it is shown that for  an appropriate choice of $M$, the set $\{  \Omega^{g}_{M}(G)\}_{g \in [q]}$ forms an ``almost partition'' of the set of all colorings, in that the  weight of both the overlap of the almost partition and the set of colorings uncovered by the almost partition is at most $\epsilon$ under the conditions of Theorem~\ref{thmPotts}. In particular, an $\epsilon$-approximate sample from the Potts model restricted to  $\Omega^{g}_{M}(G)$ for $M = O(\log(n/\eps))$ is enough (by symmetry) to obtain a $(q\epsilon)$-approximate sample from the Potts distribution $\mu_{G,\beta}$ (cf. Lemma~\ref{lem:f34f34}).  Using Markov chain comparison, we show in Section~\ref{sec:f3f344} that this can be done using the usual spin Glauber dynamics restricted to remain in  $\Omega^{g}_{M}(G)$. 

\begin{restatable}{theorem}{thmPottsGlauber}
\label{thmPottsGlauber}
Suppose $q\geq 2$, $\Delta \geq 3$ are integers and $\alpha>0$ is a real. Let $\beta \geq \frac{5+3\log((q-1) \Delta)}{\alpha}$ be a real number and $g \in [q]$. Then, for any $n$-vertex $\alpha$-expander graph $G$ of maximum degree $\Delta$ and any $\epsilon \in (0, 1)$, for $M = O(\log(n/\eps))$ the Glauber dynamics restricted to  $\Omega^{g}_{M}(G)$ has mixing time $\Tmix(\eps)$ polynomial in $n$ and $1/\eps$. 
\end{restatable}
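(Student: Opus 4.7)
The plan is to combine the rapid mixing of the polymer Markov chain (Theorem~\ref{thmPolyMix}) with a standard Markov chain comparison argument (in the style of Diaconis--Saloff-Coste~\cite{DS1} and Randall--Tetali~\cite{RT}). First, I would instantiate the subset polymer model of Example~\ref{egPotts} with the ground-state color $g$ and a truncation parameter $M = C \log(n/\eps)$ for a sufficiently large constant $C$; I need to verify (and this is done elsewhere in the paper for the parameters of Theorem~\ref{thmPotts}) that for $\beta \ge (5 + 3\log((q-1)\Delta))/\alpha$ this model satisfies the polymer mixing condition~\eqref{eqMixCondition}. Theorem~\ref{thmPolyMix} then gives a polymer Markov chain $\PPO$ with single-polymer updates, stationary distribution $\mu_G$, and mixing time $O(n \log(n/\eps))$.

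Next, I would exploit the bijection between valid polymer configurations $\Gamma \in \Omega$ and colorings $\sigma \in \Omega^{g}_{M}(G)$: the non-$g$ connected components of $\sigma$ are exactly the polymers of $\Gamma$, and the polymer weights $w_\gamma = e^{-\beta B(\gamma)}$ multiply to give $e^{-\beta m(G,\sigma)}$, so the two stationary distributions agree under this bijection. Let $\PGL$ denote the Glauber dynamics restricted to $\Omega^{g}_{M}(G)$ (proposing a uniformly random vertex and a uniformly random color, and rejecting any move leaving $\Omega^{g}_{M}(G)$).

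The bulk of the work is the comparison: for each transition of $\PPO$, which adds, removes, or updates a single polymer $\gamma$ of size at most $M$, I would define a canonical path in the transition graph of $\PGL$ by flipping the vertices of $\gamma$ one at a time in a breadth-first order rooted at an arbitrary vertex of $\gamma$. The key structural point is that because $\gamma$ is compatible with every other polymer in the configuration, all vertices at distance $\le 1$ from $\gamma$ outside $\gamma$ itself are colored $g$, so at every intermediate step the modified vertices form a connected subset of $\gamma$ of size at most $M$, and no other non-$g$ connected component is affected; thus every intermediate coloring lies in $\Omega^{g}_{M}(G)$. Each canonical path has length at most $M = O(\log(n/\eps))$, and the stationary-probability ratios along it are controlled (each single-vertex flip changes $m(G,\sigma)$ by at most $\Delta$). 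A standard counting argument then bounds the congestion of any edge $(\sigma,\sigma')$ of $\PGL$ by a polynomial in $n$: any $\PPO$-transition routed through $(\sigma,\sigma')$ is determined by the target polymer $\gamma$, and there are at most $n \cdot (q\Delta)^{O(M)} = \poly(n/\eps)$ such polymers passing through a given vertex, since $|\gamma|\le M$ and $\gamma$ is a connected subgraph with bounded-degree labels.

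Plugging these bounds into the comparison inequality of~\cite{DS1,RT} (see also~\cite{comparisonsurvey}) yields
\[
\operatorname{gap}(\PGL) \;\ge\; \frac{1}{\poly(n/\eps)} \cdot \operatorname{gap}(\PPO),
\]
and combined with the $\poly(n/\eps)$ bound on $\log(1/\mu_G^{\min})$ on the restricted state space this gives a polynomial mixing time for $\PGL$, proving Theorem~\ref{thmPottsGlauber}. The main obstacle I anticipate is the congestion bound: one must be careful that the canonical paths chosen for different polymer moves do not concentrate on the same Glauber edge, and that the ratio between the weight of the (possibly heavier) intermediate coloring and the endpoints is absorbed into the $\poly(n/\eps)$ factor --- both are handled by the fact that the polymer size is bounded by $M = O(\log(n/\eps))$ and that $\beta, \Delta, q$ are constants.
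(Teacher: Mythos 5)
Your proposal is correct and follows essentially the same route as the paper: the paper's Theorem~\ref{thm:rglaubermix} performs exactly this comparison between the restricted Glauber dynamics and the polymer dynamics, using canonical paths that flip the vertices of the updated polymer one at a time (single-update-compatibility via a DFS/BFS ordering), with congestion controlled by the path length $M$, the per-flip weight ratio $\eta=e^{\beta\Delta}$ (so $\eta^{M}=\poly(n/\eps)$), and the number/weight of polymers through a fixed vertex, and then specializes to the truncated Potts polymer model of Example~\ref{egPotts}. The only cosmetic difference is that you bound the polymers through a vertex by crude enumeration, $(q\Delta)^{O(M)}$, whereas the paper uses the polymer mixing condition to bound $\sum_{\gamma\ni v}|\gamma|w_\gamma<1$; both yield the same polynomial mixing bound.
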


We remark that the polynomial bound in Theorem~\ref{thmPottsGlauber} depends on $q,\Delta,\alpha$ exponentially, see the relevant Theorem~\ref{thm:rglaubermix} and Section~\ref{sec:f3f344} for details. Theorem~\ref{thmPottsGlauber} shows that despite exponentially slow mixing of the Glauber dynamics on the full state space~\cite{bordewich2016mixing}, it can still be used to obtain a polynomial-time approximate sampling algorithm. We leave for future work two important extensions that would complete the picture: 1) showing that \textit{unrestricted} Glauber dynamics starting from a well chosen configuration works 2) lowering the running time to $O(n \log n)$ from the large polynomial we obtain in the theorem.  

In Section~\ref{secCompare}, we state a general theorem (Theorem~\ref{thm:rglaubermix}) comparing the polymer model dynamics to spin model dynamics  as well as a specific result for the hard-core model (Theorem~\ref{thmHCGlauber}).

\section{Polymer models and Markov chains}
\label{secPolymer}

Here we compare various conditions on the weight functions of a polymer model, namely the Koteck\'{y}--Preiss~\cite{KP} condition and the polymer sampling condition, and show that the latter implies the former. Then, we  define the polymer Markov chain which we use to prove Theorems~\ref{thmPolyMix} and~\ref{thmPolySample}. 

\subsection{A comparison of the conditions on the weights}
\label{secConditionCompare}

Here we  show that the polymer sampling condition~\eqref{eqSampleCondition}
implies the well-known 
Koteck\'{y}--Preiss~\cite{KP} condition:
\begin{align*}
\sum_{\gamma' \nsim \gamma} e^{|\gamma'|} w_{\gamma'} \le |\gamma| .
\end{align*}

To see the implication, 
we use a lemma of Borgs, Chayes, Kahn, and Lov{\'a}sz.
\begin{lemma}[\cite{BCKL}]
\label{lem:BCKL}
Let $G$ have maximum degree $\Delta \geq 3$ and let $v \in V(G)$. The number of connected induced subgraphs of $G$ of size $k$ containing $v$ is at most $(e \Delta)^{k-1}$. 
\end{lemma}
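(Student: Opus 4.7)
The plan is to encode every connected induced subgraph of size $k$ containing $v$ as a rooted plane tree embedded in $G$ with root at $v$, and then bound the number of such trees.

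First, for each connected induced subgraph $H$ of $G$ of size $k$ containing $v$, I would fix a canonical spanning tree $T_H$ of $H$ rooted at $v$---for concreteness, the depth-first-search tree obtained using an arbitrary but fixed linear order on $V(G)$ to break ties. The same linear order canonically orders the children of every vertex of $T_H$, turning $T_H$ into a rooted plane tree. The map $H\mapsto T_H$ is injective because $V(H)=V(T_H)$ and the induced subgraph of $G$ on $V(T_H)$ is exactly $H$. It therefore suffices to bound the number of rooted plane trees on $k$ vertices embedded in $G$ with root $v$ whose root has at most $\Delta$ children and whose non-root vertices have at most $\Delta-1$ children.

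Next, I would separate this count into (i) the abstract rooted plane-tree shape $\tau$ on $k$ vertices and (ii) an embedding of $\tau$ into $G$ mapping the root to $v$. For a shape $\tau$ with child-degrees $(d_1,\dots,d_k)$ in DFS preorder (so $\sum_i d_i = k-1$), the number of embeddings is at most $\Delta^{d_1}(\Delta-1)^{d_2+\cdots+d_k}\le\Delta^{k-1}$, because each of the $d_1$ slots at the root is filled by one of at most $\Delta$ neighbors of $v$, and each slot at a non-root vertex is filled by one of at most $\Delta-1$ neighbors (excluding the parent). A crude bound summing over all Catalan-many plane-tree shapes gives $C_{k-1}\Delta^{k-1}$, which via $C_{k-1}\le 4^{k-1}$ yields only $(4\Delta)^{k-1}$.

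The main technical obstacle is replacing the constant $4$ by the sharper $e$. I would handle this by grouping shapes according to their full degree sequence $(d_1,\dots,d_k)$, in which case the number of shapes with a given sequence is a ballot-type multinomial and the total contribution to the embedding count can be reorganized as a restricted sum of the form $\binom{\Delta k}{k-1}/k$ via the Lukasiewicz encoding. Applying Stirling in the standard form $\binom{n}{k}\le(en/k)^k$ to $\binom{\Delta k}{k-1}$ collapses the bound to $(e\Delta)^{k-1}$ once the factor $1/k$ is used to absorb the polynomial overhead. The rest is routine arithmetic; the only conceptual work is setting up the injection $H\mapsto T_H$ so that the problem reduces to counting bounded-degree plane trees embedded in $G$.
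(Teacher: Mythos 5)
The paper offers no proof of this lemma --- it is quoted directly from [BCKL] --- so the only question is whether your self-contained argument is sound. Your overall strategy (inject each connected induced subgraph into a canonical rooted spanning tree, then count bounded-degree rooted trees at $v$ via a DFS/\L{}ukasiewicz encoding, arriving at the Fuss--Catalan quantity $\frac{1}{k}\binom{\Delta k}{k-1}$ and finishing with $\binom{n}{k}\le (en/k)^k$) is the standard route and can be made to work; the injectivity of $H\mapsto T_H$ is correct since the vertex set determines the induced subgraph, and the endgame arithmetic is fine up to the small remark that $\frac{1}{k}\binom{\Delta k}{k-1}\le \frac{e}{k}\,(e\Delta)^{k-1}$ only gives the claim for $k\ge 3$, so $k=1,2$ should be checked directly (they are immediate).

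The step that does not parse as written is the middle one. A rooted plane tree is determined uniquely by its preorder child-degree sequence $(d_1,\dots,d_k)$, so ``the number of shapes with a given sequence'' is $1$, not a ballot-type multinomial; and once each slot of a shape has been priced at $\Delta$ or $\Delta-1$ (giving $\Delta^{d_1}(\Delta-1)^{d_2+\cdots+d_k}\le\Delta^{k-1}$ embeddings per shape), no regrouping over degree sequences can push the total below $(\text{number of shapes})\times\Delta^{k-1}$, which is exactly the $(4\Delta)^{k-1}$-type bound you are trying to beat. The correct accounting merges the two factors instead of multiplying them: fix once and for all an ordering of the neighbours of each vertex of $G$, and encode a subtree by recording, for each vertex in DFS preorder, the \emph{subset} of its neighbour list occupied by its children; a vertex with $d_i$ children then contributes $\binom{\Delta}{d_i}$ (indeed $\binom{\Delta-1}{d_i}$ off the root), not $\Delta^{d_i}$, and the plane order is inherited from the neighbour ordering rather than chosen. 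Summing $\prod_i\binom{\Delta}{d_i}$ over \L{}ukasiewicz (ballot) sequences with $\sum_i d_i = k-1$ is precisely where the cycle lemma enters and yields $\frac{1}{k}\binom{\Delta k}{k-1}$. So the ingredients you name are the right ones, but the binomial coefficients must arise from the per-vertex embedding count, not from a count of shapes per degree sequence; as proposed, the reorganization into $\frac{1}{k}\binom{\Delta k}{k-1}$ is asserted rather than derived, and fixing it requires switching from ordered slot-choices to subset-choices as above.
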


Now consider a polymer model satisfying~\eqref{eqSampleCondition} with constant
$\PSC \ge 5 + 3 \log ((q-1)\Delta)$. Fix $\gamma \in \cC(G)$. We have that
\[
\sum_{\gamma' \nsim \gamma} e^{|\gamma'|} w_{\gamma'} \leq \sum_{v \in \gamma \cup \partial \gamma} \sum_{k \geq 1} \sum_{\gamma' \in \cC(G) ;\, |\gamma'| = k,~v \in \gamma'} e^k e^{-\tau k}.
\]
In order to account for all of the polymers that we sum over in the above, we consider the connected induced subgraphs of $G$ of size $k$ that contain $v$, and the assignments to them of $q-1$ colours. Using Lemma~\ref{lem:BCKL}, we therefore obtain that
\begin{align*} \sum_{\gamma' \nsim \gamma} e^{|\gamma'|} w_{\gamma'} 
&\leq |\gamma| (\Delta+1)\sum_{k \ge 1}  (e \Delta)^{k-1} (q-1)^k e^k e^{-\PSC k}  =\frac{ |\gamma| (\Delta+1)}{e \Delta}\sum_{k \ge 1}  (e \Delta)^{k} (q-1)^k e^k e^{-\PSC k}  \\
&\le \frac{|\gamma| (\Delta+1)}{e \Delta} \sum_{k \ge 1} e ^{-3k} \leq |\gamma|,
\end{align*}
so the 
Koteck\'{y}--Preiss  condition is satisfied. 
 
The Koteck\'{y}--Preiss  condition, in turn, implies the polymer 
mixing condition~\eqref{eqMixCondition} 
with $\theta=1/e$ since 
$e\cdot x \le e^x$ for $x \ge 1$. 
For the same reason (since $e^x$ gets much bigger than $x$), it is easy to see that the polymer mixing condition 
is weaker than the Koteck\'{y}--Preiss condition.

\subsection{The polymer Markov chain}
\label{secpolyMCsec}
For each $v\in V(G)$, let $\cA(v) = \{ \gamma \in \cC(G) : v\in{\gam} \}$ denote the collection of all polymers containing $v$ and let $a(v) = \sum_{\gamma \in \cA(v)}  w_{\gamma}$. 
By applying~\eqref{eqMixCondition} to the smallest $\gamma$ containing $v$ we have $a(v) \le  \theta < 1$ for all $v \in V(G)$. Define the probability distribution $\nu_v$ on $\cA(v) \cup \{\emptyset\}$ by $\nu_v(\gamma) = w_{\gamma}$ for $\gam\in\cA(v)$ and $\nu_v(\emptyset) = 1- a(v)$.   

The polymer dynamics on $\Omega$ are defined by the following transition rule from a configuration $\Gamma_t$ to a configuration $\Gamma_{t+1}$:

\medskip
\textbf{Polymer Dynamics}
\begin{enumerate}
\item Choose $v \in V(G)$ uniformly at random. Let $\gamma_v \in \Gamma_t \cap \cA(v)$ if $\Gamma_t \cap \cA(v) \neq \emptyset$ and let $\gamma_v = \emptyset$ otherwise. Note that $\gamma_v$ is well defined since $\Gamma_t$ can have at most one polymer containing $v$. 
\item Mutually exclusively do the following:
\begin{itemize}
\item With probability $\frac{1}{2}$, let $\Gamma_{t+1} = \Gamma_t \setminus \gamma_v$. 
\item With probability $\frac{1}{2}$, sample $\boldsymbol \gamma$ 
from $\nu_v$, set $\Gamma_{t+1} = \Gamma_{t} \cup \boldsymbol \gamma$ if this is in $\Omega$ and set $\Gamma_{t+1}= \Gamma_t$ otherwise.
\end{itemize}
\end{enumerate}

Note that the polymer dynamics are aperiodic, since there are self-loops, and irreducible since we can transition from any $\Gamma \in \Omega$ to any $\Gamma' \in \Omega$ (e.g., via the empty set). Since the polymer dynamics are finite, irreducible, and aperiodic, they are also ergodic. Next, we observe that the stationary distribution of the polymer dynamics is $\mu_G$ by checking detailed balance.  Note that each transition of the dynamics changes a configuration $\Gamma$ by at most one polymer $\gamma$; let $\Gamma' = \Gamma \cup \gamma$.  Then
\begin{align*}
\frac{\mu_G(\Gamma')}{\mu_G(\Gamma)} = \frac{ \prod_{\gamma' \in \Gamma'} w_{\gamma'}}{ \prod_{\gamma' \in \Gamma} w_{\gamma'}} &= w_{\gamma} =  \frac{ \frac{|\gam|}{n} \cdot \frac{1}{2} \cdot w_\gam }{ \frac{|\gam|}{n} \cdot \frac{1}{2} }= \frac{P_{\Gamma \to \Gamma'}}{P_{\Gamma' \to \Gamma}} \,,
\end{align*}
where $P$ is the transition matrix of the polymer dynamics, and so $\mu_G$ is the stationary distribution.

We now formally define the mixing time. If $\calM$ is an ergodic Markov chain with transition matrix~$P$ and stationary distribution~$\nu$ then
the mixing time of~$\calM$ from a state~$x$ is given by
$$T_x(\epsilon) = \min \{ t>0 \mid \mbox{
for all $t'\geq t$, $ \| P^{t'}(x,\cdot)-\nu(\cdot)\|_{TV} \leq \epsilon$}\},$$
where $\|\nu'-\nu\|_{TV}$ denotes the total variation distance between distributions~$\nu$ and~$\nu'$.
The mixing time of~$\calM$ is given by 
$\Tmix(\epsilon) = \max_x T_x(\epsilon)$.  We will write $\Tmix(\calM,\eps)$ below if we need to emphasize which Markov chain we refer to. 

\subsection{Proof of Theorems~\ref{thmPolyMix} and~\ref{thmPolySample}}

\thmPolyMix*
\begin{proof} 
We will show that under condition~\eqref{eqMixCondition} the mixing time of the polymer dynamics is $O(n \log (n/\eps))$ by applying the path coupling technique. 
We define a metric $D(\cdot, \cdot)$ on $\Omega$ by setting $D(\Gamma, \Gamma') = 1$ if $\Gamma' = \Gamma \cup \{\gamma\}$ or $\Gamma = \Gamma' \cup \{\gamma\}$
for a polymer~$\gamma$ and extending this as a shortest path metric; i.e., $D(\Gamma, \Gamma') = |\Gamma \triangle \Gamma'|$ for any $\Gamma,\Gamma'\in \Omega$ where $\triangle$ denotes the symmetric difference of two sets. 

Now suppose we couple two chains $X_t$ and $Y_t$ by attempting the same updates in both chains at each step. Suppose that $X_t = Y_t \cup\{\gamma\}$ for some polymer $\gam$. With probability $\frac{|\gam|}{n} \cdot \frac{1}{2}$ we pick $v\in {\gam}$ and remove $\gamma_v$ which yields $X_{t+1} = Y_{t+1} = X_t$.  On the other hand, we may attempt to add a polymer $\gamma' \nsim \gamma$ so that $Y_t \cup \{\gamma'\} \in \Omega$. That is, $X_{t+1} = X_t = Y_t\cup \{\gam\}$ and $Y_{t+1} = Y_t \cup\{\gamma'\}$. This occurs with probability $\frac{|\gam'|}{n} \cdot \frac{1}{2} \cdot w_{\gam'}$ and in this case $D(X_{t+1}, Y_{t+1}) \leq 2$. Putting these together we can bound
\begin{align*}
\E [ D(X_{t+1}, Y_{t+1})] &\le 1 + \frac{1}{2 n} \Big[ -|\gamma| + \sum_{\gamma' \nsim \gamma} |\gamma'| w_{\gamma'} \Big] \,.
\end{align*}
Using~\eqref{eqMixCondition} we have $\sum_{\gamma' \nsim \gamma} |\gamma'| w_{\gamma'} \le \theta |\gamma|$ and so
\begin{align*}
\E [ D(X_{t+1}, Y_{t+1})] &\le 1-|\gamma| \frac{1-\theta}{2 n}
\leq 1 - \frac{1-\theta}{2 n}.
\end{align*} 
By the path coupling lemma (see~\cite[Section 6]{DG}), and with $W$ denoting the diameter of $\Omega$ under $D(\cdot, \cdot)$, we have that  
the mixing time is at most $\log(W/\epsilon) 2n/(1-\theta) = O(n \log (n/\epsilon))$, using that $W\leq 2n$. This finishes the proof.
\end{proof}

To prove Theorem~\ref{thmPolySample} we will show that a single update of the polymer dynamics can be computed in constant expected time. Assume that the polymer model is computationally feasible and that the polymer sampling condition \eqref{eqSampleCondition}
holds with constant $\PSC \geq 5+3\log((q-1)\Delta)$. 
We will use the following algorithm. Let $r = \PSC - 2 -\log((q-1)\Delta) \ge 3 + 2 \log((q-1)\Delta)$ and let $\cA_k(v) = \{ \gamma \in \cA(v) : |\gamma| \le k\}$.

\medskip
\textbf{Single polymer sampler}
\begin{enumerate}
\item Choose $\mathbf k$ according to the following geometric distribution: for $k$ a non-negative integer, 
$$\Pr[\mathbf k = k] = (1-e^{-r}) e^{-r k} \, .$$
This gives $\Pr[ \mathbf k \ge k] = e^{-r k}$. 
\item Enumerate all polymers in $\cA_{\mathbf{k}}(v)$ and compute their weight functions. 
\item Mutually exclusively output $\gamma \in \cA_{\mathbf{k}}(v)$ with probability $w_{\gamma}  e^{r |\gamma|}$, and with all remaining probability output $\emptyset$.  In particular if $\mathbf k=0$, then  output $\emptyset$ with probability $1$. 
\end{enumerate}

In order to show that this algorithm has constant expected running time, we will require the following result on enumerating connected subgraphs of bounded degree graphs.

\begin{lemma}[\cite{PR} Lemma 3.7]
\label{lem:numsubgraphs}
Let $G$ have maximum degree $\Delta $ and let $v \in V(G)$.  There is an algorithm running in time $O(k^5 (e \Delta)^{2k})$ that outputs a list of all connected subgraphs of $G$ of size at most $k$ containing $v$. 
\end{lemma}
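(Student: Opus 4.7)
The plan is to enumerate the connected subgraphs by a recursive branching procedure that grows a connected set $S$ outward from $v$ one vertex at a time, using a canonical ordering of the boundary edges to guarantee that each subgraph is listed exactly once, and then to bound the total work via Lemma~\ref{lem:BCKL}.

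Concretely, I would maintain a triple $(S, B, F)$ in the recursion, where $S$ is the connected set constructed so far (always containing $v$), $B$ is an ordered queue of boundary edges of $S$ (those with exactly one endpoint in $S$), and $F$ is the set of vertices that have been explicitly excluded. Start with $S=\{v\}$, $B$ equal to the edges incident to $v$ in some fixed canonical order, and $F=\emptyset$. At each recursive call, emit $S$ and return if $|S|=k$ or $B$ is empty; otherwise pop the first edge $(u,w)$ from $B$, and if $w\in S\cup F$ recurse on the shortened queue, while if $w$ is a fresh vertex branch into an include call on $(S\cup\{w\}, B', F)$, where $B'$ appends the edges from $w$ to $V(G)\setminus (S\cup\{w\})$ to the remaining queue, together with an exclude call on $(S, B, F\cup\{w\})$. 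A straightforward induction on $|S|$ shows that the path through the recursion tree emitting a given connected subgraph is uniquely determined by the canonical ordering, so every connected subgraph of size at most $k$ containing $v$ is produced exactly once.

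For the running time, each root-to-leaf path contains at most $k$ include branches, and between consecutive includes the number of exclude branches is controlled by the current queue size, which is at most $k\Delta$. The clean way to bound the tree size is to charge each internal node to the pair consisting of the current $S$ and the current $F$: by Lemma~\ref{lem:BCKL} the number of reachable $S$ is at most $(e\Delta)^{k-1}$, and the $F$ values that can arise during the construction of a given $S$ are subsets of its boundary of size at most $k\Delta$, whose count I would further bound by $(e\Delta)^k$ by noting that each excluded vertex is dequeued just after a canonical include and re-applying Lemma~\ref{lem:BCKL} to the ``include plus next exclude'' expansions. This yields $O((e\Delta)^{2k})$ leaves in the recursion tree, and since each recursion step manipulates queues and sets of size $O(k\Delta)$, the per-node work is $O(\poly(k))$, giving an overall running time of $O(k^5(e\Delta)^{2k})$. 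The main obstacle is precisely this tight accounting: naive binary branching on the $O(k\Delta)$ decisions would give $2^{k\Delta}$, which is far too weak, so Lemma~\ref{lem:BCKL} must be invoked (in effect, the bound on rooted $k$-vertex subtrees of the infinite $\Delta$-ary tree) to trade the exponent $k\Delta$ for $k\log(e\Delta)$.
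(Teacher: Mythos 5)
First, note that the paper does not prove Lemma~\ref{lem:numsubgraphs} at all: it is imported verbatim as Lemma~3.7 of Patel--Regts \cite{PR}. So your proposal is a self-contained substitute rather than a reconstruction of an argument in this paper. Your algorithm and its correctness argument are fine: growing $S$ by popping canonical boundary edges and branching include/exclude on fresh vertices, with excluded vertices permanently frozen in $F$, does produce every connected subgraph of size at most $k$ containing $v$ exactly once, by the standard ``the decisions are forced by the target set'' argument you sketch.

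The genuine gap is in your running-time accounting. You charge internal nodes to pairs $(S,F)$ and claim the number of $F$-values compatible with a given $S$ is at most $(e\Delta)^k$ because ``each excluded vertex is dequeued just after a canonical include.'' That assertion is false as stated: long runs of consecutive exclude branches occur (e.g.\ the all-exclude path from any node), and a priori $F$ is an arbitrary subset of the up to $k\Delta$ boundary vertices seen so far, so the naive count is $2^{k\Delta}$, which is exactly the bound you are trying to avoid; re-applying Lemma~\ref{lem:BCKL} to $S\cup F$ does not help either, since $|S\cup F|$ can be of order $k\Delta$. Fortunately the counting can be repaired more simply, and you do not need to count $(S,F)$ pairs at all: by your own uniqueness argument the leaves of the recursion tree are in bijection with the emitted subgraphs, so by Lemma~\ref{lem:BCKL} there are at most $\sum_{j\le k}(e\Delta)^{j-1}\le k(e\Delta)^{k-1}$ leaves; every internal node is an ancestor of some leaf, and each root-to-leaf path contains at most $k\Delta$ pops (only edges incident to the at most $k$ included vertices are ever enqueued), so the tree has $O\bigl(k^2\Delta\,(e\Delta)^{k-1}\bigr)$ nodes; with $\poly(k\Delta)$ work per node for copying and updating $(S,B,F)$ this is comfortably within the claimed $O\bigl(k^5(e\Delta)^{2k}\bigr)$ --- indeed it gives roughly $(e\Delta)^{k}$ up to polynomial factors, better than the bound in the lemma. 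With that replacement your proof is complete.
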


We now proceed to prove the following lemma.

\begin{lemma}
\label{lemOnePolySample}
Under the polymer sampling condition~\eqref{eqSampleCondition} the output distribution of the \textbf{single polymer sampler} is $\nu_v$. Further, assuming the polymer model is computationally feasible,  the expected running time of the sampler is constant.  
\end{lemma}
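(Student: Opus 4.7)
The plan is to verify the two assertions separately: first the output distribution and then the expected running time.

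For the distribution, I would show that for each polymer $\gamma \in \cA(v)$ with $|\gamma|=k$, the probability that the sampler outputs $\gamma$ factors as
\[
\Pr[\mathbf{k} \geq k]\cdot w_\gamma e^{r|\gamma|} = e^{-rk}\cdot w_\gamma e^{rk} = w_\gamma = \nu_v(\gamma),
\]
since $\gamma$ is available in Step 3 exactly when $\mathbf{k}\geq k$, and conditional on that, the probability it is picked equals $w_\gamma e^{r|\gamma|}$ regardless of the exact value of $\mathbf{k}$. For this to define a valid distribution I must check that the total probability mass assigned to non-empty polymers in Step 3 never exceeds $1$, so that the residual mass assigned to $\emptyset$ is non-negative, and that the aggregate probability of outputting $\emptyset$ equals $1-a(v)$. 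The first follows from bounding
\[
\sum_{\gamma \in \cA(v)} w_\gamma e^{r|\gamma|} \leq \sum_{k\geq 1}(e\Delta)^{k-1}(q-1)^k e^{(r-\tau)k}
\]
via Lemma~\ref{lem:BCKL} and the polymer sampling condition~\eqref{eqSampleCondition}; since $\tau - r \geq 2+\log((q-1)\Delta)$, the right-hand side is a convergent geometric series that is bounded by a constant strictly less than $1$. The empty-output probability then follows by complementary counting: the total mass on non-empty outputs equals $\sum_\gamma w_\gamma = a(v)$, so the remaining mass is exactly $1-a(v) = \nu_v(\emptyset)$.

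For the expected running time, Step 1 samples a geometric random variable in $O(1)$ expected time. Step 2 uses Lemma~\ref{lem:numsubgraphs} to enumerate all connected subgraphs of size at most $\mathbf{k}$ containing $v$ in time $O(\mathbf{k}^5 (e\Delta)^{2\mathbf{k}})$; for each such subgraph we loop over the $(q-1)^{\mathbf{k}}$ possible spin assignments and use computational feasibility to test membership in $\cC(G)$ and evaluate $w_\gamma$ in time polynomial in $\mathbf{k}$. Thus the total work conditional on $\mathbf{k}$ is $O(\mathbf{k}^{c}\,((q-1)e^2\Delta^2)^{\mathbf{k}})$ for some absolute constant $c$. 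Taking expectation,
\[
\E[\text{running time}] = O\!\left(\sum_{k\geq 0}(1-e^{-r})\,k^{c}\,\bigl(e^{-r}(q-1)e^2\Delta^2\bigr)^{k}\right),
\]
which is a convergent series (hence $O(1)$) provided $e^{-r}(q-1)e^2\Delta^2 < 1$, i.e.\ $r > 2+\log(q-1)+2\log\Delta$; this is guaranteed by $r \geq 3+2\log((q-1)\Delta)$.

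The main obstacle is pinning down the threshold for $r$ (equivalently, for $\tau$) so that both conditions --- validity of the Step~3 distribution and summability of the expected running time --- hold simultaneously. The gap of $\log((q-1)\Delta)$ between $\tau$ and $r$ is what allows the geometric tilt $e^{r|\gamma|}$ to be absorbed into the polymer weight while still leaving enough decay to dominate the $(e\Delta)^{k}(q-1)^k$ growth from enumerating polymers; this is precisely why $\tau \geq 5+3\log((q-1)\Delta)$ is chosen in the polymer sampling condition.
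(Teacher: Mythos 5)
Your proposal is correct and follows essentially the same route as the paper's proof: the same factorization $\Pr[\mathbf{k}\ge|\gamma|]\cdot w_\gamma e^{r|\gamma|}=w_\gamma$ for the output distribution, the same use of Lemma~\ref{lem:BCKL} with the gap $\tau-r=2+\log((q-1)\Delta)$ to check that the Step~3 probabilities sum to less than $1$, and the same use of Lemma~\ref{lem:numsubgraphs} plus computational feasibility to bound the conditional work and sum the resulting geometric series against the tail of $\mathbf{k}$. The only differences are cosmetic bookkeeping (you bound the per-level cost by $((q-1)e^2\Delta^2)^k$ instead of the paper's $(e(q-1)\Delta)^{2k}$, and you spell out the complementary-counting step for $\emptyset$ that the paper leaves implicit).
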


\begin{proof}

We first show that the probabilities $w_{\gamma} e^{r |\gamma|}$ sum to less than $1$, which shows the last step of the sampling algorithm is well defined.
Since $\PSC - r = 2 +  \log ((q-1)\Delta)$,
\begin{align*}
\sum_{\gamma \in A(v) } w_{\gamma} e^{r |\gamma|}  &\le  \frac {1}{2}\sum_{k\ge 1}  (e\Delta)^{k-1} {(q-1)}^k e^{-\PSC k + r k}= \frac{1}{2 e \Delta} \sum_{k \ge 1} e^{-k} < 1 \, . 
\end{align*}

We next show that the output of the algorithm has distribution $\nu_v$.  Given $\gamma \in \cA(v)$, to output $\gamma$ we must choose $\mathbf k \ge |\gamma|$.  This happens with probability $e^{-r |\gamma|}$ by the distribution of $\mathbf k$. Conditioned on choosing such a $\mathbf k$, the probability we output $\gamma$ is $w_{\gamma}  e^{r |\gamma|}$, and multiplying these probabilities together gives $w_{\gamma}$ as desired.  Since this is true for all $\gamma \in \cA(v)$, the output distribution is exactly $\nu_v$. 

Finally we analyze the expected running time assuming that the model is computationally feasible. To do this, we observe that by Lemma~\ref{lem:numsubgraphs}, conditioned on the event that $\mathbf k =k$ the enumeration step of the algorithm takes time $O(k^5 (e \Delta)^{2k})$, and the time to  determine which polymers are allowed and computing their weights is $O(k^c(q-1)^k (e \Delta)^{k-1}/2)$ for some $c>0$, since the polymer model is computationally feasible; here, the factor $k^c$ accounts for the time to determine whether a single polymer of size $k$ is `allowed' and to compute its weight. Therefore, the expected running time is 
\begin{align*}
&=O\Big(1+  \sum_{k \ge 1} \Pr[\mathbf k =k] \big( k^5 (e \Delta)^{2k} + k^c (e (q-1)\Delta)^k \big)   \Big) \\
&=O\Big( 1+ \sum_{k \ge 1} e^{-rk}\, k^c (e (q-1) \Delta)^{2k} \Big) =O\Big( 1+  \sum_{k \ge 1} k^c\, e^{-(\PSC'+1) k} \Big)= O(1) \,,
\end{align*}
where $\PSC' = \PSC - 5-3\log((q-1)\Delta) \geq 0$.
\end{proof}

Finally we prove Theorem~\ref{thmPolySample}.
\thmPolySample*

\begin{proof}
By Theorem~\ref{thmPolyMix}, there is there is an integer $C_1 > 1$ (independent of $n$) so that if we start with the empty configuration $\Gamma_0 = \emptyset$ and run the polymer dynamics, then $\Gamma_{C_1 \ceil{n \log(n/\epsilon)}}$ has distribution within $\eps/2$ total variation distance of $\mu_G$. By Lemma~\ref{lemOnePolySample}, there is an integer $C_2 > 1$ (independent of $n$) such that the expected number of steps required to perform one update of the polymer dynamics is at most $C_2$. To compute an $\epsilon$-sample from $\mu_G$, we repeat the following $\ceil{\log(2/\epsilon)}$ times, independently, and if no configuration is returned we return the empty configuration. Run the polymer dynamics for $3 C_1 C_2 \ceil{n \log(n/\epsilon)}$ steps starting from $\Gamma_0 = \emptyset$, and if at least $C_1 \ceil{n \log(n/ \epsilon)}$ updates of the polymer dynamics were executed, return $\Gamma_{C_1 \ceil{n \log(n/ \epsilon)}}$.
 
We next show that the probability that the algorithm does not timeout and return the empty configuration is at least $1 - \eps/2$, which therefore yields that the output distribution has total variation distance at most $\eps$ from $\mu_G$. Let $X$ denote the total number of steps required to execute $C_1 \ceil{n \log(n/\epsilon)}$ updates of the polymer dynamics, and note that $\E[X] \leq C_1 C_2 \ceil{n \log(n/\epsilon)}$. By Markov's inequality, it follows that $\Pr(X \geq 3 \E[X]) < 1/e$. Thus, the probability that $X \geq 3\E[X]$ for each of $\ceil{\log(2/\epsilon)}$ independent copies of $X$, is less than $(1/e)^{\log(2/\epsilon)} = \epsilon/2$.
\end{proof}

\section{Approximate counting  algorithm}
\label{secCounting}
In this section we show how to use a sampling oracle to approximately compute the partition function of the polymer model. One standard way is by self-reducibility. In~\cite{HPR} an efficient sampling algorithm for polymer models is derived from an efficient approximate counting algorithm by applying self-reducibility on the level of polymers. While we could apply polymer self-reducibility in the other direction to obtain counting algorithms from our sampling algorithm, here we  use the simulated annealing method instead (see \cite{BSVV2008, H2015, SVV2009annealing}) to obtain a faster implementation of counting from sampling.

Suppose that $(\cC(G),w)$ is a computationally feasible polymer model. Let $\rho$ be a parameter and define a weight function
\[
	w_\gam(\rho) = w_\gam e^{-\rho|\gam|}
\]
for all $\gam \in \cC(G)$. 
Then for each $\rho \ge 0$ this defines a computationally feasible polymer model $(\cC(G),w(\rho))$ on $G$, where setting $\rho=0$ recovers the original model $(\cC(G),w)$. If the original model $(\cC(G),w)$ satisfies the polymer sampling condition \eqref{eqSampleCondition}, then so does $(\cC(G),w(\rho))$ for every $\rho \geq 0$ as the weight function $w_\gam(\rho)$ is monotone decreasing in $\rho$. 

Given the graph $G$, we write the partition function of the polymer model $(\cC(G),w(\rho))$ as a function of $\rho$:
\[
Z(\rho) = Z(G; \rho) = \sum_{\Gamma\in\Omega} \prod_{\gam\in\Gamma} w_\gam(\rho) = \sum_{\Gamma\in\Omega} \prod_{\gam\in\Gamma} w_\gam e^{-\rho|\gam|}.
\]
The associated Gibbs distribution is denoted by $\mu_\rho = \mu_{G;\rho}$. 
Since  $\lim_{\rho \to \infty} w_\gam(\rho) = 0$, we have  $\lim_{\rho \to \infty} Z(\rho) = 1$
(only the empty configuration~$\Gamma$ contributes to this limit), and so we will use simulated annealing to interpolate between $Z(\infty)=1 $ and our goal $Z(0)$, assuming access to a sampling oracle for $(\cC(G),w(\rho))$ for all $\rho\geq 0$. To apply the simulated annealing method, roughly speaking,  we find a sequence of parameters $0=\rho_0 < \rho_1 <\dots < \rho_\ell < \infty$ called a \textit{cooling schedule} where $\ell\in \N^+$, and then estimate $Z(0)$ using the telescoping product
\[
	\frac{1}{Z(0)} = \frac{1}{Z(\rho_0)} = \frac{Z(\rho_1)}{Z(\rho_0)} \frac{Z(\rho_2)}{Z(\rho_1)} \cdots \frac{Z(\rho_\ell)}{Z(\rho_{\ell-1})} \frac{1}{Z(\rho_\ell)}.
\]
To estimate each term $Z(\rho_{i+1})/Z(\rho_i)$, we define independent random variables
\[
	W_i = \prod_{\gam\in\Gamma_i} \frac{w_\gam(\rho_{i+1})}{w_\gam(\rho_i)}, \qquad\text{where } \Gamma_i \sim \mu_{\rho_{i}}.
\]
It is straightforward to see that $\E[W_i] = Z(\rho_{i+1})/Z(\rho_i)$ (see Lemma~\ref{lem:W_i-moments}). Using the sampling oracle for $\mu_{\rho_{i}}$, we can sample $W_i$ for all $i$, and by taking the product we get an estimate for $1/Z(0)$. 

The key ingredient of simulated annealing is finding a good cooling schedule. There are nonadaptive schedules \cite{BSVV2008} that depend only on $n$, and adaptive schedules \cite{H2015, SVV2009annealing} that also depend on the structure of $Z(\cdot)$. Usually the latter leads to faster algorithms than the former. In this paper we will use a simple nonadaptive schedule: $\rho_i = i/n$ for $i = 0,\dots,\ell$ where $\ell = O(n\log (n/\eps))$. We will show that this cooling schedule already gives us a fast algorithm for the polymer model. The reason behind it is that the weight function $w_\gam(\rho)$ decays exponentially fast, and so 
(see Lemma~\ref{lem:Z(rho_ell)})
the partition function $Z(\rho_\ell)$ is bounded by a constant when $\rho_\ell = O(\log n)$, leading to a short cooling schedule.

Our algorithm is as follows.

\medskip
\textbf{Polymer approximate counting algorithm}
\begin{enumerate}
	\item Let $\rho_i = i/n$ for $i=0,1,\dots,\ell$ where $\ell=\left\lceil n \log(4e(q-1)\Delta n/\eps) \right\rceil$;
	\item For $j=1,\dots, m$ where $m = \left\lceil 64\eps^{-2} \right\rceil$:
	\begin{enumerate}
		\item For $0\leq i\leq \ell-1$:
		\begin{enumerate}[(i)]
			\item Sample $\Gamma_i^{(j)}$ from $\mu_{\rho_i}$;
			\item Let $W_i^{(j)} = \prod_{\gam\in\Gamma_i^{(j)}} e^{-|\gam|/n}$;
		\end{enumerate}
		\item Let $W^{(j)} = \prod_{i=0}^{\ell-1} W_i^{(j)}$;
	\end{enumerate}
	\item Let $\widehat{W} =  
	\frac{1}{m} \sum_{j=1}^{m} W^{(j)}$ and output $\widehat{Z} = 1/\widehat{W}$.
\end{enumerate}

Before proving Theorem~\ref{thmPolyCount}, we first present a few useful lemmas. 
We shall use $\rho_i = i/n$ for $0 \leq i \leq \ell$ as our cooling schedule and we further define $\rho_{\ell+1} = (\ell+1)/n$ though it does not appear in the algorithm. For $0 \leq i \leq \ell-1$ independently we define $\Gamma_i$ to be a random sample from $\mu_{\rho_i}$ and $W_i = \prod_{\gam\in\Gamma_i} e^{-|\gam|/n}$. Finally, we let $W = \prod_{i=0}^{\ell-1} W_i$.

\begin{lemma}\label{lem:W_i-moments}
	For $0\leq i\leq \ell-1$,
	\[
		\E[W_i] = \frac{Z(\rho_{i+1})}{Z(\rho_i)} \qquad\text{and}\qquad \E[W_i^2] = \frac{Z(\rho_{i+2})}{Z(\rho_i)}.
	\]
	Therefore,
	\[
		\E[W] = \frac{Z(\rho_\ell)}{Z(0)} \qquad\text{and}\qquad \E[W^2] = \frac{Z(\rho_\ell)Z(\rho_{\ell+1})}{Z(0)Z(\rho_1)}.
	\]
\end{lemma}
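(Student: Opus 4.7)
The plan is to prove both first-moment identities by a direct calculation against the definition of $\mu_{\rho_i}$, to prove the second-moment identity by an algebraic identity satisfied by the weights on our arithmetic cooling schedule, and finally to apply independence of the $\Gamma_i$ to turn products of expectations into telescoping products of partition functions.

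The starting observation is that, since $w_\gam(\rho) = w_\gam e^{-\rho|\gam|}$ and $\rho_{i+1}-\rho_i = 1/n$, we have $w_\gam(\rho_{i+1})/w_\gam(\rho_i) = e^{-|\gam|/n}$. Hence
\[
W_i \;=\; \prod_{\gam\in\Gamma_i} e^{-|\gam|/n} \;=\; \prod_{\gam\in\Gamma_i} \frac{w_\gam(\rho_{i+1})}{w_\gam(\rho_i)},
\]
so $W_i$ is exactly the likelihood ratio between the Gibbs measures at consecutive temperatures. Substituting the definition of $\mu_{\rho_i}$ cancels the $w_\gam(\rho_i)$ factors and leaves $\E[W_i] = Z(\rho_i)^{-1}\sum_\Gamma \prod_{\gam\in\Gamma} w_\gam(\rho_{i+1}) = Z(\rho_{i+1})/Z(\rho_i)$.

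For the second moment, the key identity I would record first is that on an arithmetic schedule we have $2\rho_{i+1}-\rho_i = \rho_{i+2}$, and therefore
\[
\frac{w_\gam(\rho_{i+1})^2}{w_\gam(\rho_i)} \;=\; w_\gam\, e^{-(2\rho_{i+1}-\rho_i)|\gam|} \;=\; w_\gam(\rho_{i+2}).
\]
Squaring the likelihood-ratio expression for $W_i$, again substituting the definition of $\mu_{\rho_i}$, and using this identity inside the product gives $\E[W_i^2] = Z(\rho_i)^{-1}\sum_\Gamma \prod_{\gam\in\Gamma} w_\gam(\rho_{i+2}) = Z(\rho_{i+2})/Z(\rho_i)$.

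The global statements then follow mechanically: the samples $\Gamma_0,\dots,\Gamma_{\ell-1}$ are drawn independently, so $\E[W] = \prod_{i=0}^{\ell-1}\E[W_i]$ and $\E[W^2] = \prod_{i=0}^{\ell-1}\E[W_i^2]$. The first product telescopes to $Z(\rho_\ell)/Z(\rho_0) = Z(\rho_\ell)/Z(0)$. The second telescopes with stride two, producing $Z(\rho_\ell)Z(\rho_{\ell+1})/(Z(\rho_0)Z(\rho_1))$; one only needs to check that of the indices $\rho_0,\dots,\rho_{\ell+1}$, the numerator range $\rho_2,\dots,\rho_{\ell+1}$ and denominator range $\rho_0,\dots,\rho_{\ell-1}$ leave exactly $Z(\rho_\ell)Z(\rho_{\ell+1})$ in the numerator and $Z(0)Z(\rho_1)$ in the denominator. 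There is no real obstacle here; the only step that is not completely formal is verifying the arithmetic identity $2\rho_{i+1}-\rho_i = \rho_{i+2}$, and this is where the choice of equally spaced $\rho_i$ becomes essential, so I would flag it explicitly rather than bury it in a one-line computation.
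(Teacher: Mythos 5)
Your proposal is correct and matches the paper's proof essentially step for step: both compute $\E[W_i]$ and $\E[W_i^2]$ by plugging the definition of $\mu_{\rho_i}$ into the likelihood-ratio form of $W_i$ and exploiting the arithmetic spacing $\rho_{i+2}-\rho_{i+1}=\rho_{i+1}-\rho_i=1/n$ (the paper writes this as $e^{-i|\gam|/n}e^{-2|\gam|/n}=e^{-(i+2)|\gam|/n}$, the same fact you isolate as $2\rho_{i+1}-\rho_i=\rho_{i+2}$), and then both use independence of the $\Gamma_i$ to telescope the products. No gaps; your explicit flagging of where the equal spacing enters is a nice touch but not a departure from the paper's argument.
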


\begin{proof} 
In the proof, we use $W_i(\Gamma_i)$ to denote
$\prod_{\gam\in\Gamma_i} \frac{w_\gam(\rho_{i+1})}{w_\gam(\rho_i)}$.
	We deduce from the definition of $W_i$ that
	\begin{align*}
	\E[W_i] &= \sum_{\Gamma_i\in\Omega} \mu_{\rho_i}(\Gamma_i) W_i(\Gamma_i) 
	= \frac{1}{Z(\rho_i)} \sum_{\Gamma_i\in\Omega} \prod_{\gam\in\Gamma_i} w_\gam e^{-i|\gam|/n} \prod_{\gam\in\Gamma_i} e^{-|\gam|/n}\\ 
	&= \frac{1}{Z(\rho_i)} \sum_{\Gamma_i\in\Omega} \prod_{\gam\in\Gamma_i} w_\gam e^{-(i+1)|\gam|/n} = \frac{Z(\rho_{i+1})}{Z(\rho_i)}
	\end{align*}
	and that
	\begin{align*}
	\E[W_i^2] &= \sum_{\Gamma_i\in\Omega} \mu_{\rho_i}(\Gamma_i) W_i(\Gamma_i)^2 
	= \frac{1}{Z(\rho_i)} \sum_{\Gamma_i\in\Omega} \prod_{\gam\in\Gamma_i} w_\gam e^{-i|\gam|/n} \prod_{\gam\in\Gamma_i} e^{-2|\gam|/n}\\ 
	&= \frac{1}{Z(\rho_i)} \sum_{\Gamma_i\in\Omega} \prod_{\gam\in\Gamma_i} w_\gam e^{-(i+2)|\gam|/n} = \frac{Z(\rho_{i+2})}{Z(\rho_i)}.
	\end{align*}
	Since $W_0,\dots,W_{\ell-1}$ are mutually independent, we obtain
	\[
		\E[W] = \prod_{i=0}^{\ell-1} \E[W_i] = \prod_{i=0}^{\ell-1} \frac{Z(\rho_{i+1})}{Z(\rho_i)} = \frac{Z(\rho_\ell)}{Z(\rho_0)}
	\]
	and
	\[
		\E[W^2] = \prod_{i=0}^{\ell-1} \E[W_i^2] = \prod_{i=0}^{\ell-1} \frac{Z(\rho_{i+2})}{Z(\rho_i)} = \frac{Z(\rho_\ell) Z(\rho_{\ell+1})}{Z(\rho_0) Z(\rho_1)}. \qedhere 
	\]
\end{proof}

\begin{lemma}\label{lem:Z(rho_ell)}
	Suppose that $w_\gam\leq 1$ for all $\gam\in\cC(G)$. 
	Then we have
	\[
		1 \leq Z(\rho_\ell) \leq e^{\eps/2}.
	\]
\end{lemma}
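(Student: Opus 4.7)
The plan is to establish the two bounds separately. The lower bound is immediate: the empty configuration $\Gamma=\emptyset$ contributes $1$ to $Z(\rho_\ell)$ (by the convention stated in the definition of the partition function), and every other term $\prod_{\gam\in\Gamma} w_\gam(\rho_\ell)$ is non-negative since polymer weights are non-negative and $e^{-\rho_\ell|\gam|}>0$. Hence $Z(\rho_\ell)\geq 1$.

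For the upper bound, I would first pass from the constrained sum over $\Omega$ to an unconstrained product. Dropping the compatibility constraint gives
\[
Z(\rho_\ell) \;=\; \sum_{\Gamma \in \Omega} \prod_{\gam \in \Gamma} w_\gam(\rho_\ell) \;\leq\; \sum_{\Gamma \subseteq \cC(G)} \prod_{\gam \in \Gamma} w_\gam(\rho_\ell) \;=\; \prod_{\gam \in \cC(G)} \bigl(1 + w_\gam(\rho_\ell)\bigr),
\]
so using $\log(1+x)\leq x$ it suffices to show $S:=\sum_{\gam \in \cC(G)} w_\gam(\rho_\ell) \leq \eps/2$.

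To bound $S$, I would use the standard trick of charging each polymer to a vertex it contains: since $|\gam|\geq 1$,
\[
S \;\leq\; \sum_{v\in V(G)} \sum_{\gam \in \cC(G),\, v\in\gam} w_\gam\, e^{-\rho_\ell |\gam|} \;\leq\; \sum_{v\in V(G)} \sum_{k\geq 1}\; \sum_{\substack{\gam \in \cC(G):\\ v\in\gam,\, |\gam|=k}} e^{-\rho_\ell k},
\]
where the last inequality uses the hypothesis $w_\gam \leq 1$. The inner count is bounded by Lemma~\ref{lem:BCKL} together with the number of spin labellings: at most $(e\Delta)^{k-1}(q-1)^k$ polymers of size $k$ contain $v$. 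Therefore
\[
S \;\leq\; \frac{n}{e\Delta}\sum_{k\geq 1} \bigl((q-1)\, e\Delta\, e^{-\rho_\ell}\bigr)^k.
\]
By the choice $\rho_\ell = \ell/n \geq \log(4e(q-1)\Delta n/\eps)$, we have $(q-1)\, e\Delta\, e^{-\rho_\ell} \leq \eps/(4n) \leq 1/2$ (assuming $n\geq 1$ and $\eps\leq 1$), so the geometric series is bounded by twice its first term, yielding $S\leq (n/(e\Delta))\cdot 2\cdot \eps/(4n)\leq \eps/2$. Combining with the relaxation step gives $\log Z(\rho_\ell)\leq \eps/2$, i.e. $Z(\rho_\ell)\leq e^{\eps/2}$.

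The only nontrivial ingredient is the enumeration step, which is exactly what Lemma~\ref{lem:BCKL} provides; the rest is arithmetic. The main point to get right is the choice of constants in $\ell$: the factor $4e(q-1)\Delta n/\eps$ inside the logarithm is precisely what forces the geometric ratio below $1/2$ and makes the final bound $\eps/2$ come out cleanly.
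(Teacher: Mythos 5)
Your proof is correct and follows essentially the same route as the paper's: lower bound from the empty configuration, upper bound by relaxing to $\prod_{\gamma}(1+w_\gamma(\rho_\ell))$, applying $\log(1+x)\leq x$ together with $w_\gamma\leq 1$, counting polymers through a vertex via Lemma~\ref{lem:BCKL}, and using $\rho_\ell\geq\log(4e(q-1)\Delta n/\eps)$ to make the geometric series sum to at most $\eps/2$. The only (harmless) difference is that you retain the $1/(e\Delta)$ prefactor, giving a marginally tighter constant than the paper's bound.
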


\begin{proof} 
	It is trivial that $Z(\rho_\ell) \geq 1$ since $\emptyset \in \Omega$ has weight $1$. Meanwhile, we have the crude bound 
	\[
	Z(\rho_\ell) \leq \prod_{\gam\in\cC(G)} \left( 1+w_\gam \e^{-\ell|\gam|/n} \right).
	\]
	We then deduce that
	\begin{align*}
	\log(Z(\rho_\ell)) &\leq \sum_{\gam\in\cC(G)} w_\gam e^{-\ell|\gam|/n} 
	\leq \sum_{v\in V} \sum_{k \geq 1} \sum_{\substack{\gamma \in \cC(G):\\ |\gamma| = k,\; v\in{\gam}}} \left( e^{-\ell/n} \right)^k\\
	&\overset{(a)}{\leq} n \,\sum_{k\geq 1} \left( \frac{e(q-1)\Delta}{e^{\ell/n}} \right)^k 
	\overset{(b)}{\leq} n \,\sum_{k\geq 1} \left( \frac{\eps}{4n} \right)^k \leq n \cdot \frac{2\eps}{4n} = \frac{\eps}{2}
	\end{align*}
	where (a) follows from Lemma~\ref{lem:BCKL} and (b) from $\ell \geq n\log(4e(q-1)\Delta n/\eps)$.
\end{proof}

\begin{lemma}\label{lem:Z1/Z0}
	We have
	\[
		\frac{Z(\rho_1)}{Z(0)} \geq \frac{1}{e} \qquad\text{and}\qquad \frac{Z(\rho_{\ell+1})}{Z(\rho_\ell)} \leq 1.
	\]
\end{lemma}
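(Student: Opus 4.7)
The plan for both parts is to rewrite the relevant ratio of partition functions as an expectation under the Gibbs measure at the smaller value of $\rho$, and then to exploit the elementary fact that polymers in a valid configuration are vertex-disjoint.

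Concretely, since $w_\gamma(\rho_{i+1}) = w_\gamma(\rho_i)\, e^{-|\gamma|/n}$, I would factor $w_\gamma(\rho_i)$ out of the numerator of the ratio and write
$$\frac{Z(\rho_{i+1})}{Z(\rho_i)} \;=\; \sum_{\Gamma \in \Omega} \mu_{\rho_i}(\Gamma)\, e^{-\sum_{\gamma \in \Gamma}|\gamma|/n} \;=\; \E_{\Gamma \sim \mu_{\rho_i}}\!\Big[e^{-\sum_{\gamma \in \Gamma}|\gamma|/n}\Big],$$
and then apply this identity with $i=0$ (for the lower bound) and $i=\ell$ (for the upper bound).

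For the first bound, I would observe that any $\Gamma \in \Omega$ consists of pairwise compatible polymers, which by definition are at host-graph distance at least $2$ and hence have pairwise disjoint vertex sets; therefore $\sum_{\gamma \in \Gamma}|\gamma| \leq |V(G)| = n$. The integrand in the expectation is thus pointwise at least $e^{-1}$, which immediately gives $Z(\rho_1)/Z(\rho_0) \geq 1/e$. For the second bound, the same integrand is pointwise at most $1$ since every $|\gamma|/n$ is non-negative, yielding $Z(\rho_{\ell+1})/Z(\rho_\ell) \leq 1$.

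There is essentially no obstacle here: the lemma is a two-line calculation once one recognises that the ratio of partition functions equals the expectation of the weight-shift factor. The only thing that needs to be invoked is the vertex-disjointness of polymers in a valid configuration, and that follows immediately from the compatibility relation defined in Section~\ref{sec:absolmod}. Note in particular that, unlike Lemma~\ref{lem:Z(rho_ell)}, neither of the two bounds requires the assumption $w_\gamma \leq 1$.
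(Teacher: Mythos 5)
Your proof is correct and takes essentially the same route as the paper: the lower bound is obtained exactly as there, by writing $Z(\rho_1)/Z(0)=\E_{\Gamma\sim\mu_{\rho_0}}\big[e^{-\sum_{\gamma\in\Gamma}|\gamma|/n}\big]$ (the identity of Lemma~\ref{lem:W_i-moments}) and using that the vertex-disjoint polymers of a configuration cover at most $n$ vertices, so the integrand is at least $e^{-1}$. For the second bound the paper just observes that $w_\gamma(\rho)$, and hence $Z(\rho)$, is decreasing in $\rho$, which is the same trivial fact as your pointwise bound of the integrand by $1$; and you are right that neither bound uses the assumption $w_\gamma\le 1$.
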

\begin{proof}
	Since the weight function $w_\gam(\rho)$ is decreasing in $\rho$, the partition function $Z(\rho)$ is also decreasing, which implies $Z(\rho_{\ell+1}) \leq Z(\rho_\ell)$. 
	On the other hand, recalling Lemma~\ref{lem:W_i-moments}, we have
	\[
	\frac{Z(\rho_1)}{Z(0)} = \E[W_0] = \E \bigg[ \prod_{\gam \in \Gamma_0} e^{-|\gam|/n} \bigg]
	\]
	where $\Gamma_0$ is sampled from $\mu_{\rho_0}$. 
	Notice that for any $\Gamma_0\in\Omega$ we have
	\[
	\prod_{\gam \in \Gamma_0} e^{-|\gam|/n} 
	= \exp\bigg( - \frac{1}{n} \sum_{\gam\in\Gamma_0} |\gam| \bigg) \geq \frac{1}{e}.
	\]
	Thus, the lemma follows.
\end{proof}

We are now ready to prove Theorem~\ref{thmPolyCount} which we restate for convenience.

\thmPolyCount*

\begin{proof} 
	We first assume that we have access to an exact sampler $\cSexact$ that samples from $\mu_{\rho}$ for all $\rho\geq 0$. 
Using this sampler in the Polymer approximate counting algorithm, we find that,
	  for each $j$ and each $i$, $\Gamma_i^{(j)}$ is an exact sample from the distribution $\mu_{\rho_i}$ and hence $W_i^{(j)}$ is an exact sample of $W_i$, independently for every $j$ and $i$. Thus, $W^{(j)}$ is a sample of $W$ independently for every $j$, and $\widehat{W}$ is the sample mean of $W^{(j)}$'s. We deduce from Lemmas~\ref{lem:W_i-moments} and \ref{lem:Z(rho_ell)} that
	\[
		(1+\eps/2) \E[W] \leq \frac{e^{\eps/2} Z(\rho_\ell)}{Z(0)} \leq \frac{e^{\eps}}{Z(0)}
	\]
	and
	\[
		(1-\eps/2) \E[W] \geq \frac{e^{-\eps} Z(\rho_\ell)}{Z(0)} \geq \frac{e^{-\eps}}{Z(0)}
	\]
	where we use $1+\eps/2 \leq e^{\eps/2}$ and $e^{-\eps} \leq 1-\eps/2$ for all $0<\eps<1$. 
	Then 
	\[
		\Pr\left( \frac{e^{-\eps}}{Z(0)} \leq \widehat{W} \leq \frac{e^{\eps}}{Z(0)} \right) \geq \Pr\left( \left|\widehat{W} - \E[W]\right| \leq (\eps/2)\E[W] \right).
	\]
	By Chebyshev's inequality we have
	\begin{align*}
		\Pr\left( \left|\widehat{W} - \E[W]\right| \geq (\eps/2)\E[W] \right) \leq \frac{4\,\mathrm{Var}(W)}{\eps^2 m \left(\E[W]\right)^2} 
		\leq \frac{4(e-1)}{\eps^2 m} \leq \frac{1}{8}
	\end{align*}
	where the second to last inequality follows from Lemmas~\ref{lem:W_i-moments} and \ref{lem:Z1/Z0}: 
	\[
		\frac{\mathrm{Var}(W)}{\left(\E[W]\right)^2} = \frac{\E[W^2]}{\left(\E[W]\right)^2} - 1 = \frac{Z(0)}{Z(\rho_1)} \frac{Z(\rho_{\ell+1})}{Z(\rho_\ell)} - 1 \leq e-1.
	\]
	Thus, we deduce that
	\[
	\Pr\left( e^{-\eps} Z(0) \leq \widehat{Z} \leq e^{\eps} Z(0) \right) = 
	\Pr\left( \frac{e^{-\eps}}{Z(0)} \leq \hat{W} \leq \frac{e^{\eps}}{Z(0)} \right) \geq \frac{7}{8},
	\]
so the error probability is at most~$1/8$. 
Note that the number of samples that we used is~$\ell m$.
	
	Now we replace the exact sampling oracle $\cSexact$ by an approximate one. 
	For every $\rho \geq 0$, the polymer model $(\cC(G),w(\rho))$ is computationally feasible and satisfies the polymer sampling condition \eqref{eqSampleCondition}. 
	Thus, 
for any $\rho\geq 0$,	
	Theorem~\ref{thmPolySample} gives a randomized algorithm $\mathcal{S}$ that outputs a $1/(8 \ell m)$-approximate sample from $\mu_{\rho}$. We then couple $\mathcal{S}$ and $\cSexact$ optimally and run the algorithm with both $\mathcal{S}$ and $\cSexact$ simultaneously, so that for any $\rho\geq 0$ samples from $\mathcal{S}$ and $\cSexact$ for $\mu_{\rho}$ coincide with probability at least $1-1/(8\ell m)$.
	Let $\mathcal{B}$ be the event that at least one of the $\ell m$ samples from $\mathcal{S}$ in the algorithm does not couple with that from $\cSexact$. Then a union bound yields $\Pr(\mathcal{B}) \leq 1/8$. 
	Let $\mathcal{F}$ be the event that the algorithm using $\cSexact$ fails. From our argument before we see that $\Pr(\mathcal{F}) \leq 1/8$. 
	Note that if neither of $\mathcal{B}$ and $\mathcal{F}$ happens, then the algorithm with $\mathcal{S}$ will output a desired estimate. 
	Hence, we conclude from the union bound that the algorithm with $\mathcal{S}$ fails with probability at most
	\[
	\Pr(\mathcal{B}) + \Pr\left( \mathcal{F} \right) \leq \frac{1}{8} + \frac{1}{8} = \frac{1}{4}.
	\]
	
	Finally, we consider the running time of our algorithm.
	By Theorem~\ref{thmPolySample} the running time of step 2(a)(i) is $O(n\log(8\ell m n) \log(8 \ell m) ) = O(n\log^2 (n/\eps))$, and for step 2(a)(ii) the running time is $O(n)$.  
	Thus, the  running time of the algorithm is upper bounded by $\ell m \cdot O(n\log^2(n/\eps)) = O((n/\eps)^2 \log^3(n/\eps))$. 
\end{proof}

\section{Applications}
\label{secApply}

Here we apply our results on subset polymer models to several approximate counting and sampling problems at low temperatures.

\subsection{Ferromagnetic Potts model}\label{sec:appPotts}
In this section, we prove Theorem~\ref{thmPotts} for the Potts model. Throughout this section, we will work under the assumptions/conditions of Theorem~\ref{thmPotts}.
That is, we fix a real number $\alpha>0$, integers $q \geq 2$ and $\Delta \geq 3$ and a real number
$\beta\geq \frac{5+3\log((q-1)\Delta)}{\alpha}$.
We let $\mathcal{G}$ be the class of $\alpha$-expander graphs~$G$ with maximum degree at most~$\Delta$.

Consider the polymer model defined in Example~\ref{egPotts} 
on an $n$-vertex graph $G\in \mathcal{G}$
with $M = n/2$ and ground state color $g \in [q]$. We will use $\mathcal{C}^g=\mathcal{C}^g(G)$ to denote the polymers and $w_\gamma^g$ to denote the weight of a polymer $\gamma\in \mathcal{C}^g$; recall that $w_\gamma^g = e^{-\beta B(\gamma)}$, where $B(\gamma)$ counts the number of external edges of $ \gamma$ plus the number of bichromatic internal edges.  Let $Z^g(G)$ be the partition function of the polymer model $(\mathcal{C}^g(G), w^g)$.    
\begin{lemma}\label{SampleConditionPotts}
Under the conditions of Theorem~\ref{thmPotts}, the polymer model 
$(\mathcal{C}^g(\cdot), w^g,\mathcal{G})$ 
satisfies the polymer sampling condition \eqref{eqSampleCondition} with $\tau=\alpha \beta$.
\end{lemma}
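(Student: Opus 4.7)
The plan is to show directly that $B(\gamma) \geq \alpha |\gamma|$ for every allowed polymer $\gamma$, which immediately gives $w_\gamma^g = e^{-\beta B(\gamma)} \leq e^{-\alpha\beta|\gamma|}$. Since we are assuming $\beta \geq \frac{5+3\log((q-1)\Delta)}{\alpha}$, the constant $\tau=\alpha\beta$ automatically satisfies $\tau \geq 5+3\log((q-1)\Delta)$, matching the hypothesis of the polymer sampling condition.

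The key observation is that $B(\gamma)$ counts the number of bichromatic internal edges plus the size of the edge boundary of $\gamma$ in $G$, so in particular $B(\gamma) \geq e(\gamma, V\setminus \gamma)$. Here $\gamma$ is a connected subgraph of $G$ of size at most $M = n/2$. Thus the bound $e(\gamma, V \setminus \gamma) \geq \alpha |\gamma|$ is precisely the statement of $\alpha$-expansion applied to the vertex set $\gamma$, which is allowed because $|\gamma| \leq n/2$. Chaining these two inequalities gives $B(\gamma) \geq \alpha|\gamma|$ as required.

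I do not expect any genuine obstacle here. The proof is essentially bookkeeping: one checks the definition of $B(\gamma)$, discards the bichromatic internal edges (which only helps), and then invokes the expansion hypothesis on the vertex set of $\gamma$ (using $|\gamma| \leq M = n/2$ to be in the range where expansion applies). The only thing to double-check is that the choice $M = n/2$ in the Potts polymer model indeed guarantees $|\gamma| \leq n/2$ so that expansion is applicable, and that $\tau=\alpha\beta$ meets the numerical threshold $5+3\log((q-1)\Delta)$, both of which are immediate from the statement of Theorem~\ref{thmPotts}.
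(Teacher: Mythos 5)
Your proof is correct and is essentially the paper's own argument: the paper's one-line proof likewise uses that every $G\in\mathcal{G}$ is an $\alpha$-expander to get $B(\gamma)\geq \alpha|\gamma|$ and hence $w^g_\gamma\leq e^{-\alpha\beta|\gamma|}$, with the threshold $\tau=\alpha\beta\geq 5+3\log((q-1)\Delta)$ following from the hypothesis on $\beta$. You have merely spelled out the (correct) bookkeeping that $B(\gamma)$ dominates the edge boundary and that $|\gamma|\leq M=n/2$ puts $\gamma$ in the range where expansion applies.
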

\begin{proof}
Since 
every $G\in \mathcal{G}$
is an $\alpha$-expander, for $\gamma\in \mathcal{C}^g$ we have $B(\gamma) \ge \alpha|\gamma|$ and hence $w_\gamma^g \leq e^{-\tau|\gamma|}$.
\end{proof}

The following lemma is from \cite{JKP2}.    
\begin{lemma}[{\cite[Lemma 12]{JKP2}}]
\label{lemPottsApprox}
For any $n$-vertex $\alpha$-expander graph $G$ and $\beta\ge 2 \log (eq)/\alpha$, $q Z^g(G)$ is an $e^{-n}$-approximation of the Potts partition function $Z_{G,\beta}$.
\end{lemma}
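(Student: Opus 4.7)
The plan is to interpret $Z^g(G)$ as $Z_{G,\beta}$ restricted to Potts colorings in which every non-$g$ connected component has at most $n/2$ vertices, then to show via the expansion hypothesis that this restriction discards only exponentially little weight. Under the bijection of Example~\ref{egPotts}, a compatible polymer configuration $\Gamma$ corresponds to the $q$-coloring $\sigma_\Gamma$ assigning $g$ to $V\setminus\bigcup_{\gamma\in\Gamma}\gamma$ and the specified non-$g$ color to each polymer. The distance-$2$ compatibility condition ensures that the polymers in $\Gamma$ are exactly the non-$g$ connected components of $\sigma_\Gamma$ and that $\sum_{\gamma\in\Gamma}B(\gamma)=m(G,\sigma_\Gamma)$, so with $M=n/2$ we get $Z^g(G)=\sum_{\sigma\in\Omega^g_{n/2}(G)}e^{-\beta m(G,\sigma)}$, where $\Omega^g_{n/2}(G)$ is the set of $q$-colorings whose non-$g$ components all have size at most $n/2$.

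By color-permutation symmetry of the ferromagnetic Potts model, $Z^{g'}(G)=Z^{g}(G)$ for every $g'\in[q]$, so setting $f(\sigma):=|\{g'\in[q]:\sigma\in\Omega^{g'}_{n/2}(G)\}|$ we get $qZ^g(G)=\sum_\sigma f(\sigma)e^{-\beta m(G,\sigma)}$, and hence
\[
|qZ^g(G)-Z_{G,\beta}| \;\leq\; q\sum_{\sigma:\,f(\sigma)\neq 1} e^{-\beta m(G,\sigma)}.
\]
Since the $q$ monochromatic colorings alone give $Z_{G,\beta}\geq q$, the lemma will follow from the bound $\sum_{\sigma:\,f(\sigma)\neq 1}e^{-\beta m(G,\sigma)}\leq e^{-n}$. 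The core combinatorial claim is that $f(\sigma)\neq 1$ forces $m(G,\sigma)\geq \alpha n/2$. If $f(\sigma)=0$, then every color class has size at most $n/2$ (otherwise the majority color would automatically be good, since its non-complement has fewer than $n/2$ vertices and its components therefore trivially fit inside), and applying $\alpha$-expansion to each $\sigma^{-1}(g)$ and summing gives
\[
2m(G,\sigma) \;=\; \sum_{g\in[q]} e\bigl(\sigma^{-1}(g), V\setminus \sigma^{-1}(g)\bigr) \;\geq\; \alpha \sum_{g\in[q]} |\sigma^{-1}(g)| \;=\; \alpha n.
\]
If instead $f(\sigma)\geq 2$, then two distinct colors $g_1,g_2$ are both good; either the color classes are ``balanced'' and expansion applied directly to $\sigma^{-1}(g_1)$ or $\sigma^{-1}(g_2)$ yields a cut of size $\geq \alpha n/2$, or one of $\sigma^{-1}(g_i)$ splits the complement of the other into connected pieces of size at most $n/2$, in which case extracting a suitable piece and applying expansion to it again produces the required linear-in-$n$ number of bichromatic edges.

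Granting the claim, the crude bound of $q^n$ colorings combined with $\beta\geq 2\log(eq)/\alpha$ yields $\sum_{\sigma:\,f(\sigma)\neq 1}e^{-\beta m(G,\sigma)}\leq q^n e^{-\beta\alpha n/2}\leq q^n(eq)^{-n}=e^{-n}$, completing the proof plan. The main obstacle is the $f(\sigma)\geq 2$ case: two simultaneously good colors is a subtler constraint than having a single majority color, and the $\alpha$-expansion hypothesis must be used together with the specific threshold $M=n/2$ in both directions to conclude that the bichromatic edge count is at least $\alpha n/2$ with the tight constants needed for the assumption $\beta\geq 2\log(eq)/\alpha$ to yield an $e^{-n}$ error. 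The $f(\sigma)=0$ case, the polymer-to-coloring bijection, and the final counting step are by comparison routine.
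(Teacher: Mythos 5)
The paper itself does not prove this lemma---it is quoted from \cite[Lemma 12]{JKP2}---so your attempt has to stand on its own. Its architecture (identify $Z^g(G)$ with the Potts weight of $\Omega^g_{n/2}(G)$, use color symmetry to write $qZ^g(G)=\sum_\sigma f(\sigma)e^{-\beta m(G,\sigma)}$, and control the colorings with $f(\sigma)\neq 1$ by showing each has at least $\alpha n/2$ bichromatic edges) is the right one, and the polymer-to-coloring identification, the $f(\sigma)=0$ case, and the final counting step are all correct. But the $f(\sigma)\ge 2$ case---which you yourself flag as ``the main obstacle''---is a genuine gap, and neither of the routes you sketch closes it. The ``balanced'' branch only works when one of the two good color classes has size essentially exactly $n/2$, since expansion applies only to sets of size at most $n/2$ while you need a cut of size at least $\alpha n/2$. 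The ``extract a suitable piece'' branch can only apply expansion to a union $S$ of components with $|S|\le n/2$ and gives no control forcing $|S|$ up to $n/2$; greedy grouping of components of size at most $n/2$ only guarantees something like $|S|\ge n/4$, i.e.\ $m\ge\alpha n/4$, and this lost factor of $2$ is fatal: with $\beta\ge 2\log(eq)/\alpha$ the crude union bound then gives only $q^n(eq)^{-n/2}$, which is not $e^{-n}$.

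The missing idea is a per-component use of expansion that sidesteps the balancing issue. If $\sigma\in\Omega^{g}_{n/2}(G)$, every connected component $\gamma$ of the non-$g$ vertices has $|\gamma|\le n/2$, so $e(\gamma,\gamma^c)\ge\alpha|\gamma|$; moreover, by maximality of the component every edge leaving $\gamma$ ends at a $g$-colored vertex, hence is bichromatic, and these boundary edge sets are pairwise disjoint over distinct components. Summing over components yields $m(G,\sigma)\ge\alpha\bigl(n-|\sigma^{-1}(g)|\bigr)$ for \emph{every} good color $g$. If now $f(\sigma)\ge 2$ with good colors $g_1\neq g_2$, then $|\sigma^{-1}(g_1)|+|\sigma^{-1}(g_2)|\le n$, so the smaller of the two classes has size at most $n/2$, and applying the inequality to that color gives $m(G,\sigma)\ge\alpha n/2$, exactly what your counting step needs. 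With that claim inserted, your argument goes through; only note in passing that the additive bound $|qZ^g(G)-Z_{G,\beta}|\le q e^{-n}$ still has to be converted to the multiplicative $e^{\pm e^{-n}}$ form of approximation, which works (with room to spare) because $Z_{G,\beta}\ge q$.
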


We are now ready to prove Theorem~\ref{thmPotts}.
\thmPotts*

\begin{proof} 
Let $\mathcal{G}$ be the class of $\alpha$-expander graphs of maximum degree at most~$\Delta$.
Clearly, the polymer models 
$(\mathcal{C}^g(\cdot), w^g,\mathcal{G})$  
are computationally feasible. By Lemma~\ref{SampleConditionPotts}, the models also satisfy the polymer sampling condition and therefore Theorems~\ref{thmPolySample} and~\ref{thmPolyCount} apply. 
Consider any $n$-vertex graph $G\in \mathcal{G}$. Since
$\beta\geq\frac{ 5+ 3 \log ((q-1)\Delta)   }{\alpha   } > \frac{2 \log (eq)}{\alpha}$, 
Lemma~\ref{lemPottsApprox} applies 
to~$G$.

For the sampling algorithm, we pick a color $g\in [q]$ uniformly at random and generate an $(\epsilon/q)$-approximate sample from the Gibbs measure associated to $Z^g(G)$ using the algorithm of Theorem~\ref{thmPolySample}, in time $O(n \log(n/\epsilon) \log(1/\epsilon))$. By Lemma~\ref{lemPottsApprox}, we conclude  that the resulting output is an $\epsilon$-approximate sample for the Potts model.

For the counting algorithm, we pick an arbitrary  $g\in [q]$ and produce using the algorithm of Theorem~\ref{thmPolyCount} a number $\hat{Z}$ in time 
$O((n/\epsilon)^2 \log^3(n/\epsilon))$, 
which is an  
$\epsilon/(2q)$-approximation 
to $Z^g(G)$ with probability $\geq 3/4$. By Lemma~\ref{lemPottsApprox}, we conclude  that $q\hat{Z}$ is an $\epsilon$-approximation for the partition function of the Potts model (with the same probability).
\end{proof}

\subsection{Hard-core model}\label{sec:apphardcore}
In this section, we prove Theorem~\ref{thmHC} for the hard-core  model.  

Suppose $G=(\Vpart{0},\Vpart{1},E)$ is an $n$-vertex bipartite $\alpha$-expander graph of maximum degree $\Delta$. We will consider the  hard-core model on $G$ at sufficiently large fugacities $\lambda$.  There are two relevant ground states corresponding to the two parts of $G$, one is the independent set given by $\Vpart{0}$  and the other is given by $\Vpart{1}$.  We will capture deviations from the two ground states using the ``even'' and ``odd'' polymer models of Jenssen, Keevash and Perkins~\cite{JKP2}. We remark that similar models were considered independently by Liao,  Lin, Lu, and Mao \cite{liao2019counting}.

For $i\in \{0,1\}$, we say a set $S \subseteq \Vpart{i}$ is \textit{small} if $|S| \le |\Vpart{i}|/2$.  In particular, Definition~\ref{defbipartiteexpander} requires that small sets expand.   

Following \cite{JKP2}, we will define a polymer model $(\Cpart{i}(G^2),\wpart{i})$; note that the host graph\footnote{$G^2$ is the graph on vertex set $V(G)$, where two vertices are connected if their distance in $G$ is at most 2.} is $G^2$,  rather than $G$. 
 The set $\Cpart{i} = \Cpart{i}(G^2)$ of allowed polymers consists of all small sets $\gamma \subseteq \Vpart{i}$ which are connected subgraphs in $G^2$. The set of spins is $\{0,1\}$ and the ground state spin for a vertex $v$ is $g_v=0$ if $v\in \Vpart{i}$, and $g_v=1$ if $v\in \Vpart{1-i}$; the spin assignment $\sigma_\gamma$ for a polymer $\gamma$ gives the spin $1$ to each $v \in \gamma$.   The weight of a polymer $\gamma\in \Cpart{i}$ is defined as 
\begin{equation}\label{eq:4v4tbtyb5y}
\wpart{i}_\gamma = \frac{\lambda^{| \gamma|}}{(1+\lambda)^{N_G(\gamma)}}, 
\end{equation} 
where we recall that $N_G(\gamma)$ denotes the set of vertices in $G$ which are adjacent to some vertex in $\gamma$. The key observation behind the definition of the weights 
is that for a set $\Gamma$ of compatible polymers   from $\Cpart{i}$, the contribution to $Z_{G,\lambda}$ of all independent sets $I$ with $I\cap \Vpart{i}= \bigcup_{\gamma \in \Gamma} \gamma$ is exactly 
\begin{align*}
(1+\lambda)^{|\Vpart{1-i}|} \prod_{\gamma \in \Gamma} \wpart{i}_\gamma \, ,
\end{align*}
see \cite[Proof of Lemma 19]{JKP2} for more details.

 Let $Z^i(G)$ denote the partition function of the polymer model $(\Cpart{i}, \wpart{i})$ (where two polymers are compatible if their distance in the host graph $G^2$ is at least 2). Using that $G$ is an $\alpha$-expander, we have the following lemma from \cite{JKP2}.
\begin{lemma}[{\cite[Lemma 19]{JKP2}}]
\label{lemHCApprox1}
For any $\lam \ge e^{11/\alpha}$ and any $n$-vertex graph $G=(\Vpart{0},\Vpart{1},E)$ which is a bipartite $\alpha$-expander, the number 
\[(1+\lambda)^{|\Vpart{1}|} Z^0(G) + (1+\lambda)^{|\Vpart{0}|} Z^1(G)\]
is an $e^{-n}$-approximation of the hard-core partition function $Z_{G,\lambda}$.
\end{lemma}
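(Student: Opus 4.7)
The strategy is to interpret each polymer partition function $Z^i(G)$ as a restricted sum over independent sets of $G$, then apply inclusion-exclusion and use the bipartite expansion property to control the remaining error.

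First, define $\mathcal{I}^i(G)$ to be the set of independent sets $I$ of $G$ such that every connected component of $I\cap V^i$ in the graph $G^2$ has size at most $|V^i|/2$. I would verify the identity
\[
(1+\lambda)^{|V^{1-i}|}Z^i(G) = \sum_{I\in\mathcal{I}^i(G)}\lambda^{|I|}
\]
by bijecting polymer configurations $\Gamma\in\Omega$ with pairs $(A,B)$, where $A=\bigcup_{\gamma\in\Gamma}\gamma\subseteq V^i$ and $B\subseteq V^{1-i}\setminus N_G(A)$. The key observations are that distinct $G^2$-components of $A$ are at $G^2$-distance at least $2$ (hence compatible as polymers) and their $G$-neighborhoods in $V^{1-i}$ are pairwise disjoint, so the product $\prod_\gamma \lambda^{|\gamma|}/(1+\lambda)^{|N_G(\gamma)|}$ together with the prefactor $(1+\lambda)^{|V^{1-i}|}$ precisely counts $\sum_{B\subseteq V^{1-i}\setminus N_G(A)}\lambda^{|A|+|B|}$.

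With this identity in hand, inclusion-exclusion yields
\[
(1+\lambda)^{|V^1|}Z^0(G) + (1+\lambda)^{|V^0|}Z^1(G) - Z_{G,\lambda} = \sum_{I\in\mathcal{I}^0\cap\mathcal{I}^1}\lambda^{|I|} - \sum_{I\notin\mathcal{I}^0\cup\mathcal{I}^1}\lambda^{|I|},
\]
so it suffices to bound each error sum by $\tfrac{1}{2}e^{-n}Z_{G,\lambda}$, where $Z_{G,\lambda}\ge(1+\lambda)^{\lceil n/2\rceil}$ thanks to the ground states $I=V^0$ and $I=V^1$. For the overlap term, writing $A=I\cap V^0$ and $B=I\cap V^1$ for $I\in\mathcal{I}^0\cap\mathcal{I}^1$, each $G^2$-component of $A$ (resp.\ $B$) is a small subset of $V^0$ (resp.\ $V^1$), and distinct components have disjoint $G$-neighborhoods, so expansion gives $|N_G(A)|\ge(1+\alpha)|A|$ and $|N_G(B)|\ge(1+\alpha)|B|$. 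Combining with $B\subseteq V^1\setminus N_G(A)$ and $A\subseteq V^0\setminus N_G(B)$ yields $|A|+(1+\alpha)|B|\le|V^0|$ and $|B|+(1+\alpha)|A|\le|V^1|$; adding these gives $(2+\alpha)|I|\le n$, so $|I|\le n/(2+\alpha)$. A crude $2^n$ bound on the number of such $I$ then produces $\sum_{I\in\mathcal{I}^0\cap\mathcal{I}^1}\lambda^{|I|}\le 2^n\lambda^{n/(2+\alpha)}$, and a direct check confirms this is at most $\tfrac{1}{2}e^{-n}(1+\lambda)^{n/2}$ once $\lambda\ge e^{11/\alpha}$.

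The main obstacle is the uncovered term. If $I\notin\mathcal{I}^0\cup\mathcal{I}^1$, then $I\cap V^0$ contains a (necessarily unique) $G^2$-component $C_0$ with $|C_0|>|V^0|/2$, and similarly $C_1\subseteq I\cap V^1$ with $|C_1|>|V^1|/2$. The complements $V^0\setminus C_0$ and $V^1\setminus C_1$ are small, so expansion applies to them; combined with $N_G(C_0)\cap C_1 = N_G(C_1)\cap C_0 = \emptyset$ (since $C_0,C_1\subseteq I$ and $I$ is independent), one can force $|I|$ to fall short of $\max(|V^0|,|V^1|)$ by an additive $\Omega(\alpha n)$. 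Consequently $\lambda^{|I|}$ is exponentially smaller than the ground-state contribution $(1+\lambda)^{\max(|V^0|,|V^1|)}$ to $Z_{G,\lambda}$, and the $2^n$ counting bound is beaten once $\lambda\ge e^{11/\alpha}$. The delicate point is keeping careful track of the neighborhood overlaps between $N_G(C_0)$, $N_G(C_1)$, $N_G(V^0\setminus C_0)$ and $N_G(V^1\setminus C_1)$ in this two-sided expansion argument so as to extract the clean $\Omega(\alpha n)$ deficit in $|I|$.
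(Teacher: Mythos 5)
The paper itself does not prove this statement --- it is imported verbatim from \cite[Lemma 19]{JKP2} --- so the comparison point is the proof structure the paper records immediately after the lemma: by \cite[Lemma 17]{JKP2}, the quantity $(1+\lambda)^{|V^1|}Z^0(G)+(1+\lambda)^{|V^0|}Z^1(G)$ counts the contribution of \emph{every} independent set, the only error being that the ``sparse'' independent sets (those whose $G^2$-components on both sides are all small) are counted twice, and their relative weight is at most $e^{-n}$. Your identity $(1+\lambda)^{|V^{1-i}|}Z^i(G)=\sum_{I\in\mathcal{I}^i(G)}\lambda^{|I|}$ (via the disjointness of the $G$-neighborhoods of distinct $G^2$-components) is exactly the ``key observation'' stated below \eqref{eq:4v4tbtyb5y}, and your treatment of the overlap term is correct and is the heart of the matter: the inequalities $(1+\alpha)|A|+|B|\le|V^1|$ and $(1+\alpha)|B|+|A|\le|V^0|$ give $|I|\le n/(2+\alpha)$, and comparing $2^n\lambda^{n/(2+\alpha)}$ with $(1+\lambda)^{\lceil n/2\rceil}\ge\lambda^{n/2}$ indeed works for $\lambda\ge e^{11/\alpha}$, $\alpha\in(0,1)$ (your constants have slack only for very small $n$, a regime that is degenerate for the cited statement anyway).

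The genuine gap is the ``uncovered'' term, which you explicitly leave unresolved (``the main obstacle'', ``the delicate point''). The missing idea is that no delicate two-sided bookkeeping is needed because this set is essentially empty: an independent set cannot have a $G^2$-component occupying more than half of \emph{both} sides. Indeed, if $C_0\subseteq I\cap V^0$ and $C_1\subseteq I\cap V^1$ with $|C_0|>|V^0|/2$ and $|C_1|>|V^1|/2$, then independence gives $N_G(C_1)\subseteq V^0\setminus C_0$, so $|N_G(C_1)|<|V^0|/2$, while expansion applied to any subset of $C_1$ of size $\lfloor|V^1|/2\rfloor$ gives $|N_G(C_1)|\ge(1+\alpha)\lfloor|V^1|/2\rfloor$; hence $(1+\alpha)\lfloor|V^1|/2\rfloor<|V^0|/2$, and symmetrically $(1+\alpha)\lfloor|V^0|/2\rfloor<|V^1|/2$. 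Adding the two forces $(1+\alpha)(n-2)<n$, i.e.\ $n<2+2/\alpha$, impossible except for trivially small $n$. This is precisely what \cite[Lemma 17]{JKP2} supplies (``every independent set is counted''), making the error one-sided and reducing the whole lemma to the sparse/overlap bound you already proved. As written, however, your proposal establishes only the overlap bound and sketches an unfinished plan for the other case (your proposed route of extracting an $\Omega(\alpha n)$ deficit in $|I|$ and beating the $2^n$ count would, in the only regime where uncovered sets could exist, not obviously beat the loss from $\lambda=e^{\Theta(1/\alpha)}$), so the proof is incomplete until this case is closed.
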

In particular, \cite[Lemma 17]{JKP2} shows that $(1+\lambda)^{|\Vpart{1}|} Z^0(G) + (1+\lambda)^{|\Vpart{0}|} Z^1(G)$ counts the contribution to  $Z_{G,\lambda}$ of every independent set $I$ of $G$, but some independent sets are double counted: those independent sets $I$  for which the $2$-connected components of $\Vpart{0} \cap I$ and $\Vpart{1} \cap I$ are all small. We call these independent sets \textit{sparse}.   The proof of \cite[Lemma 19]{JKP2} shows that the relative contribution to $Z_{G,\lambda}$ of sparse independent sets is at most $e^{-n}$. 

We are now ready to prove Theorem~\ref{thmHC}.
\thmHC*
\begin{proof} 
First note that $\lambda \geq (3\Delta)^{6/\alpha}\geq 9^{6/\alpha}>e^{11/\alpha}$, so Lemma~\ref{lemHCApprox1} applies. 
Let $\mathcal{G}$ denote the set of host graphs $G^2$ corresponding to bipartite $\alpha$-expanders~$G$ of maximum degree~$\Delta$.
Noting that the polymer models 
$(\mathcal{C}^i(\cdot),w^i,\mathcal{G})$
are computationally feasible, we  verify the polymer sampling condition~\eqref{eqSampleCondition} for them. Fix arbitrary $i\in\{0,1\}$.   As in   \cite[Section 4.2]{JKP}, we have the bound
\[\wpart{i}_\gamma=\frac{\lambda^{| \gamma|}}{(1+\lambda)^{N_G(\gamma)}}\leq \frac{\lambda^{| \gamma|}}{(1+\lambda)^{(1+\alpha)|\gamma|}}\leq \lambda^{-\alpha|\gamma|},\]
so, using that $\lambda \geq (3\Delta)^{6/\alpha}$, we have that the models satisfy the polymer sampling condition with $\tau=\alpha\log \lambda\geq 6\log (3\Delta)\geq 5+3\log \Delta^2$.  Therefore, we may also apply Theorems~\ref{thmPolySample} and~\ref{thmPolyCount}.

For the counting algorithm, we apply Theorem~\ref{thmPolyCount}. Namely, by taking the median of $O(\log(1/\epsilon))$ trials,  we can obtain $\hat{Z}^0$ and $\hat{Z}^1$ which are $(\eps/32)$-approximations to $Z^0(G)$ and $Z^1(G)$, respectively, with probability at least $1-\epsilon/32$. Let $\mathcal{E}$ be the event that $\hat{Z}^0$ and $\hat{Z}^1$ are indeed $(\eps/32)$-approximations to $Z^0(G)$ and $Z^1(G)$. Conditioned on $\mathcal{E}$, the number
\[\hat{Z}=(1+\lambda)^{|\Vpart{1}|} \hat{Z}^0 + (1+\lambda)^{|\Vpart{0}|} \hat{Z}^1\]
is an $(\eps/32)$-approximation to the number $A=(1+\lambda)^{|\Vpart{1}|} Z^0(G) + (1+\lambda)^{|\Vpart{0}|} Z^1(G)$. By Lemma~\ref{lemHCApprox1} and since $\epsilon\geq 4e^{-n}$,  $A$ is  an $(\epsilon/4)$-approximation to $Z_{G,\lambda}$ and hence $\hat{Z}$ is an $\eps$-approximation to $Z_{G,\lambda}$. Since $\mathcal{E}$ occurs with probability at least $1-\epsilon/16$, we obtain that $\hat{Z}$ is the desired approximation for the counting algorithm.

For the sampling algorithm,  let $\mathbf{i}$ be the random variable which takes the value 0 with probability $\frac{(1+\lambda)^{|\Vpart{1}|} \hat{Z}^0}{\hat{Z}}$ and the value 1 otherwise, where $\hat{Z}^0, \hat{Z}^1,\hat{Z}$ are the quantities computed earlier.  Then, use Theorem~\ref{thmPolySample} to obtain an $(\epsilon/8)$-approximate sample from the Gibbs distribution corresponding to the polymer model $(\mathcal{C}^{\mathbf{i}}(G),w^{\mathbf{i}})$, say  
$\hat{\Gamma}^{\mathbf{i}}$.  
Obtain then an independent set $\hat{I}$ by including into $\hat{I}$ each 
$v\in \Vpart{1-\mathbf{i}}\backslash N_G (  \bigcup_{\gamma \in \hat{\Gamma}^{\mathbf{i}}} \gamma)$
with probability $\frac{\lambda}{1+\lambda}$ and each vertex in 
$\bigcup_{\gamma \in \hat{\Gamma}^{\mathbf{i}}} \gamma$
(with probability 1). We claim that the output distribution of $\hat{I}$ is $\epsilon$-close to the hard-core distribution $\mu_{G,\lambda}$.

To prove this, consider the random independent set $I$ obtained by repeating the same steps above but using instead perfectly accurate computations, i.e., pick $i=0$ with probability $\frac{(1+\lambda)^{|\Vpart{1}|} Z^0(G)}{A}$ and the value 1 otherwise, then, sample (perfectly) 
$\Gamma^i$
from the Gibbs distribution corresponding to the polymer model $(\mathcal{C}^{i}(G),w^{i})$, and then obtain the independent set $I$ by including into $I$ each 
 $v\in \Vpart{1-i}\backslash N_G (  \bigcup_{\gamma \in {\Gamma}^{{i}}} \gamma)$ 
with probability $\frac{\lambda}{1+\lambda}$ and each vertex in 
 $\bigcup_{\gamma \in {\Gamma}^{{i}}} \gamma$
(with probability 1). Then, if $I$ is not sparse,  $I$ is generated with probability $\lambda^{|I|}/A$
 (cf. the observation below \eqref{eq:4v4tbtyb5y}).  On the other hand, if $I$ is sparse, then $I$ is generated with probability $2\lambda^{|I|}/A$.  But by Lemma~\ref{lemHCApprox1} and the remark following, the total variation distance between the distribution of $I$ and the hard-core distribution $\mu_{G,\lambda}$ is bounded by the relative weight of the sparse independent sets,  
 which, by Lemma~\ref{lemHCApprox1}, is at most $e^{-n} \leq \epsilon/4$.

We next  observe that, conditioned on the event $\mathcal{E}$ (i.e., that $\hat{Z}^0$ and $\hat{Z}^1$ are $(\eps/32)$-approximations to $Z^0(G)$ and $Z^1(G)$), there is a coupling between $\hat{I}$ and $I$ such that $\hat{I}=I$ with probability at least $1-\epsilon/4$. Indeed, the total variation distance between $\mathbf{i}$ and  $i$ is at most $e^{\epsilon/16}-1\leq \epsilon/8$ and hence there is a coupling of $\mathbf{i}$ with $i$ so that $\mathbf{i}=i$ with probability at least $1-\epsilon/8$. Analogously, there is a coupling of 
$\hat{\Gamma}^{\mathbf{i}}$ with $\Gamma^{i}$ so that  $\hat{\Gamma}^{\mathbf{i}}=\Gamma^{i}$ 
with probability at least $1-\epsilon/8$. Since $\mathcal{E}$ occurs with probability at least $1-\epsilon/16$, it follows that  the overall total variation distance between $\hat{I}$ and $I$ is at most $\eps/2$. 

Hence, the output distribution of $\hat{I}$ is $\epsilon$-close to the hard-core distribution $\mu_{G,\lambda}$, finishing the proof of Theorem~\ref{thmHC}.
\end{proof}

\section{Comparison to spin Glauber dynamics}
\label{secCompare}
In this section, we derive results for spin Glauber dynamics, restricted to appropriate sets in the state space,  based on our results above (using fairly standard Markov chain comparison techniques). We start with the general framework of subset polymer models and obtain Theorem~\ref{thm:rglaubermix}, which is  then applied to the ferromagnetic Potts and hard-core models.
\subsection{Restricted Glauber dynamics for polymer models}
Here, we define the restricted Glauber dynamics for subset polymer models, and show the upcoming Theorem~\ref{thm:rglaubermix} which bounds its mixing time under some appropriate conditions.

Consider a subset polymer model as in Section~\ref{sec:absolmod}.
There is a natural map  $f: \Omega \to \{0,1,\hdots,q-1\}^{V(G)}$ between allowed polymer configurations and spin configurations,  given by $f(\Gamma)_v = \sigma_{\gamma}(v)$ if $\gamma \in \Gamma$ and $v \in  \gamma$ and $f(\Gamma)_v = g_v$ if $v \notin \cup_{\gamma \in \Gamma}  \gamma$. Let $\Omega_{\mathrm{spin}}=f(\Omega)$ be the spin configurations obtainable as images of the map $f$.   It will be helpful to consider the inverse map $f^{-1}$ and extend its domain to all $\sigma \in \{0,1,\hdots,q-1\}^{V(G)}$, so that $f^{-1}(\sigma)$ is the polymer configuration consisting of polymers that are connected components of vertices which do not receive their ground state spin; note that the range of the extended $f^{-1}$ is not limited to $\Omega$ anymore.

Restricted Glauber dynamics is defined as follows,
  starting from $\Gamma_t \in \Omega$.

\begin{enumerate}
\item Choose $v \in V(G)$ and $s\in \{0,\ldots,q-1\}$ uniformly.

\item $\Gamma'$ is formed from $\Gamma_t$ by assigning $v$ to spin~$s$
(formally, by letting $\sigma = f(\Gamma_t)$, forming $\sigma'$ from $\sigma$ by assigning $v$ to spin~$s$, and
finally letting $\Gamma' = f^{-1}(\sigma')$).

\item If $\Gamma' \in \Omega$ let $p =  \min(1,w(\Gamma')/w(\Gamma))$.  
\begin{itemize}
\item With probability $p$, $\Gamma_{t+1} = \Gamma'$.
\item With probability $1-p$,  $\Gamma_{t+1} = \Gamma_t$.
\end{itemize}
\item If $\Gamma' \notin \Omega$ then $\Gamma_{t+1} = \Gamma_t$.
\end{enumerate}

We will use the Markov chain comparison technique to show that  the restricted Glauber dynamics is rapidly mixing.
To do this, we need a mild condition on the set of allowed polymers~$\cC(G)$. A polymer model is said to be \emph{single-update-compatible} if, for every size-$k$ polymer~$\gamma\in \cC(G)$, there is an ordering $v_1,\ldots,v_k$ of the vertices in ${\gamma}$ such that, for all $i\in [k]$, the set $S_i = \{ v_1,\ldots,v_i\}$ induces a connected subgraph of $\gamma$ and  we have that $(S_i, \sigma_\gamma|_{S_i})$ is a valid polymer itself, i.e., $(S_i, \sigma_\gamma|_{S_i}) \in \cC(G)$.

We will use the comparison method of Diaconis and Saloff-Coste~\cite{DS2,DS1} as applied to mixing times by Randall and Tetali~\cite{RT}. In order to avoid discussion of eigenvalues here, we use the version from Observation~13 of the survey paper~\cite{comparisonsurvey}. We first show that the restricted Glauber dynamics is a reversible ergodic Markov chain with stationary distribution $\mu_G$, which is easy to see from its definition.

\begin{lemma}
Let $G$ be a graph and let $(\cC(G), w)$ be a single-update-compatible polymer model. The restricted Glauber dynamics is ergodic and reversible with stationary distribution $\mu_G$.
\end{lemma}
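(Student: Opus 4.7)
The plan is to verify reversibility with respect to $\mu_G$ via a Metropolis--Hastings argument, obtain aperiodicity from self-loops, and obtain irreducibility from the single-update-compatibility hypothesis.

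First I would observe that the proposal in steps 1--2 is symmetric. Since $f$ is injective on $\Omega$ (the polymers in a configuration are recoverable from $f(\Gamma)$ as the connected components of non-ground-state vertices in the host graph, together with their spins), distinct configurations $\Gamma, \Gamma' \in \Omega$ can be produced from each other in one proposal step only if $f(\Gamma)$ and $f(\Gamma')$ differ at a unique vertex $v$, in which case the unique pair leading from $\Gamma$ to $\Gamma'$ is $(v, f(\Gamma')_v)$ and the unique pair leading back is $(v, f(\Gamma)_v)$. Hence the proposal probability satisfies $Q(\Gamma, \Gamma') = Q(\Gamma', \Gamma) = 1/(nq)$. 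The Metropolis acceptance rule in steps 3--4 then yields
\[
\mu_G(\Gamma)\,P(\Gamma, \Gamma') \;=\; \frac{\min\{w(\Gamma),\, w(\Gamma')\}}{nq\,Z(G)} \;=\; \mu_G(\Gamma')\,P(\Gamma', \Gamma),
\]
so detailed balance holds and $\mu_G$ is stationary. Moreover, proposing $s = f(\Gamma)_v$ leaves $\Gamma$ unchanged and is always accepted, giving a self-loop probability of at least $1/q$ at every state; the chain is therefore aperiodic.

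For irreducibility, I would argue that every $\Gamma \in \Omega$ is connected to the empty configuration $\emptyset$ by a sequence of valid moves, making $\emptyset$ a hub. Given $\gamma \in \Gamma$ of size $k$, the single-update-compatibility hypothesis gives an ordering $v_1, \dots, v_k$ of the vertices of $\gamma$ such that each prefix $S_i = \{v_1, \dots, v_i\}$ is connected in the host graph and $(S_i, \sigma_\gamma|_{S_i}) \in \cC(G)$. Processing these vertices in reverse order $v_k, v_{k-1}, \dots, v_1$, at step $i$ I would propose changing the spin of $v_i$ to its ground state $g_{v_i}$. The resulting configuration replaces $(S_i, \sigma_\gamma|_{S_i})$ by $(S_{i-1}, \sigma_\gamma|_{S_{i-1}})$ (and by nothing when $i = 1$): this new polymer is in $\cC(G)$ by single-update-compatibility, it is exactly the connected component of non-ground-state vertices because $S_{i-1} \subseteq \gamma$ retains distance at least $2$ in the host graph from the other polymers of $\Gamma$, and the mutual compatibility of those other polymers is unaffected. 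Hence each intermediate configuration lies in $\Omega$, so the proposed transition is accepted with positive probability. Iterating across all polymers of $\Gamma$ produces a path from $\Gamma$ to $\emptyset$ in the chain, and reversing each such path gives a route from $\emptyset$ to any target configuration.

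Combining these three ingredients yields that the restricted Glauber dynamics is ergodic and reversible with $\mu_G$ as its stationary distribution. The main subtlety lies in the irreducibility step: one must confirm that each ``partially stripped'' configuration along the dismantling path is itself a legal element of $\Omega$, which is exactly the condition that single-update-compatibility was introduced to guarantee.
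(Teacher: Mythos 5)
Your proof is correct and follows essentially the same route as the paper: detailed balance for reversibility and stationarity of $\mu_G$, self-loops for aperiodicity, and irreducibility via connecting every configuration to $\emptyset$ by single-vertex moves. The only difference is that you spell out the irreducibility step (checking that each partially stripped configuration stays in $\Omega$ using single-update-compatibility), which the paper asserts in one line; your elaboration is a correct filling-in of that detail rather than a different argument.
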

\begin{proof}
The restricted Glauber dynamics is aperiodic since we remain in the same state with positive probability after performing an update. It is irreducible since we can transition from any $\Gamma \in \Omega$ to any $\Gamma' \in \Omega$ by adding and removing vertices one-by-one. This shows that the restricted Glauber dynamics is ergodic. To show that it is reversible and has stationary distribution $\mu_G$, we check detailed balance. Suppose $\Gamma,\Gamma'\in\Omega$ with $\Gamma\neq\Gamma'$ and $P(\Gamma,\Gamma')>0$ where $P$ is the transition matrix of the restricted Glauber dynamics. Then, 
	\[
		\frac{P(\Gamma,\Gamma')}{P(\Gamma',\Gamma)} = \frac{\frac{1}{n} \cdot \frac{1}{q} \cdot \min\{1,w(\Gamma')/w(\Gamma)\}} {\frac{1}{n} \cdot \frac{1}{q} \cdot \min\{1,w(\Gamma)/w(\Gamma')\}} = \frac{w(\Gamma')}{w(\Gamma)} = \frac{\mu_G(\Gamma')}{\mu_G(\Gamma)}.
	\]
	The lemma follows.
\end{proof}

We next give some standard definitions that will be used in our comparison proof. Let $\rGD$ denote the restricted Glauber dynamics and $P$ be its transition matrix. Let $\PD$ be the polymer dynamics and denote its transition matrix by $P'$. Define $E^*(\rGD)$ to be the set of pairs of configurations $(\Gamma,\Gamma')$ that can be achieved by one transition of the restricted Glauber dynamics; i.e., $E^*(\rGD) = \{(\Gamma,\Gamma')\in \Omega^2: P(\Gamma,\Gamma')>0\}$. Similarly, define $E^*(\PD) = \{(\Gamma,\Gamma')\in \Omega^2: P'(\Gamma,\Gamma')>0\}$ for the polymer dynamics. 

For every $(\Gamma,\Gamma') \in E^*(\PD)$, we define a path $\mathcal{P}_{\Gamma,\Gamma'}$ from $\Gamma$ to $\Gamma'$ to be a sequence of configurations such that every adjacent pair is a transition of the restricted Glauber dynamics; i.e, every adjacent pair of configurations is in $E^*(\rGD)$. 
For this, we assume that the polymer model is single-update-compatible (see Section~\ref{sec:firstcomparison}).
If $\Gamma = \Gamma'$, then the choice is easy --- we let $\mathcal{P}_{\Gamma,\Gamma'} = (\Gamma,\Gamma')$. 
Suppose instead that $\Gamma' = \Gamma \cup \gamma$ for some polymer $\gam \in \Omega$. Recall that there is a natural one-to-one mapping $f: \Omega\to \Omega_{\mathrm{spin}}$ between the set of all (polymer) configurations $\Omega$ and the set of spin configurations $\Omega_{\mathrm{spin}}$. Let $\sigma = f(\Gamma)$ and $\sigma' = f(\Gamma')$ be the corresponding spin configurations. 
If $\gamma$ has size $k$, let $v_1,\ldots,v_k$ 
be the ordering of vertices of~$\gamma$ from the definition of single-update-compatible
so that, for all $i\in [k]$, the polymer induced by vertices $v_1,\ldots,v_i$ is in $\cC(G)$. Let  $(\sigma = \sigma_0, \sigma_1,\dots,\sigma_k = \sigma')$ be the sequence of spin configurations such that each $\sigma_j$ is obtained from $\sigma_{j-1}$ by changing the spin of $v_j$ from $\sigma(v) = g_v$ to $\sigma'(v)$. The path $\mathcal{P}_{\Gamma,\Gamma'}$ is then defined to be $(f^{-1}(\sigma_0),\dots,f^{-1}(\sigma_k))$. If $\Gamma' = \Gamma \backslash \gam$ for some $\gam \in \Omega$, we can define the path $\mathcal{P}_{\Gamma,\Gamma'}$ in  a similar manner. Note that in both cases the length of the path is $|\mathcal{P}_{\Gamma,\Gamma'}| = k = |\gam|$. 

For every $(\Gamma_0,\Gamma_0') \in E^*(\rGD)$, the \textit{congestion} of the edge $(\Gamma_0,\Gamma_0')$ is defined to be
\[
A(\Gamma_0,\Gamma_0') = \frac{1}{\mu_G(\Gamma_0) P(\Gamma_0,\Gamma_0')} \sum_{\substack{(\Gamma,\Gamma') \in E^*(\PD):\\ \mathcal{P}_{\Gamma,\Gamma'} \ni (\Gamma_0,\Gamma_0')}} \mu_G(\Gamma) P'(\Gamma, \Gamma') |\mathcal{P}_{\Gamma,\Gamma'}|.
\]
The \textit{congestion} of the choice of paths is the quantity
\[
A = \max\limits_{(\Gamma_0,\Gamma_0') \in E^*(\rGD)} A(\Gamma_0,\Gamma_0').
\]
The following comparison lemma gives an upper bound on the mixing time of the restricted Glauber dynamics by the mixing time of the polymer dynamics.

\begin{lemma}[{\cite[Observation 13]{comparisonsurvey}}]
\label{lem:comparison}
Let $c_1 = \min_{\Gamma \in \Omega} P(\Gamma,\Gamma)$ and $c_2 = \min_{\Gamma \in \Omega} \mu_G(\Gamma)$. Then, for any $0<\eps <1$ we have
\[
\Tmix(\rGD,\eps) \le \max \left\{ A\left( \Tmix\left(\PD,\frac{1}{2\e}\right) +1 \right), \frac{1}{2c_1} \right\} \ln \frac{1}{\eps c_2}.
\]
\end{lemma}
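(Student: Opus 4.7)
The plan is to apply the classical Diaconis--Saloff-Coste comparison method \cite{DS1,DS2}, adapted for mixing times by Randall and Tetali \cite{RT}, with the canonical paths $\mathcal{P}_{\Gamma,\Gamma'}$ already defined in the excerpt playing the role of the comparison paths in $\rGD$ that simulate single-step transitions of $\PD$. Concretely, since both $\rGD$ and $\PD$ are ergodic and reversible with respect to $\mu_G$ (the former was proved above, the latter in Section~\ref{secpolyMCsec}), we may compare their Dirichlet forms
\[
\mathcal{E}_P(f,f) \;=\; \tfrac{1}{2}\sum_{\Gamma,\Gamma'\in\Omega}\mu_G(\Gamma)\,P(\Gamma,\Gamma')\bigl(f(\Gamma)-f(\Gamma')\bigr)^2,
\qquad P\in\{P_\rGD,P_\PD\},
\]
and turn a comparison of Dirichlet forms into a comparison of spectral gaps, which in turn controls mixing time.

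The first step is to establish the key inequality $\mathcal{E}_\PD(f,f)\le A\,\mathcal{E}_\rGD(f,f)$ for every $f\colon\Omega\to\R$. For each $(\Gamma,\Gamma')\in E^*(\PD)$, telescope $f(\Gamma)-f(\Gamma')$ along the consecutive restricted-Glauber transitions of $\mathcal{P}_{\Gamma,\Gamma'}$ and apply Cauchy--Schwarz to obtain
\[
\bigl(f(\Gamma)-f(\Gamma')\bigr)^2 \;\le\; |\mathcal{P}_{\Gamma,\Gamma'}|\!\!\sum_{(\Gamma_0,\Gamma_0')\in\mathcal{P}_{\Gamma,\Gamma'}}\!\!\bigl(f(\Gamma_0)-f(\Gamma_0')\bigr)^2.
\]
Multiplying by $\mu_G(\Gamma)P_\PD(\Gamma,\Gamma')$, summing over $(\Gamma,\Gamma')$, and swapping the order of summation so that the outer sum is over edges $(\Gamma_0,\Gamma_0')\in E^*(\rGD)$ that appear on some path, the coefficient of $(f(\Gamma_0)-f(\Gamma_0'))^2$ becomes exactly $\mu_G(\Gamma_0)P_\rGD(\Gamma_0,\Gamma_0')\cdot A(\Gamma_0,\Gamma_0')$ by the definition of edge congestion. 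Bounding each $A(\Gamma_0,\Gamma_0')$ by $A$ yields the Dirichlet form inequality. (This relies on the fact that single-update-compatibility makes every intermediate configuration on $\mathcal{P}_{\Gamma,\Gamma'}$ lie in $\Omega$, so that each edge of the path is a genuine positive-probability transition of $\rGD$.)

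The second step converts this to a spectral-gap comparison. By the variational characterisation of the spectral gap for reversible chains, $\mathcal{E}_\PD \le A\,\mathcal{E}_\rGD$ gives $\mathrm{gap}(\rGD)\ge \mathrm{gap}(\PD)/A$, where $\mathrm{gap}$ denotes the gap between $1$ and the second-largest eigenvalue. To control the \emph{absolute} spectral gap (and hence the mixing time in total variation), we also need the smallest eigenvalue of $P_\rGD$ to be not too close to $-1$; this is precisely where $c_1=\min_\Gamma P_\rGD(\Gamma,\Gamma)$ enters, because the lazy-part inequality for reversible chains gives $\lambda_{\min}(P_\rGD)\ge -1+2c_1$, so the absolute gap is at least $\min\{\mathrm{gap}(\PD)/A,\,2c_1\}$.

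The third step is the standard two-sided conversion between mixing time and spectral gap for reversible chains (see, e.g., \cite{DG,comparisonsurvey}): on the one hand $\mathrm{gap}(\PD)\ge 1/(\Tmix(\PD,1/(2e))+1)$, and on the other hand $\Tmix(\rGD,\eps)\le (1/\mathrm{gap}_{\mathrm{abs}}(\rGD))\ln(1/(\eps c_2))$. Combining these with the bound on the absolute gap from the previous step yields exactly the claimed inequality. The main thing to be careful about is the bookkeeping with the self-loop probability $c_1$ and the $1/(2e)$ threshold on $\Tmix(\PD)$, which are the standard numeric choices that make the two conversions line up with no lossy constants; no genuinely new obstacle arises beyond invoking the boilerplate machinery correctly, since the essential content is captured by the Dirichlet-form inequality of Step~1.
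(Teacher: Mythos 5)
Your proposal is correct: the Dirichlet-form comparison along the canonical paths (using single-update-compatibility so every path edge is a genuine transition of $\rGD$), the bound $\lambda_{\min}(P)\ge -1+2c_1$ from the self-loop probabilities, and the two standard conversions between relaxation time and mixing time (with the $1/(2e)$ threshold giving $\mathrm{gap}(\PD)\ge 1/(\Tmix(\PD,1/(2e))+1)$ and $\Tmix(\rGD,\eps)\le \ln(1/(\eps c_2))/\mathrm{gap}_{\mathrm{abs}}(\rGD)$) combine to give exactly the stated bound. Note, however, that the paper offers no proof of Lemma~\ref{lem:comparison}: it is imported verbatim as Observation~13 of \cite{comparisonsurvey}, whose proof is precisely the Diaconis--Saloff-Coste/Randall--Tetali argument you reconstruct, so your write-up is a faithful reconstruction of the cited result rather than a different route from anything in the paper.
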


We now proceed to establish the mixing-time of the restricted Glauber dynamics, which is the main result of this section.
We will apply this to both the hard-core model (on bipartite $\alpha$-expander graphs) and the ferromagnetic Potts model (on $\alpha$-expander graphs), for which we will define appropriate single-update compatible polymer models. Furthermore, in both of these applications, $M$ below will be logarithmic in $n/\epsilon$, giving polynomial mixing time for the restricted Glauber dynamics. 

\begin{theorem}\label{thm:rglaubermix}
Suppose that a polymer model $(\cC( \cdot ), w,\mathcal{G})$
satisfies the polymer mixing condition.
Consider a graph $G\in \mathcal{G}$ such that
$(\cC(G),w)$ 
 is single-update-compatible. Let $M = \max\{|\gam|: \gam\in\cC(G)\}$.  Suppose that, for every pair of configurations $\Gamma,\Gamma'\in\Omega$ whose corresponding spin configurations $f(\Gamma),f(\Gamma') \in\Omega_{\mathrm{spin}}$ differ at exactly one vertex, we have
\begin{equation}
\frac{1}{\eta} \leq \frac{\mu_G(\Gamma)}{\mu_G(\Gamma')} \leq \eta
\end{equation}
for some constant $\eta>1$. Then for any $0<\eps<1$, the restricted Glauber dynamics has mixing time
\[
\Tmix(\eps) \le O\left(M\eta^{M+1} n^2\log n \log(\eta/\eps) \right).
\]
\end{theorem}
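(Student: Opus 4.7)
The plan is to invoke the comparison lemma (Lemma~\ref{lem:comparison}) with the polymer dynamics $\PD$ as the reference chain and the restricted Glauber dynamics $\rGD$ as the target, together with the canonical paths $\mathcal{P}_{\Gamma,\Gamma'}$ defined in the paragraphs just before the theorem. This reduces the proof to four ingredients: (a)~a mixing-time bound for $\PD$, (b)~a lower bound on $c_1=\min_\Gamma P(\Gamma,\Gamma)$, (c)~a lower bound on $c_2=\min_\Gamma\mu_G(\Gamma)$, and (d)~the congestion~$A$.

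Ingredients (a)--(c) are routine. Under the polymer mixing hypothesis, Theorem~\ref{thmPolyMix} gives $\Tmix(\PD,1/(2e))=O(n\log n)$. For $c_1$, proposing the vertex's current spin (probability $1/q$) automatically leaves the configuration fixed, so $c_1\geq 1/q$ and the $1/(2c_1)$ term in Lemma~\ref{lem:comparison} is $O(1)$. For $c_2$, any $\Gamma\in\Omega$ can be reached from $\emptyset$ inside $\Omega$ using at most $n$ single-spin updates (use single-update-compatibility to dismantle the polymers of $\Gamma$ one vertex at a time), each changing $\mu_G$ by at most a factor of $\eta$. Combining with the trivial bound $Z\leq (q\eta)^{n}$ yields $\mu_G(\Gamma)\geq(q\eta)^{-O(n)}$ and hence $\log(1/(\eps c_2))=O(n\log(\eta/\eps))$, absorbing $\log q$ into $\log\eta$.

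The substantive step is the congestion bound. Fix a Glauber edge $(\Gamma_0,\Gamma_0')$ that updates a vertex~$v$. By the construction of the canonical paths, any polymer-dynamics transition whose path traverses this edge is an addition $(\Gamma,\Gamma\cup\gamma)$ or a removal $(\Gamma\cup\gamma,\Gamma)$ with $\gamma\in\cA(v)$, and the position of $v$ in the single-update-compatible ordering of $\gamma$ identifies the unique step of the path at which $(\Gamma_0,\Gamma_0')$ occurs; in particular, $\Gamma$ is determined by $\Gamma_0$ and $\gamma$. The flow contributed by an addition is
\[
\mu_G(\Gamma)\,P'(\Gamma,\Gamma\cup\gamma)\,|\mathcal{P}_{\Gamma,\Gamma\cup\gamma}| \;=\; \frac{|\gamma|^2}{2n}\,\mu_G(\Gamma\cup\gamma) \;\leq\; \frac{M^2\eta^{M}}{2n}\,\mu_G(\Gamma_0),
\]
the last inequality telescoping the $\eta$-ratio across the $\leq M$ single-spin updates separating $\Gamma\cup\gamma$ from $\Gamma_0$. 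The denominator satisfies $\mu_G(\Gamma_0)P(\Gamma_0,\Gamma_0')\geq \mu_G(\Gamma_0)/(nq\eta)$ by applying the same ratio at $v$. Bounding the number of polymers in $\cA(v)$ of size at most $M$ by $(eq\Delta)^{M}$ via Lemma~\ref{lem:BCKL}, treating removals symmetrically, and cancelling $\mu_G(\Gamma_0)$, one arrives at $A=O(M\eta^{M+1})$ once the combinatorial $(q\Delta)^{M}$ factor is absorbed into $\eta^{M+1}$, which is valid in the parameter regime of our applications (where $\eta$ dominates $q\Delta$). Feeding (a)--(d) into Lemma~\ref{lem:comparison} then yields the advertised $\Tmix(\rGD,\eps)=O(M\eta^{M+1} n^{2}\log n\log(\eta/\eps))$.

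The principal obstacle is obtaining this clean $M\eta^{M+1}$ congestion bound. A naive enumeration of polymers through a given Glauber edge contributes a spurious $(q\Delta)^{M}$ factor; eliminating it in full generality requires replacing the count by a \emph{weighted} sum that exploits the polymer mixing inequality $\sum_{\gamma'\nsim\gamma}|\gamma'|w_{\gamma'}\leq\theta|\gamma|$, while carefully balancing the $w_\gamma$ factor built into $P'(\Gamma,\Gamma\cup\gamma)$ against the $w_{\gamma^{(i)}}^{-1}$ factor that emerges when $\mu_G(\Gamma)$ is rewritten in terms of $\mu_G(\Gamma_0)$ through the partial polymer $\gamma^{(i)}\in\Gamma_0$. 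This is the one step of the argument that is genuinely delicate.
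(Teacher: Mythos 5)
Your framework (comparison via Lemma~\ref{lem:comparison}, the canonical paths already set up before the theorem, the bounds $c_1\geq 1/q$, $c_2\geq (\eta q)^{-n}$, and Theorem~\ref{thmPolyMix} for the polymer dynamics) matches the paper's proof, but the congestion bound --- which is the heart of the argument --- has a genuine gap. You fold the polymer weight into the measure by writing $\mu_G(\Gamma)P'(\Gamma,\Gamma\cup\gamma)=\tfrac{|\gamma|}{2n}\mu_G(\Gamma\cup\gamma)$ and then bounding $\mu_G(\Gamma\cup\gamma)\leq\eta^{M}\mu_G(\Gamma_0)$; having spent the factor $w_\gamma$ this way, you are left to sum over all polymers in $\cA(v)$ by sheer enumeration, which costs $(e(q-1)\Delta)^{M}$. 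Absorbing that factor into $\eta^{M+1}$ is not legitimate here: the theorem assumes only the polymer mixing condition and the one-vertex $\eta$-ratio bound, with no hypothesis that $\eta$ dominates $q\Delta$, so your argument proves a strictly weaker statement (or the stated one only under an extra assumption). You correctly sense that a weighted sum exploiting the polymer mixing condition is needed, but you do not carry it out, and you misdiagnose it as requiring a delicate balancing against inverse weights of partial polymers sitting in $\Gamma_0$.

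In fact no such balancing is needed, and the step is short. Keep $P'(\Gamma,\Gamma\cup\gamma)=\tfrac{|\gamma|w_\gamma}{2n}$ with its weight intact, and bound $\mu_G(\Gamma)/\mu_G(\Gamma_0)\leq\eta^{|\gamma|}\leq\eta^{M}$ directly from the hypothesis, since $f(\Gamma)$ and $f(\Gamma_0)$ differ on at most $|\gamma|$ vertices along the canonical path (no weights of partial polymers ever enter; the pointwise $\eta$-ratio assumption is exactly what replaces them). Together with $P(\Gamma_0,\Gamma_0')\geq \tfrac{1}{\eta qn}$ and $|\mathcal{P}_{\Gamma,\Gamma'}|\leq M$, each transition through the fixed Glauber edge at vertex $v$ contributes at most $\tfrac{1}{2}qM\eta^{M+1}|\gamma|w_\gamma$, and the sum over all $\gamma$ containing $v$ is controlled not by counting but by the polymer mixing condition applied to a singleton polymer $\gamma_v$ at $v$: every $\gamma\ni v$ satisfies $\gamma\nsim\gamma_v$, so $\sum_{\gamma\ni v}|\gamma|w_\gamma\leq\theta<1$. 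This yields $A\leq qM\eta^{M+1}$ (with the trivial case $\Gamma_0=\Gamma_0'$ giving $A\leq q$, and edges whose updated vertex has non-ground spins on both sides carrying no flow), and then Lemma~\ref{lem:comparison} gives the claimed mixing time with no spurious $(q\Delta)^{M}$ factor and with only a single factor of $M$ rather than $M^2$.
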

\begin{proof}
By Lemma~\ref{lem:comparison}, it suffices to upper bound the congestion $A(\Gamma_0,\Gamma_0')$ for every $(\Gamma_0,\Gamma_0') \in E^*(\rGD)$ where
\[
A(\Gamma_0,\Gamma_0') = \sum_{\substack{(\Gamma,\Gamma') \in E^*(\PD):\\ \mathcal{P}_{\Gamma,\Gamma'} \ni (\Gamma_0,\Gamma_0')}} \frac{\mu_G(\Gamma)}{\mu_G(\Gamma_0)} \cdot \frac{P'(\Gamma, \Gamma')}{P(\Gamma_0,\Gamma_0')} \cdot |\mathcal{P}_{\Gamma,\Gamma'}|.
\]
If $\Gamma_0 = \Gamma_0'$, then for our choices of paths to get $(\Gamma_0,\Gamma_0') \in \mathcal{P}_{\Gamma,\Gamma'}$ we must have $\Gamma = \Gamma' = \Gamma_0 = \Gamma_0'$. It follows that
\[
A(\Gamma_0,\Gamma_0) = \frac{P'(\Gamma_0, \Gamma_0)}{P(\Gamma_0,\Gamma_0)} \leq \frac{1}{1/q} = q
\]
since $P(\Gamma_0,\Gamma_0) \geq 1/q$ by the update rule of the restricted Glauber dynamics.

Now suppose $\Gamma_0 \neq \Gamma_0'$. Let $\sigma_0 = f(\Gamma_0)$ and $\sigma_0' = f(\Gamma_0')$ be the corresponding spin configurations. Notice that $\sigma_0$ and $\sigma_0'$ differ at exactly one vertex, which we denote by $v$. If $\sigma_0(v) \neq g_v$ and $\sigma_0'(v) \neq g_v$ then no path $\mathcal{P}_{\Gamma,\Gamma'}$ would contain $(\Gamma_0,\Gamma_0')$ by our choice of paths, and thus $A(\Gamma_0,\Gamma_0) = 0$. Assume next that $\sigma_0(v) = g_v$ and $\sigma_0'(v) \neq g_v$. Then, if $(\Gamma_0,\Gamma_0') \in \mathcal{P}_{\Gamma,\Gamma'}$ for some $(\Gamma,\Gamma') \in E^*(\PD)$, we must have $\Gamma' = \Gamma \cup \gam$ for some polymer $\gam\in \Omega$ and also $v\in {\gamma}$. Moreover, the spin configurations $f(\Gamma),f(\Gamma'), \sigma_0,\sigma_0'$ are all the same outside ${\gamma}$. This implies that the number of such paths $\mathcal{P}_{\Gamma,\Gamma'}$ is upper bounded by the number of polymers containing $v$.

Now fix some $(\Gamma,\Gamma') \in E^*(\PD)$ such that $(\Gamma_0,\Gamma_0') \in \mathcal{P}_{\Gamma,\Gamma'}$ and assume that $\Gamma' = \Gamma \cup \gam$ for some polymer $\gam\in \Omega$ with $v\in {\gamma}$. Then,
\begin{equation}\label{eq:length-of-path}
|\mathcal{P}_{\Gamma,\Gamma'}| \leq |\gam| \leq M.
\end{equation}
As the path $\mathcal{P}_{\Gamma,\Gamma'}$ is obtained by changing the spins vertex by vertex in the corresponding spin configurations, $f(\Gamma)$ and $f(\Gamma_0)$ differ at most $|\gam|$ vertices. The condition of the theorem implies that
\begin{equation}\label{eq:ratio-of-density}
\frac{\mu_G(\Gamma)}{\mu_G(\Gamma_0)} \leq \eta^{|\gam|} \leq \eta^M.
\end{equation}
The update rule of the restricted Glauber dynamics gives
\begin{equation}\label{eq:rGD-update}
P(\Gamma_0,\Gamma_0') = \frac{1}{n} \cdot \frac{1}{q} \cdot \min \left\{ 1, \frac{w(\Gamma_0')}{w(\Gamma_0)} \right\} \geq \frac{1}{\eta qn}
\end{equation}
and for the polymer dynamics we have
\begin{equation}\label{eq:PD-update}
P'(\Gamma,\Gamma') = \frac{|\gam|}{n} \cdot \frac{1}{2} \cdot w_\gam = \frac{|\gam|w_\gam}{2n}.
\end{equation}

Let $\gam_v$ denote a polymer on $\{v\}$ with a spin from $\{0,\dots,q-1\}\backslash g_v$. Then, the polymer mixing condition implies that $\sum_{\gam\nsim\gam_v} |\gam| w_\gam \leq \theta|\gam_v| < 1$ for some $\theta\in(0,1)$. 
Combining this and inequalities \eqref{eq:length-of-path}, \eqref{eq:ratio-of-density}, \eqref{eq:rGD-update} and \eqref{eq:PD-update}, we get
\[
A(\Gamma_0,\Gamma_0') \leq \sum_{\gam:\, v\in{\gam}} \eta^M \cdot \eta qn \cdot \frac{|\gam|w_\gam}{2n} \cdot M = \frac{1}{2}q M\eta^{M+1} \sum_{\gam\nsim \gam_v} |\gam| w_\gam \leq \frac{1}{2}qM\eta^{M+1}.
\]
For the case where $\sigma_0(v) \neq g_v$ and $\sigma_0'(v) = g_v$, the proof is almost the same and we can get the same bound. Thus,
\[
A = \max\limits_{(\Gamma_0,\Gamma_0') \in E^*(\rGD)} A(\Gamma_0,\Gamma_0') \leq \max\left\{ q, \frac{1}{2}qM\eta^{M+1} \right\} \leq qM\eta^{M+1}.
\]
The theorem then follows from Theorem~\ref{thmPolyMix} and Lemma~\ref{lem:comparison} once we notice that $P(\Gamma,\Gamma) \geq 1/q$ and that $\mu_G(\Gamma) \geq 1/(\eta q)^n$ for all $\Gamma\in \Omega$.
\end{proof}

\subsection{Truncated polymer model}
\label{sec:trunc-polymer}

The bound on the mixing time of the restricted Glauber dynamics in Theorem~\ref{thm:rglaubermix} is exponential in the size of the largest polymer which is in general undesirable. For example, in our applications in Section~\ref{secApply}, $M$ was linear in the number of vertices of the host graph. Here, we show that, under the polymer sampling condition, we can restrict our attention to polymers of size $O(\log n)$ in the sense that the partition function as well as the Gibbs distribution of the truncated polymer model are close to
those of  the original polymer model. 

Let $(\cC(G),w)$ be a polymer model on a graph $G$. For $k>0$, define the truncated polymer model $(\cCtrunc{k}(G),w)$ by
\[
\cCtrunc{k}(G) = \{ \gamma \in \cC(G) : |\gamma| \leq k \}.
\]
Also we let
\[
\Omega_k = \left\{ \Gamma \in \Omega \mid \Gamma \subseteq \cCtrunc{k}(G) \right\} = \{ \Gamma \subseteq \cCtrunc{k}(G) \mid \forall \gamma ,\gamma' \in \Gamma, \gamma \sim \gamma' \}
\]
be the set of allowed configurations (note that $\Omega_k \subseteq \Omega$). 
The partition function of the truncated polymer model $(\cCtrunc{k}(G),w)$ is given by
\[
Z_k(G) = \sum_{\Gamma \in \Omega_k} \prod_{\gamma \in \Gamma} w_\gam.
\]
The corresponding Gibbs distribution on $\Omega_k$ is defined by $\mu_{G,k}(\Gamma) = \frac{\prod_{\gamma \in \Gamma} w_\gamma}{Z_k(G)}$.
We remark that if the original polymer model satisfies the polymer sampling condition then so does the truncated polymer model, and thus Theorem~\ref{thmPolySample} also applies to the truncated model.

The following lemma asserts that the Gibbs distribution and the partition function of the truncated polymer model $(\cCtrunc{k}(G),w)$ are close to those of the original model $(\cC(G),w)$, provided that the polymer sampling condition \eqref{eqSampleCondition} holds.
\begin{lemma}
	\label{lem:hctruncate-polymer}
Let $\mathcal{G}$ be a family of graphs of maximum degree at most $\Delta$
and let $(\cC( \cdot ),w,\mathcal{G})$ be a polymer model that satisfies the 
polymer sampling condition \eqref{eqSampleCondition}
with constant $\tau\geq 5+3\log((q-1)\Delta)$. 	
Let $G$ be an $n$-vertex graph from $\mathcal{G}$.   Then for any $\epsilon > 0$ and $k = \frac{3\log(2n/\eps)}{2\tau}$,  we have
\[Z_k(G) \leq Z(G) \leq e^{\eps} Z_k(G).\]
Moreover, the total variation distance between $\mu_G$ and $\mu_{G,k}$ is at most $\eps$.
\end{lemma}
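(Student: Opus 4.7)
The plan is to reduce both claims to a single tail estimate on
\[ S \;:=\; \sum_{\gamma \in \cC(G):\ |\gamma|>k} w_\gamma, \]
which I will show satisfies $S \leq \epsilon/2$. The lower bound $Z_k(G)\leq Z(G)$ is immediate from $\Omega_k \subseteq \Omega$ and non-negativity of the weights. For the upper bound, every $\Gamma \in \Omega \setminus \Omega_k$ must contain at least one polymer of size $>k$, so a union bound over such a ``witness'' polymer, followed by dropping the compatibility constraint between the witness and the remaining polymers of $\Gamma$, yields
\[ Z(G) - Z_k(G) \;\leq\; \sum_{\gamma:\,|\gamma|>k} w_\gamma \cdot Z(G) \;=\; S \cdot Z(G), \qquad \text{hence} \qquad \frac{Z(G)}{Z_k(G)} \leq \frac{1}{1-S}. \]

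To bound $S$, I would combine the polymer sampling condition $w_\gamma \leq e^{-\tau|\gamma|}$ with Lemma~\ref{lem:BCKL} (which counts connected subgraphs of size $j$ through a fixed vertex by $(e\Delta)^{j-1}$), a factor $(q-1)^j$ for the spin assignment, and sum over an anchor vertex. This reduces $S$ to a geometric series in the quantity $e(q-1)\Delta \cdot e^{-\tau}$. Under the hypothesis $\tau \geq 5+3\log((q-1)\Delta)$ one checks that $1+\log((q-1)\Delta) \leq \tau/3$, so each size-$j$ term is bounded by $e^{-2\tau j/3}$, giving $S \leq C n \, e^{-2\tau k/3}$ for an absolute constant $C$. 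Plugging in $k = \tfrac{3\log(2n/\epsilon)}{2\tau}$ collapses the $O(n)$ prefactor against $e^{-2\tau k/3} = \epsilon/(2n)$ and yields $S \leq \epsilon/2$ as desired.

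From $S \leq \epsilon/2$ (and $\epsilon \leq 1$; the case $\epsilon \geq 1$ is vacuous for an $\epsilon$-approximation), one gets $Z(G)/Z_k(G) \leq 1/(1-S) \leq e^{\epsilon}$ using $1-x \geq e^{-2x}$ for $x \leq 1/2$. For the total variation bound, a direct splitting of $\Omega$ into $\Omega_k$ and its complement, noting that on $\Omega_k$ the measures $\mu_G$ and $\mu_{G,k}$ differ only by the uniform rescaling factor $Z(G)/Z_k(G)$ while $\mu_{G,k}$ vanishes off $\Omega_k$, telescopes to
\[ \|\mu_G - \mu_{G,k}\|_{TV} \;=\; 1 - \frac{Z_k(G)}{Z(G)} \;\leq\; 1-e^{-\epsilon} \;\leq\; \epsilon. \]

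The only delicate step is the constant bookkeeping in the tail estimate: the constants $5$ and $3$ in $\tau \geq 5+3\log((q-1)\Delta)$ are tuned precisely so that the factors $n$, $(e\Delta)^{j-1}$ and $(q-1)^j$ are absorbed by $e^{-\tau j}$, leaving enough residual exponential decay at rate $\Omega(\tau)$ in $j$ to kill the $O(n)$ prefactor when $k = \Theta(\log(n/\epsilon)/\tau)$. No other obstacle is anticipated; the argument is essentially the same union bound used in Section~\ref{secConditionCompare} to derive the Koteck\'y--Preiss condition from the polymer sampling condition.
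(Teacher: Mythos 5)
Your proposal is correct in substance but takes a genuinely different route at the key step. The paper bounds the ratio multiplicatively: grouping configurations $\Gamma'\in\Omega$ according to their small part $\Gamma'\cap\cCtrunc{k}(G)$ and dropping compatibility constraints for the large polymers gives $Z(G)\le Z_k(G)\prod_{\gamma\in\cC(G):|\gamma|>k}(1+w_\gamma)$, and then $\log\bigl(Z(G)/Z_k(G)\bigr)\le\sum_{|\gamma|>k}w_\gamma$ is controlled by exactly the tail estimate you describe (polymer sampling condition plus Lemma~\ref{lem:BCKL} and the $(q-1)^j$ spin factor). You instead bound the difference additively via a witness large polymer, $Z(G)-Z_k(G)\le S\cdot Z(G)$ with $S=\sum_{|\gamma|>k}w_\gamma$, and then invert $1-S$. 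Both arguments hinge on the same tail computation and the same role of the constants $5$ and $3$ in $\tau$, and your treatment of the total variation distance is identical to the paper's. What the paper's multiplicative form buys is that it needs no smallness of $S$ relative to $1$: the bound $Z(G)\le e^{\eps}Z_k(G)$ falls out for every $\eps>0$, as the lemma claims, whereas your inversion $Z(G)/Z_k(G)\le 1/(1-S)$ requires $S\le 1-e^{-\eps}$ and hence silently restricts to $\eps$ bounded (your ``vacuous for $\eps\ge 1$'' remark is fine for the TV claim but not literally for the ratio claim). Also note a small constants slip: the geometric tail actually yields $S\le\frac{1}{1-e^{-2\tau/3}}\cdot\frac{\eps}{2}$ rather than $\eps/2$; since $\tau\ge 5$ this factor is at most about $1.04$, and one checks $0.52\,\eps\le 1-e^{-\eps}$ on $(0,1]$, so your conclusion survives for the range of $\eps$ used in the applications, but the bookkeeping is slightly tighter than you state.
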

\begin{proof}
	Note that $Z_k(G) \leq Z(G)$ follows immediately from $\Omega_k \subseteq \Omega$. 
	For $\Gamma \in \Omega_k$, let $\Omega(\Gamma) = \{ \Gamma' \in \Omega : \Gamma' \cap \cCtrunc{k}(G) = \Gamma \}$ and let
	\begin{equation}
	\label{eq:hcpolypfGsplit-polymer}
	\zeta(\Gamma) = \sum_{\Gamma' \in \Omega(\Gamma)} \prod_{\gamma \in \Gamma'} w_\gamma, \mbox{ so that }Z(G) = \sum_{\Gamma \in \Omega_k} \zeta(\Gamma).
	\end{equation}
	Let $\cCtrunc{k}^+(G) = \cC(G) \setminus \cCtrunc{k}(G) = \{ \gamma \in \cC(G) : |\gamma| > k \}$ be the collection of all polymers of size greater than $k$. 
	Notice that for each $\Gamma \in \Omega_k$ we have the crude bound
	\begin{equation}
	\label{eq:splitbound-polymer}
	\zeta(\Gamma) \leq \prod_{\gamma \in \Gamma} w_\gamma \prod_{\gamma \in \cCtrunc{k}^+(G)} (1 + w_\gam).
	\end{equation}	
	Combining~\eqref{eq:hcpolypfGsplit-polymer} and~\eqref{eq:splitbound-polymer}, we obtain that
	\begin{equation}
	\label{eq:partition-trunc-polymer}
		Z(G) \leq \prod_{\gamma \in \cCtrunc{k}^+(G)} (1 + w_\gam) \sum_{\Gamma \in \Omega_k} \prod_{\gamma \in \Gamma} w_\gamma = Z_k(G) \prod_{\gamma \in \cCtrunc{k}^+(G)} (1 + w_\gam).
	\end{equation}
	
	Since $(1 + x) \leq e^x$ for all real $x$, we have that
	\begin{equation}
	\label{eq:logbound-polymer}
	\log \bigg( \prod_{\gamma \in \cCtrunc{k}^+(G)} (1 + w_\gamma) \bigg) \leq \sum_{\gamma \in \cCtrunc{k}^+(G)} w_\gamma \leq \sum_{\gamma \in \cCtrunc{k}^+(G)} e^{-\tau |\gamma|}.
	\end{equation} 	The last inequality follows from the fact that $(\cC( \cdot ), w,\mathcal{G})$ satisfies the polymer sampling condition with constant~$\tau$.
	Then we deduce from Lemma~\ref{lem:BCKL} that
	\begin{equation}
	\label{eq:splitonsize-polymer}
	\sum_{\gamma \in \cCtrunc{k}^+(G)} e^{-\tau |\gamma|} = \sum_{v\in V}\sum_{\ell \geq k} \sum_{\substack{\gamma \in \cCtrunc{k}^+(G) : \\ |\gamma| = \ell, \, v \in  {\gam}}} e^{-\tau \ell} \leq n \,\sum_{\ell \geq k} \left( \frac{e(q-1)\Delta}{e^{\tau}} \right)^\ell,
	\end{equation}
	and since $\tau \geq 5 + 3\log((q-1)\Delta)$, we get $e(q-1)\Delta \leq e^{\tau/3}$. It follows that
	\begin{equation}
	\label{eq:sumbound-polymer}
	\sum_{\ell \geq k} \left( \frac{e(q-1)\Delta}{e^{\tau}} \right)^\ell \leq 2 \exp\left( -\frac{2}{3} \tau k \right) = \frac{\eps}{n}.
	\end{equation}
	Combining~\eqref{eq:partition-trunc-polymer},~\eqref{eq:logbound-polymer},~\eqref{eq:splitonsize-polymer}, and~\eqref{eq:sumbound-polymer} yields $Z(G) \leq e^{\eps} Z_k(G)$, as needed. Finally, we bound the total variation distance between $\mu_{G}$ and $\mu_{G,k}$:
\[\| \mu_{G} - \mu_{G,k} \|_{TV} = \mu_{G}(\Omega \setminus \Omega_k) = \frac{Z(G) - Z_k(G)}{Z(G)} \leq 1 - e^{-\eps} \leq \eps\]
where the first equality is because $\mu_{G}(\Gamma) > \mu_{G,k}(\Gamma)$ if and only if $\Gamma \in \Omega \setminus \Omega_k$, for which we have $\mu_{G,k}(\Gamma) = 0$. This finishes the proof.
\end{proof}

\subsection{Applications}

In this section, we apply the previous results to show that (spin) Glauber dynamics for the ferromagnetic Potts and hard-core models mix in polynomial time on expander graphs, when restricted to configurations close to the ground states (which, as we have already seen, constitute the main portion of the probability space at low temperatures).

\subsubsection{Restricted Glauber for ferromagnetic Potts}\label{sec:f3f344}
In this section, we prove Theorem~\ref{thmPottsGlauber} for the $q$-color ferromagnetic Potts model. Throughout this section, we will work under the assumptions/conditions of Theorem~\ref{thmPotts}. 
That is, we fix a real number $\alpha>0$, integers $q \geq 2$ and $\Delta \geq 3$ and a real number
$\beta\geq \frac{5+3\log((q-1)\Delta)}{\alpha}$.
We let $\mathcal{G}$ be the class of $\alpha$-expander graphs~$G$ with maximum degree at most~$\Delta$.

Let $G$ be an $n$-vertex graph in $\mathcal{G}$
and let $\epsilon$ be a value in $(q e^{-n},1)$.
As in Section~\ref{sec:appPotts}, we will consider the polymer model $(\mathcal{C}^g,w^g)$ whose polymers are connected subgraphs of $G$ with at most $n/2$  vertices, which are labeled by the remaining colors $[q] \setminus \{ g\}$. In fact, following Section~\ref{sec:trunc-polymer}, we will work with a truncation of this model. Namely, for $M>0$, let $(\mathcal{C}^g_M,w^g)$ be the polymer model on $G$ restricted to polymers of size at most $M$.

\begin{observation}
For every $M>0$, the set $\Omega^{g}_M(G)$, as defined in Section~\ref{sec:firstcomparison}, is precisely the set of allowable polymer configurations in the truncated polymer model $(\mathcal{C}^g_M,w^g)$.
\end{observation}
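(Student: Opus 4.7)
The plan is to exhibit an explicit bijection between colorings $\sigma \in \Omega^g_M(G)$ and compatible configurations $\Gamma \subseteq \mathcal{C}^g_M$ of the truncated polymer model, following the map $f$ and its inverse $f^{-1}$ described at the start of Section~\ref{secCompare} (where for the Potts model the ground state spin is $g_v = g$ for every vertex $v$).

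First I would define the map from configurations to colorings. Given a compatible $\Gamma \subseteq \mathcal{C}^g_M$, let $f(\Gamma)$ be the coloring that assigns to each $v \in \gamma$ (for a unique $\gamma \in \Gamma$, unique because polymers in $\Gamma$ are pairwise vertex-disjoint by compatibility) the color $\sigma_\gamma(v) \in [q]\setminus\{g\}$, and assigns $g$ to every remaining vertex. I need to check $f(\Gamma) \in \Omega^g_M(G)$, i.e.\ that every connected component (in $G$) of non-$g$-colored vertices has size at most $M$. The non-$g$-colored vertices are exactly $\bigcup_{\gamma \in \Gamma} \gamma$, and since distinct polymers in $\Gamma$ are at distance $\geq 2$ in $G$, there is no edge of $G$ between different polymers; each polymer $\gamma$ is itself connected in $G$ by definition, so the non-$g$-colored connected components of $G$ are precisely the polymers in $\Gamma$, each of size $\leq M$ by truncation.

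Next I would define the inverse map. Given $\sigma \in \Omega^g_M(G)$, let $V_{\neq g}(\sigma) = \{v : \sigma(v) \neq g\}$ and let $C_1, \ldots, C_r$ be the connected components of the subgraph of $G$ induced by $V_{\neq g}(\sigma)$. For each $C_j$, form a polymer $\gamma_j$ whose vertex set is $C_j$ and whose spin assignment $\sigma_{\gamma_j}$ is $\sigma|_{C_j}$; put $f^{-1}(\sigma) = \{\gamma_1, \ldots, \gamma_r\}$. I need to verify that each $\gamma_j \in \mathcal{C}^g_M$: it is a connected subset of $G$, each label $\sigma(v)$ lies in $[q]\setminus\{g\}$, and $|\gamma_j| = |C_j| \leq M$ by the defining property of $\Omega^g_M(G)$. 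I also need compatibility: because $C_j$ and $C_{j'}$ are distinct components of the subgraph induced by $V_{\neq g}(\sigma)$, there is no edge of $G$ between them, and any vertex adjacent (in $G$) to both would itself have a non-$g$ color and thus force a merge of the components; hence their $G$-distance is at least $2$, which is the compatibility condition.

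Finally I would verify that the two maps are mutually inverse: applying $f \circ f^{-1}$ to a coloring $\sigma$ returns the coloring that agrees with $\sigma$ on $V_{\neq g}(\sigma)$ and sets the remaining vertices to $g$, which is $\sigma$ itself; applying $f^{-1} \circ f$ to a configuration $\Gamma$ recovers exactly the connected components together with their labels, which are the original polymers of $\Gamma$. There is no substantive obstacle here; the only point requiring mild care is the compatibility argument in the previous paragraph, which crucially uses that a vertex bridging two would-be components would be colored non-$g$ and thus placed in a single component.
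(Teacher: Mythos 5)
Your argument is correct, and it is exactly the identification (via the map $f$ and its inverse from Section~\ref{secCompare}) that the paper treats as immediate --- the Observation is stated there without proof. One correction to your closing remark, though: the clause asserting that any vertex of $G$ adjacent to two distinct non-$g$ components must itself be non-$g$ colored is false (a $g$-colored vertex may well neighbour both), and it is also not needed, since compatibility only demands distance at least $2$ in $G$, which already follows from the fact that distinct components of the subgraph induced by $\{v:\sigma(v)\neq g\}$ are disjoint and joined by no edge of $G$; a $g$-colored common neighbour gives distance exactly $2$, which is still compatible. So the only genuinely ``crucial'' point is that an edge between two such components would merge them, which you do state correctly.
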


\thmPottsGlauber*
\begin{proof}
We let $\mathcal{G}$ be the class of $\alpha$-expanders with maximum degree at most~$\Delta$.
For the given $n$-vertex graph $G\in \mathcal{G}$,
let $\mu^g_{G,M}$ be the Gibbs distribution of the polymer model $(\mathcal{C}^g_M(G),w^g)$.

By Lemma~\ref{SampleConditionPotts}, we have that $(\mathcal{C}^g(\cdot),w^g,\mathcal{G})$ satisfies the polymer sampling condition with $\tau=\alpha \beta$ and hence so does the truncated polymer model $(\mathcal{C}^g_M(\cdot),w^g,\mathcal{G})$. The result therefore follows by applying Theorem~\ref{thm:rglaubermix}, after observing that (i) the polymer model $(\mathcal{C}^g_M(G),w^g)$ is single-update-compatible (use DFS ordering), and (ii) for a pair of polymer configurations $\Gamma,\Gamma'\in \Omega^{g}_M$  whose corresponding spin configurations $\sigma,\sigma'$ differ at a vertex, we have
\begin{equation*}
\frac{\mu^g_{G,M}(\Gamma)}{\mu^{g}_{G,M}(\Gamma')}=\frac{e^{-\beta m(G,\sigma) }}{e^{-\beta m(G,\sigma') }}\in [1/\eta,\eta] 
\end{equation*}
where $\eta=\exp(\beta \Delta)$ (since $G$ has maximum degree $\Delta$, changing the spin of a vertex can create at most $\Delta$ new monochromatic/bichromatic edges). This finishes the proof.
\end{proof}

The following lemma justifies that the set $\Omega^{g}_M(G)$ with $M=O(\log (n/\epsilon))$ constitutes for all but $\epsilon$ of the aggregate weight of colorings in the Potts distribution on $G$. 

\begin{lemma}\label{lem:f34f34}
Suppose $q\geq 2$, $\Delta \geq 3$ are integers and $\alpha>0$ is a real. Let $\beta \geq \frac{5+3\log((q-1) \Delta)}{\alpha}$ be a real number and $g \in [q]$. Then, for any $n$-vertex $\alpha$-expander graph $G$ of maximum degree $\Delta$ and any $q e^{-n}\leq \epsilon<1$, for $M=\frac{3\log(4n/\eps)}{2\alpha\beta}$,  $q Z^g_M(G)$ is an $\epsilon$-approximation of the Potts partition function $Z_{G,\beta}$.
\end{lemma}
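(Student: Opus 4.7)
The plan is to combine the earlier approximation statement for the (untruncated) polymer model (Lemma~\ref{lemPottsApprox}) with the general polymer truncation lemma (Lemma~\ref{lem:hctruncate-polymer}) applied to the Potts polymer model $(\mathcal{C}^g(\cdot), w^g, \mathcal{G})$.

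First, I would observe that the hypotheses of Lemma~\ref{lemPottsApprox} are satisfied: since $\beta \geq \frac{5+3\log((q-1)\Delta)}{\alpha} > \frac{2\log(eq)}{\alpha}$, we get that $qZ^g(G)$ is an $e^{-n}$-approximation of $Z_{G,\beta}$, i.e.\
\[
e^{-e^{-n}} Z_{G,\beta} \;\leq\; q Z^g(G) \;\leq\; e^{e^{-n}} Z_{G,\beta}.
\]
Next, by Lemma~\ref{SampleConditionPotts}, the polymer model $(\mathcal{C}^g(\cdot), w^g, \mathcal{G})$ satisfies the polymer sampling condition with constant $\tau = \alpha\beta \geq 5 + 3\log((q-1)\Delta)$, so the hypotheses of Lemma~\ref{lem:hctruncate-polymer} apply to this model.

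Now I would apply Lemma~\ref{lem:hctruncate-polymer} with error parameter $\epsilon/2$. With $\tau = \alpha\beta$, the truncation threshold in that lemma becomes
\[
k \;=\; \frac{3\log(2n/(\epsilon/2))}{2\alpha\beta} \;=\; \frac{3\log(4n/\epsilon)}{2\alpha\beta} \;=\; M,
\]
which matches precisely our choice of $M$. The lemma then yields
\[
Z^g_M(G) \;\leq\; Z^g(G) \;\leq\; e^{\epsilon/2}\, Z^g_M(G).
\]
Multiplying by $q$ and chaining with the previous two-sided bound gives
\[
e^{-\epsilon/2 - e^{-n}} Z_{G,\beta} \;\leq\; q Z^g_M(G) \;\leq\; e^{e^{-n}} Z_{G,\beta}.
\]

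Finally, the hypothesis $\epsilon \geq q e^{-n}$ and $q \geq 2$ imply $e^{-n} \leq \epsilon/2$, so both $\epsilon/2 + e^{-n} \leq \epsilon$ and $e^{-n} \leq \epsilon$ hold, giving $e^{-\epsilon} Z_{G,\beta} \leq qZ^g_M(G) \leq e^{\epsilon} Z_{G,\beta}$, as required. There is no real obstacle here; the only thing to be careful about is bookkeeping the error: splitting it into a piece of size $\epsilon/2$ from the truncation and a piece of size at most $\epsilon/2$ from the earlier ground-state approximation of Lemma~\ref{lemPottsApprox}, using the lower bound on $\epsilon$ to absorb the $e^{-n}$ term.
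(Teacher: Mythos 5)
Your proposal is correct and follows essentially the same route as the paper: apply Lemma~\ref{lemPottsApprox} for the untruncated model, verify the polymer sampling condition via Lemma~\ref{SampleConditionPotts} with $\tau=\alpha\beta$, and invoke the truncation lemma (Lemma~\ref{lem:hctruncate-polymer}) with error $\epsilon/2$ so that its threshold equals $M$, absorbing the $e^{-n}$ term using $\epsilon \geq q e^{-n}$. The only cosmetic difference is that the paper packages the first step directly as an $(\epsilon/2)$-approximation rather than carrying the $e^{-n}$ term through the chaining, which is equivalent bookkeeping.
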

\begin{proof}
Let $Z^g(G)$ be the partition function of the polymer model $(\mathcal{C}^g(G),w^g)$. Since $\beta\geq\frac{ 5+ 3 \log ((q-1)\Delta)   }{\alpha   } > \frac{2 \log (eq)}{\alpha}$ and $\epsilon\geq q e^{-n}$, by Lemma~\ref{lemPottsApprox} we have that $qZ^g(G)$ is an $(\epsilon/2)$-approximation to $Z_{G,\beta}$. 
If $\mathcal{G}$ is the class of $\alpha$-expanders with maximum degree at most~$\Delta$ then
by Lemma~\ref{SampleConditionPotts}, we have that $(\mathcal{C}^g(\cdot),w^g,\mathcal{G})$ satisfies the polymer sampling condition with $\tau=\alpha \beta$ and hence so does the truncated polymer model $(\mathcal{C}^g_M(\cdot),w^g,\mathcal{G})$. It follows by Lemma~\ref{lem:hctruncate-polymer} that, for $M=\frac{3\log(4n/\eps)}{2\alpha\beta}$,  $q Z^g_M(G)$ is an  $(\epsilon/2)$-approximation to $qZ^g(G)$. Therefore, $q Z^g_M(G)$ is an $\epsilon$-approximation to $Z_{G,\beta}$.
\end{proof}

\subsubsection{Restricted Glauber dynamics for hard-core mixes in polynomial time}
In this section, we state and prove the analogue of Theorem~\ref{thmPottsGlauber} for the hard-core model. In particular, let $G=(\Vpart{0},\Vpart{1},E)$ be an $n$-vertex $\alpha$-expanding bipartite graph of maximum degree $\Delta$, and for $i\in \{0,1\}$ and $M>0$, let $\Omega^{i}_M(G)$ denote the independent sets $I$ whose deviations from the ground state $\Vpart{i}$ consists of small connected components, more precisely, $(\Vpart{i}\backslash I) \cup (I\cap \Vpart{1-i})$ consists of connected components of size at most $M$. Using similar methods to Section~\ref{sec:f3f344}, we will show the following.
\begin{theorem}\label{thmHCGlauber}
Fix $\alpha>0$, $\Delta \geq 3$ and $\lambda\geq (6\Delta)^{3+6/\alpha}$.
For any $n$-vertex bipartite $\alpha$-expander with maximum degree at most~$\Delta$ and any
$\epsilon\in (0,1)$ and $i\in\{0,1\}$, with $M = O(\log(n/\eps))$, the Glauber dynamics restricted to $\Omega^i_M(G)$ has mixing time $\Tmix(\eps)$ polynomial in $n$ and $1/\eps$.
\end{theorem}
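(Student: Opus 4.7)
The plan is to mirror the Potts argument of Section~\ref{sec:f3f344}: pick a truncation size $M_0=\Theta(\log(n/\eps))$, work with the truncated polymer model $(\Cpart{1-i}_{M_0}(G^2),w^{1-i})$ of Section~\ref{sec:apphardcore} (whose ground state independent set is $\Vpart{i}$), apply Theorem~\ref{thm:rglaubermix} to its restricted (polymer) Glauber dynamics, and finally transfer the conclusion to the spin Glauber on $\Omega^i_M(G)$ via a Markov-chain comparison. As in the proof of Theorem~\ref{thmHC}, $\alpha$-expansion gives $w^{1-i}_\gamma\leq \lambda^{-\alpha|\gamma|}$, and the condition $\lambda\geq (6\Delta)^{3+6/\alpha}$ yields $\alpha\log\lambda\geq 5+3\log(\Delta^2)$, so the polymer sampling condition~\eqref{eqSampleCondition} holds on the host graph $G^2$ with $\tau=\alpha\log\lambda$; in particular the polymer mixing condition holds.

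To invoke Theorem~\ref{thm:rglaubermix} I would then check the two remaining hypotheses. Single-update-compatibility is obtained from any DFS ordering of a $G^2$-connected polymer. For the density-ratio condition, observe that under the map $f$ of Section~\ref{secCompare}, a vertex $v\in\Vpart{i}$ always carries its ground-state spin, so any two configurations $\Gamma,\Gamma'\in\Omega$ whose spin images differ at exactly one vertex must differ at a vertex $v\in\Vpart{1-i}$, with $\Gamma'$ obtained from $\Gamma$ by adding or removing $v$ to/from a polymer. Using $w^{1-i}_\gamma=\lambda^{|\gamma|}/(1+\lambda)^{|N_G(\gamma)|}$, the resulting ratio of weights equals $\lambda(1+\lambda)^{-k}$ for some $0\le k\le \deg_G(v)\leq \Delta$, which is bounded in both directions by $\eta:=(2\lambda)^\Delta$, polynomial in $\lambda,\Delta$. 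Theorem~\ref{thm:rglaubermix} then yields that the polymer Glauber on the truncated model mixes in time $(n/\eps)^{O(\Delta\log\lambda)}$, which is polynomial in $n$ and $1/\eps$.

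To convert polymer-Glauber mixing into hard-core Glauber mixing on $\Omega^i_M(G)$ I would exploit the product structure of the Gibbs measure within each polymer class. Every $I\in\Omega^i_M(G)$ is determined by a polymer configuration $\Gamma\in\Cpart{1-i}_{M_0}$ (the $G^2$-components of $I\cap\Vpart{1-i}$, where $M_0\approx M/(1+\Delta)$ is chosen so that $|\gamma|+|N_G(\gamma)|\le M$ for every polymer $\gamma$) together with a free occupancy of each vertex in $\Vpart{i}\setminus N_G(\cup\gamma)$, and the conditional hard-core distribution on this free part is the product measure with marginal $\lambda/(1+\lambda)$. A hard-core Glauber step on a $\Vpart{1-i}$-vertex matches exactly a single-spin update of the polymer Glauber above, while a step on a $\Vpart{i}$-vertex is either vacuous (when the vertex is in $N_G(\cup\gamma)$) or a heat-bath update of the product measure, which alone mixes in $O(n\log(n/\eps))$ steps. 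A Markov-chain comparison in the style of Lemma~\ref{lem:comparison}, routing polymer-Glauber moves through the corresponding spin-Glauber canonical paths, then converts the polynomial polymer-Glauber mixing into polynomial spin-Glauber mixing, and Lemma~\ref{lem:hctruncate-polymer} ensures that the polymer truncation loses at most $\eps$-mass.

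The main obstacle is this final step. Unlike the Potts case, where $\Omega^g_M(G)$ bijects directly with the allowed polymer configurations, the hard-core polymer model only encodes $I\cap\Vpart{1-i}$, so one must handle the interaction between the product measure on $\Vpart{i}$ and the polymer dynamics on $\Vpart{1-i}$: single-vertex moves on $\Vpart{1-i}$ alter $N_G(\cup\gamma)$ and hence change which $\Vpart{i}$-vertices are ``free'', so $\Omega^i_M(G)$ is not literally a disjoint product of the two chains. In addition, the canonical paths used in the comparison must be chosen so that every intermediate configuration stays inside $\Omega^i_M(G)$, which is precisely why the stricter truncation $M_0\approx M/(1+\Delta)$ is taken rather than $M_0=M$.
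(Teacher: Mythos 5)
There is a genuine gap, and it is exactly at the point you yourself flag as ``the main obstacle'': the transfer from the polymer chain to the spin Glauber dynamics on $\Omega^i_M(G)$ is not carried out, and as sketched it does not go through. Applying Theorem~\ref{thm:rglaubermix} to the Section~\ref{sec:apphardcore} model $(\Cpart{1-i}(G^2),\wpart{1-i})$ gives rapid mixing of a chain whose states record only $I\cap\Vpart{1-i}$ (the $\Vpart{i}$-occupancies are integrated out into the weights $\lambda^{|\gamma|}/(1+\lambda)^{|N_G(\gamma)|}$); this is not the hard-core Glauber dynamics on $\Omega^i_M(G)$, whose state space consists of full independent sets and whose moves also toggle $\Vpart{i}$-vertices. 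The comparison tool you invoke (Lemma~\ref{lem:comparison}, Diaconis--Saloff-Coste/Randall--Tetali) requires the two chains to be reversible with the \emph{same} stationary distribution on the \emph{same} state space, so it cannot bridge a marginal chain and the full chain; what your route actually needs is a Markov-chain decomposition argument (a projection chain on the $\Vpart{1-i}$-patterns together with restriction chains that are product measures on the free $\Vpart{i}$-vertices, in the style of Madras--Randall), and that is a substantial missing piece rather than a routine application of Lemma~\ref{lem:comparison}. A secondary mismatch: truncating at $|\gamma|\le M_0\approx M/(1+\Delta)$ does not reproduce $\Omega^i_M(G)$ for any $M$; one can check that $I\in\Omega^i_M(G)$ exactly when every $G^2$-component $S$ of $I\cap\Vpart{1-i}$ satisfies $|S|+|N_G(S)|\le M$, which is not a pure size threshold on $|S|$, so even after a successful comparison you would have bounded the mixing time of the Glauber dynamics restricted to a strictly smaller set than the one in the statement.

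The paper sidesteps all of this by changing the polymer model: for the restricted-Glauber result it uses the model of \cite{JKP} with host graph $G$ itself, in which a polymer is a connected set $\gamma\subseteq V(G)$ consisting of both the unoccupied $\Vpart{i}$-vertices and the occupied $\Vpart{1-i}$-vertices of the deviation, with weight $\wpart{i}_\gamma=\lambda^{|\gamma\cap\Vpart{1-i}|}/\lambda^{|\gamma\cap\Vpart{i}|}$. With this choice the truncated polymer configurations are in bijection with $\Omega^i_M(G)$, and a single-spin update of the restricted polymer Glauber dynamics of Section~\ref{secCompare} is exactly a single-vertex update of the hard-core Glauber dynamics, so Theorem~\ref{thm:rglaubermix} applies directly with $\eta=\lambda$; the only remaining work is Lemma~\ref{lem:4t4g4}, which verifies the polymer sampling condition with $\tau=\frac{\alpha}{2+\alpha}\log\lambda$ using vertex expansion (every occupied $\Vpart{1-i}$-vertex forces all its neighbours into the polymer, whence $|\gamma\cap\Vpart{i}|\ge(1+\alpha)|\gamma\cap\Vpart{1-i}|$). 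To repair your argument you should either switch to this polymer model, or supply the projection/restriction decomposition argument in full.
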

As we shall see soon in the upcoming Lemma~\ref{lem:3tt5b4666}, and for $\lambda$ large enough, the set $\Omega^0_M(G)\cup\Omega^1_M(G)$ for $M=\Theta(\log(n/\eps))$ captures all but $\epsilon$ weight of the hard-core partition function and hence Theorem~\ref{thmHCGlauber} can be used to obtain another polynomial time algorithm for the hard-core model on expanding graphs $G$ in that regime.

To prove Theorem~\ref{thmHCGlauber}, it will be simpler to work with somewhat different polymer models than those in Section~\ref{sec:apphardcore}. These models were originally used in~\cite{JKP} (the conference version of \cite{JKP2}). For $i\in \{0,1\}$, and following \cite{JKP}, we will define a polymer model $(\Cpart{i}(G),\wpart{i})$.  The host graph will be $G$ and the model will capture deviations from the ground state $\Vpart{i}$: a polymer $\gamma$ will  be a connected set of vertices in $G$ such that $(\Vpart{i}\backslash I) \cup (I\cap \Vpart{1-i})=\gamma$ for some independent set $I$. Specifically, the set $\Cpart{i} = \Cpart{i}(G)$ of allowed polymers consists of all connected sets of vertices $\gamma$ (in $G$) such that $| \gamma \cap \Vpart{i}| \leq |\Vpart{i}|/4$ and for any $v\in \gamma\cap \Vpart{\opposite{i}}$, all of the neighbors of $v$ (in $G$) are also in $\gamma$. The set of spins is $\{0,1\}$ and the ground state spin for a vertex $v$ is $g_v=1$ if $v\in \Vpart{i}$, and $g_v=0$ if $v\in \Vpart{1-i}$; the spin assignment $\sigma_\gamma$ for a polymer $\gamma$ is given by $1- g_v$ for $v\in \gamma$.   The weight of a polymer $\gamma\in \Cpart{i}$ is defined as 
\[
\wpart{i}_\gamma = \frac{\lambda^{| \gamma \cap \Vpart{\opposite{i}}|}}{\lambda^{|\gamma \cap \Vpart{i}|}}.
 \] 
The main observation behind the definition of the weight $\wpart{i}_\gamma$ is that the weight of an independent set $I$ such that $(\Vpart{i}\backslash I) \cup (I\cap \Vpart{1-i})=\gamma$ is $\lambda^{|\Vpart{i}|}\wpart{i}_\gamma$. 

Following again Section~\ref{sec:trunc-polymer}, it will be relevant to consider, for $M>0$,  the truncated polymer model $(\mathcal{C}^i_M(G),w^i)$ whose polymers are of size at most $M$; observe that the set $\Omega^i_M$ defined above is precisely the set of allowable polymer configurations in the truncated polymer model. We next verify the polymer sampling condition \eqref{eqSampleCondition} for these models and conclude the proof of Theorem~\ref{thmHCGlauber}.
\begin{lemma}\label{lem:4t4g4}
Fix $\alpha>0$ and $\Delta \geq 3$. Let $\mathcal{G}$ be the class of bipartite $\alpha$-expanders with maximum degree at most~$\Delta$.
For $\lambda\geq (6\Delta)^{3+6/\alpha}$ and $i\in \{0,1\}$, the polymer model $(\mathcal{C}^i(\cdot), w^i,\mathcal{G})$ satisfies the polymer sampling condition \eqref{eqSampleCondition} with $\tau=\frac{\alpha}{2+\alpha} \log \lambda$.
\end{lemma}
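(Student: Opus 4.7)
The plan is to reduce the sampling-condition weight bound to a vertex-count inequality between the two sides of the polymer, and then verify that inequality by two applications of bipartite expansion. Fix a polymer $\gam\in\Cpart{i}(G)$ and write $X=\gam\cap \Vpart{i}$ and $S=\gam\cap \Vpart{1-i}$. By the definition of $\Cpart{i}(G)$ we have $|X|\le |\Vpart{i}|/4$, and the closure condition on $V_{1-i}$-vertices forces $N_G(S)\subseteq X$. Since $\wpart{i}_\gam=\lambda^{|S|-|X|}$, a short algebraic manipulation using $|\gam|=|X|+|S|$ shows that the target bound $\wpart{i}_\gam\le \lambda^{-\frac{\alpha}{2+\alpha}|\gam|}=e^{-\tau|\gam|}$ is equivalent to the single inequality
\[
|X|\ge (1+\alpha)\,|S|.
\]

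The heart of the proof is to establish that $|S|\le |\Vpart{1-i}|/2$, because then bipartite expansion applied to $S\subseteq \Vpart{1-i}$ gives $|X|\ge |N_G(S)|\ge (1+\alpha)|S|$ directly. I would prove the size bound on $|S|$ by contradiction: suppose $|S|>|\Vpart{1-i}|/2$, and pick any $S'\subseteq S$ with $|S'|=\lfloor |\Vpart{1-i}|/2\rfloor$. Then $N_G(S')\subseteq N_G(S)\subseteq X$, so expansion applied to $S'$ yields
\[
|X|\ge |N_G(S')|\ge (1+\alpha)\lfloor |\Vpart{1-i}|/2\rfloor.
\]
Applying expansion to a half-sized subset of $\Vpart{i}$ gives the side-size comparison $|\Vpart{1-i}|\ge (1+\alpha)\lfloor|\Vpart{i}|/2\rfloor$, and chaining the two estimates produces $|X|\ge (1+\alpha)^2|\Vpart{i}|/4$ up to lower-order rounding. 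Since $(1+\alpha)^2>1$ for every $\alpha>0$, this contradicts $|X|\le |\Vpart{i}|/4$ (the handful of very small graphs where rounding could matter can be ruled out since a bipartite $\alpha$-expander requires a minimum number of vertices on each side anyway).

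Once $|X|\ge (1+\alpha)|S|$ is in hand, the rest is arithmetic: $|\gam|\ge(2+\alpha)|S|$ gives $|X|-|S|=|\gam|-2|S|\ge \frac{\alpha}{2+\alpha}|\gam|$, so $\wpart{i}_\gam=\lambda^{|S|-|X|}\le \lambda^{-\frac{\alpha}{2+\alpha}|\gam|}=e^{-\tau|\gam|}$, which is the polymer sampling condition with $\tau=\frac{\alpha}{2+\alpha}\log\lambda$. I expect the main obstacle to be exactly the contradiction argument for the case $|S|>|\Vpart{1-i}|/2$, because bipartite expansion as stated only controls sets of size at most half of their side: the bound must be recovered by truncating $S$ to a half-side-sized subset and chaining two separate expansion estimates, one on that subset and one on a half-sized subset of $\Vpart{i}$ in order to compare the sizes of $\Vpart{0}$ and $\Vpart{1}$.
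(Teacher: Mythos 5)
Your main argument is correct and is essentially the paper's: the closure property of polymers in $\Cpart{i}(G)$ (every $v\in\gamma\cap \Vpart{1-i}$ has all its neighbours in $\gamma$) plus bipartite expansion gives $|\gamma\cap \Vpart{i}|\ge(1+\alpha)|\gamma\cap \Vpart{1-i}|$, and the algebra $|X|-|S|\ge\frac{\alpha}{2+\alpha}|\gamma|$ then yields $w^i_\gamma\le e^{-\tau|\gamma|}$; your equivalence with $|X|\ge(1+\alpha)|S|$ checks out. Your contradiction argument showing $|S|\le|\Vpart{1-i}|/2$ (truncate $S$ to a half-side-sized subset, then compare $|\Vpart{0}|$ and $|\Vpart{1}|$ by a second application of expansion) addresses a point that the paper's own one-line proof leaves implicit, since the expansion hypothesis only applies to sets of at most half the side; spelling this out is a genuine improvement in rigour, modulo the small-graph rounding caveats you acknowledge.

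The one omission: the polymer sampling condition~\eqref{eqSampleCondition} is defined to hold \emph{with a constant} $\tau\ge 5+3\log((q-1)\Delta)$, so besides the weight decay you must verify this threshold, and this is the only place where the hypothesis $\lambda\ge(6\Delta)^{3+6/\alpha}$ enters. Here the model has $q=2$ spins on the host graph $G$ of maximum degree $\Delta$, so one needs $\tau\ge 5+3\log\Delta$; since $3+6/\alpha=\frac{3(2+\alpha)}{\alpha}$, the hypothesis gives
\[
\tau=\frac{\alpha}{2+\alpha}\log\lambda\;\ge\;3\log(6\Delta)=3\log 6+3\log\Delta\;>\;5+3\log\Delta .
\]
Without this line the conclusion ``satisfies the polymer sampling condition with $\tau=\frac{\alpha}{2+\alpha}\log\lambda$'' is not established as defined (and the downstream uses, e.g.\ Theorem~\ref{thmPolySample} and Lemma~\ref{lem:hctruncate-polymer}, rely on the threshold). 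Adding this one-line check completes your proof.
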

\begin{proof}
We have $\tau\geq 5+3\log \Delta$, so it suffices to show that for $G\in \mathcal{G}$ and $\gamma\in(\mathcal{C}^i(G), w^i)$ it holds that $w^i_\gamma\leq e^{-\tau|\gamma|}$.  For $v\in \gamma\cap \Vpart{\opposite{i}}$ we have that all of the neighbors of $v$  are also in $\gamma$ and hence, by the $\alpha$-expansion of $G$, we have that  $|\gamma\cap \Vpart{i}|\geq (1+\alpha)|\gamma\cap \Vpart{1-i}|$. This gives $(2+\alpha)|\gamma\cap \Vpart{i}|\geq (1+\alpha)|\gamma|$ and therefore
\[w^i_\gamma=\frac{\lambda^{| \gamma \cap \Vpart{\opposite{i}}|}}{\lambda^{|\gamma \cap \Vpart{i}|}}\leq \frac{\lambda^{\frac{1}{1+\alpha}|\gamma \cap \Vpart{i}|}}{\lambda^{|\gamma \cap \Vpart{i}|}}\leq \lambda^{-\frac{\alpha}{2+\alpha}|\gamma|}=e^{-\tau|\gamma|}.\qedhere\]
\end{proof}

\begin{proof}[Proof of Theorem~\ref{thmHCGlauber}]
Let $\mathcal{G}$ be the class of bipartite $\alpha$-expanders with maximum degree at most~$\Delta$.
Consider an $n$-vertex graph $G\in \mathcal{G}$ and let
  $\mu^i_{G,M}$ be the Gibbs distribution of the model $(\mathcal{C}^i_M(G),w^i)$.

By Lemma~\ref{lem:4t4g4}, we have that $(\mathcal{C}^i(\cdot),w^i,\mathcal{G})$ satisfies the polymer sampling condition with $\tau=\frac{\alpha}{2+\alpha} \log \lambda$ and hence so does the truncated polymer model $(\mathcal{C}^i_M(\cdot),w^i,\mathcal{G})$.  The result therefore follows by applying Theorem~\ref{thm:rglaubermix}, after observing that (i) the polymer model $(\mathcal{C}^i_M(G),w^i)$ is single-update-compatible (use DFS ordering), and (ii) for a pair of polymer configurations $\Gamma,\Gamma'\in \Omega^{i}_M$  whose corresponding independent sets  $I, I'$ differ in at most one vertex, we have
\begin{equation*}
\frac{\mu^i_{G,M}(\Gamma)}{\mu^{i}_{G,M}(\Gamma')}=\frac{\lambda^{|I|}}{\lambda^{|I'|}}\in [1/\lambda,\lambda].\qedhere
\end{equation*}
\end{proof}
Finally, we justify that, for large enough $\lambda$ and $M=\Theta(\log(n/\eps))$, the aggregate weight of independent sets in $\Omega^0_M(G)\cup\Omega^1_M(G)$ captures all but $\epsilon$ fraction of the hard-core partition function $Z_{G,\lambda}$.  Let $Z^i(G)$ denote the partition function of the polymer model $(\Cpart{i}, \wpart{i})$ and $Z^i_M(G)$ denote the partition function of the polymer model $(\Cpart{i}_M, \wpart{i}_M)$. 
We will need the following lemma from \cite{JKP}.
\begin{lemma}[{\cite[Lemmas 4.1 \& 4.2]{JKP}}]
\label{lemHCApprox}
For $\lambda>\max\big\{(2e)^{\frac{8n}{\alpha n_0}},(2e)^{\frac{8n}{\alpha n_1}}, (2e)^{(40/\alpha)}\big\}$, the number  $\lam^{n_0} Z^0(G) + \lam^{n_1} Z^1(G)$ is a $(2e^{-n})$-approximation of the hard-core partition function $Z_{G,\lambda}$, where $n_i = |\Vpart{i}|$ for $i\in \{0,1\}$.
\end{lemma}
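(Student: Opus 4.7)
My plan is to prove the approximation via a Pirogov--Sinai-style decomposition of $Z_{G,\lambda}$ based on each independent set's closeness to the two ground states. I would first verify the identity $\lam^{n_i}Z^i(G)=\sum_{I\in\mathcal{I}_i}\lam^{|I|}$, where $\mathcal{I}_i$ is the collection of ``$i$-small'' independent sets: those $I$ for which every $G$-connected component $\gamma$ of the defect $D_i(I):=(\Vpart{i}\setminus I)\cup(\Vpart{1-i}\cap I)$ satisfies $|\gamma\cap\Vpart{i}|\leq n_i/4$. The bijection is direct (the components of $D_i(I)$ are the polymers of the corresponding configuration), and a short computation shows $\lam^{|I|}=\lam^{n_i}\prod_{\gamma\in\Gamma}\wpart{i}_\gamma$. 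Combining with $Z_{G,\lambda}=\sum_I\lam^{|I|}$, inclusion-exclusion gives
\[
\lam^{n_0}Z^0(G)+\lam^{n_1}Z^1(G)-Z_{G,\lambda}=\sum_{I\in\mathcal{I}_0\cap\mathcal{I}_1}\lam^{|I|}-\sum_{I\notin\mathcal{I}_0\cup\mathcal{I}_1}\lam^{|I|}.
\]
Since $Z_{G,\lambda}\geq\lam^{\max(n_0,n_1)}$, it suffices to show that each of these two sums is bounded by $e^{-n}\lam^{\max(n_0,n_1)}$.

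For the ``bad'' sets $I\notin\mathcal{I}_0\cup\mathcal{I}_1$, set $a=|I\cap\Vpart{0}|$ and $b=|I\cap\Vpart{1}|$; such $I$ satisfy $n_0-a>n_0/4$ and $n_1-b>n_1/4$. When $b\leq n_1/2$, the containment $N_G(I\cap\Vpart{1})\subseteq\Vpart{0}\setminus I$ combined with $\alpha$-expansion of $I\cap\Vpart{1}$ gives $\lam^{|I|}\leq\lam^{n_0-\alpha b}\leq\lam^{n_0-\alpha n_1/4}$; symmetric bounds hold when $a\leq n_0/2$, and applying expansion to whichever of the complementary sets $\Vpart{i}\setminus I$ is below $n_i/2$ handles the remaining subcases. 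A case analysis delivers a loss factor of at least $\lam^{-\alpha\min(n_0,n_1)/8}$ relative to $\lam^{\max(n_0,n_1)}$ in every scenario, so summing over at most $2^n$ bad sets bounds the bad contribution by $2^n\lam^{-\alpha\min(n_0,n_1)/8}\lam^{\max(n_0,n_1)}\leq e^{-n}\lam^{\max(n_0,n_1)}$ exactly when $\lam\geq(2e)^{8n/(\alpha\min(n_0,n_1))}$, matching the stated thresholds.

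For the ``sparse'' sets $I\in\mathcal{I}_0\cap\mathcal{I}_1$, componentwise expansion (as in the proof of Lemma~\ref{lem:4t4g4}) gives $|\gamma\cap\Vpart{i}|\geq\frac{1+\alpha}{2+\alpha}|\gamma|$ for each component $\gamma$ of $D_i(I)$, using the polymer closure $N_G(\gamma\cap\Vpart{1-i})\subseteq\gamma\cap\Vpart{i}$ together with $|\gamma\cap\Vpart{i}|\leq n_i/4\leq n_i/2$. Summing over the components of both $D_0(I)$ and $D_1(I)$, which partition $V$, yields $|\Vpart{0}\setminus I|+|\Vpart{1}\setminus I|\geq\frac{1+\alpha}{2+\alpha}n$, hence $|I|\leq n/(2+\alpha)$. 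Each sparse set thus contributes at most $\lam^{-\alpha n/(2(2+\alpha))}$ relative to $\lam^{\max(n_0,n_1)}$, and summing over at most $2^n$ sparse sets gives the desired $e^{-n}\lam^{\max(n_0,n_1)}$ bound whenever $\lam\geq(2e)^{40/\alpha}$.

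The main obstacle I anticipate is the bad-set case analysis: one must apply $\alpha$-expansion to whichever of $\{I\cap\Vpart{i},\Vpart{i}\setminus I\}$ lies below $n_i/2$ in each subcase and carefully track the constants so that the loss factor $\lam^{-\alpha\min(n_0,n_1)/8}$ balances the $2^n$ entropy to hit the $(2e)^{8n/(\alpha n_i)}$ thresholds exactly; the degenerate case $a>n_0/2$ and $b>n_1/2$ (which can occur only for sufficiently large $\alpha$ that $(1+\alpha)\max(n_0,n_1)>2\min(n_0,n_1)$) requires expansion applied simultaneously to both complementary sets to recover the same loss factor.
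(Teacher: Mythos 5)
First, note that the paper itself gives no proof of this lemma; it is imported verbatim from \cite{JKP} (Lemmas 4.1 \& 4.2), so what you are really doing is reconstructing the JKP argument. Your overall route is the right one and most of it checks out: the identity $\lam^{n_i}Z^i(G)=\sum_{I\in\mathcal{I}_i}\lam^{|I|}$ is correct (the closure property of defect components makes the correspondence a bijection and the weight telescopes), the inclusion--exclusion decomposition is correct, the sparse-set bound $|I|\le n/(2+\alpha)$ and the resulting threshold $(2e)^{40/\alpha}$ are fine (modulo the same gloss as in the paper's Lemma~\ref{lem:4t4g4}: expansion applies to $\gamma\cap\Vpart{1-i}$ only when $|\gamma\cap\Vpart{1-i}|\le|\Vpart{1-i}|/2$, not because $|\gamma\cap\Vpart{i}|\le n_i/4$; this is rescued for large $n$ by $n_0,n_1\ge n/3$), and the entropy-versus-loss arithmetic matches the stated thresholds exactly.

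The genuine gap is in the bad-set case: the displayed step $\lam^{n_0-\alpha b}\le\lam^{n_0-\alpha n_1/4}$ silently assumes $b\ge n_1/4$, and no such lower bound on $b=|I\cap\Vpart{1}|$ follows from $I\notin\mathcal{I}_0\cup\mathcal{I}_1$. From $I\notin\mathcal{I}_1$ you only get the \emph{upper} bound $b<3n_1/4$; from $I\notin\mathcal{I}_0$ the large component of $D_0(I)$ only forces $|N_G(I\cap\Vpart{1})|>n_0/4$, which via the maximum degree gives $b>n_0/(4\Delta)$ --- a bound that can be far below $n_1/4$ and that would pollute the threshold with a factor of $\Delta$ if used directly. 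Fortunately your final claim (loss at least $\alpha\min(n_0,n_1)/8$ relative to $\max(n_0,n_1)$) is true and recoverable inside your framework, but the case split must be on the size of $b$ rather than on badness alone: assuming WLOG $n_0\ge n_1$, if $n_1/8\le b\le n_1/2$ then expansion gives $|I|\le n_0-\alpha b\le n_0-\alpha n_1/8$; if $b<n_1/8$ then the necessary condition $a<3n_0/4$ alone gives $|I|<3n_0/4+n_1/8\le n_0-\alpha n_1/8$, using $(1+\alpha)n_1\le 2n_0$ since $\alpha<1$; the case $b>n_1/2$ is handled symmetrically via $a$ (and the subcase $a>n_0/2$, $b>n_1/2$ is ruled out, as you anticipate, by applying expansion to a half-sized subset of $I\cap\Vpart{0}$). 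So the defect is a repairable misstep in one inequality, not a wrong approach --- but as written that inequality is false, and the proof needs the corrected case analysis to reach the stated $(2e)^{8n/(\alpha n_i)}$ thresholds.
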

\begin{lemma}\label{lem:3tt5b4666}
Fix $\alpha\in (0,1)$ and $\Delta \geq 3$.
There exists a constant $C > 0 $ such that for $\lambda > 
 (6C\Delta)^{3 + 6/\alpha} $ the following holds for all $n$-vertex bipartite $\alpha$-expander graphs $G=(V^0,V^1,E)$ of maximum degree at most $\Delta$. 
For all $\epsilon\in (4e^{-n},1)$ and $M=\frac{3(2+\alpha)\log(4n/\eps)}{2\alpha\log \lambda}$,  the number  \[\hat{Z}=\lam^{n_0} Z^0_M(G) + \lam^{n_1} Z^1_M(G)\] is an $\epsilon$-approximation of the hard-core partition function $Z_{G,\lambda}$, where $n_i = |\Vpart{i}|$ for $i\in\{0,1\}$.
\end{lemma}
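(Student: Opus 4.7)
The plan is to bound the error via the triangle inequality in two stages: first pass from the hard-core partition function $Z_{G,\lambda}$ to the untruncated sum $\lambda^{n_0} Z^0(G) + \lambda^{n_1} Z^1(G)$ using Lemma~\ref{lemHCApprox}, and then pass from this sum to $\hat{Z}$ by truncating each $Z^i(G)$ via Lemma~\ref{lem:hctruncate-polymer}.

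To invoke Lemma~\ref{lemHCApprox}, I need $\lambda > \max\{(2e)^{8n/(\alpha n_0)}, (2e)^{8n/(\alpha n_1)}, (2e)^{40/\alpha}\}$. The first step is to show $n_0, n_1 = \Omega(n)$, which follows from $\alpha$-expansion. Assuming without loss of generality that $n_0 \leq n_1$ and applying expansion to any $S \subseteq V^1$ with $|S| = \lfloor n_1/2 \rfloor$ yields $n_0 \geq |N_G(S)| \geq (1+\alpha)|S|$, so a short calculation gives $n_0 \geq (1+\alpha)(n-1)/(3+\alpha)$, whence $n_0, n_1 \geq n/5$ for $n$ large. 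Thus $8n/(\alpha n_i) \leq 40/\alpha$ for $i \in \{0,1\}$, and all three threshold conditions collapse to $\lambda > (2e)^{40/\alpha}$, which is implied by $\lambda > (6C\Delta)^{3+6/\alpha}$ for $C$ a sufficiently large absolute constant. Lemma~\ref{lemHCApprox} then gives
\[
e^{-2e^{-n}} Z_{G,\lambda} \leq \lambda^{n_0} Z^0(G) + \lambda^{n_1} Z^1(G) \leq e^{2e^{-n}} Z_{G,\lambda}.
\]

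For the truncation step, I would apply Lemma~\ref{lem:hctruncate-polymer} to each of the models $(\mathcal{C}^i(\cdot), w^i, \mathcal{G})$ with error parameter $\epsilon/2$. By Lemma~\ref{lem:4t4g4} the polymer sampling condition~\eqref{eqSampleCondition} holds with $\tau = \frac{\alpha}{2+\alpha} \log \lambda$, and since this model has $q=2$ spins the required lower bound $\tau \geq 5 + 3 \log \Delta$ follows from $\lambda > (6C\Delta)^{3+6/\alpha}$ once $C$ is enlarged if necessary (a direct manipulation of the exponents). The corresponding truncation threshold in Lemma~\ref{lem:hctruncate-polymer} is $k = \frac{3 \log(4n/\epsilon)}{2\tau} = \frac{3(2+\alpha) \log(4n/\epsilon)}{2 \alpha \log \lambda} = M$, matching our choice. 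The lemma therefore yields $Z^i_M(G) \leq Z^i(G) \leq e^{\epsilon/2} Z^i_M(G)$ for $i\in\{0,1\}$. Combining this with the previous display,
\[
e^{-(\epsilon/2 + 2e^{-n})} Z_{G,\lambda} \leq \hat{Z} \leq e^{2e^{-n}} Z_{G,\lambda},
\]
and the hypothesis $\epsilon > 4e^{-n}$ gives $2e^{-n} < \epsilon/2$, so both deviations are bounded by $\epsilon$ in absolute value, proving that $\hat{Z}$ is an $\epsilon$-approximation of $Z_{G,\lambda}$.

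The only real obstacle is constant chasing: one must verify that the single assumption $\lambda > (6C\Delta)^{3+6/\alpha}$, with an appropriate choice of absolute $C$, simultaneously implies $\lambda > (2e)^{40/\alpha}$ (needed by Lemma~\ref{lemHCApprox}) and $\tau = \frac{\alpha}{2+\alpha} \log \lambda \geq 5 + 3\log \Delta$ (needed by Lemma~\ref{lem:hctruncate-polymer}). Beyond this bookkeeping, and the expansion-based lower bound $n_i = \Omega(n)$, the argument is a direct concatenation of the two prior lemmas.
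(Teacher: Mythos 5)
Your proposal is correct and follows essentially the same route as the paper's proof: approximate $Z_{G,\lambda}$ by $\lambda^{n_0}Z^0(G)+\lambda^{n_1}Z^1(G)$ via Lemma~\ref{lemHCApprox}, then truncate each $Z^i(G)$ to $Z^i_M(G)$ via Lemmas~\ref{lem:4t4g4} and~\ref{lem:hctruncate-polymer}, splitting the error budget between the two steps and absorbing the thresholds into the constant $C$. The only difference is cosmetic: you derive the balance $n_i=\Omega(n)$ directly from the expansion condition, whereas the paper quotes the (slightly sharper) bound $n/n_i\leq 3$ from~\cite{JKP}.
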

\begin{proof}
Let $n_i = |\Vpart{i}|$ for $i\in \{0,1\}$ and observe that $n/n_0, n/n_1\leq 3$ (using that $G$ is an $\alpha$-expander for $\alpha\in (0,1)$, see \cite{JKP} for details). Therefore, by taking $C$ large enough, we have  that for all $\lambda > 
 (6C\Delta)^{3 + 6/\alpha} $ both Lemmas~\ref{lem:4t4g4} and~\ref{lemHCApprox} apply. Let $\epsilon\in (4e^{-n},1)$. 

By Lemma~\ref{lemHCApprox}, we have that $\lam^{n_0} Z^0(G) + \lam^{n_1} Z^1(G)$ is an $(\epsilon/2)$-approximation to $Z_{G,\lambda}$. By Lemma~\ref{lem:4t4g4}, we have that, for $i\in\{0,1\}$, $(\mathcal{C}^i(\cdot),w^i,\mathcal{G})$ satisfies the polymer sampling condition with $\tau=\frac{\alpha}{2+\alpha}\log \lambda$ and hence so does the truncated polymer model $(\mathcal{C}^i_M(\cdot),w^i,\mathcal{G})$.
(Here, as usual, we take
$\mathcal{G}$ to be the class of bipartite $\alpha$-expanders with maximum degree at most~$\Delta$.)
 It follows by Lemma~\ref{lem:hctruncate-polymer} that, for $M=\frac{3(2+\alpha)\log(4n/\eps)}{2\alpha\log \lambda}$,  $ Z^i_M(G)$ is an  $(\epsilon/2)$-approximation to $Z^i(G)$. Therefore, $\hat{Z}$ is an $\epsilon$-approximation to $Z_{G,\lambda}$.
\end{proof}

\bibliographystyle{plain}
\bibliography{\jobname}

\end{document}